\documentclass[letterpaper,11pt]{article}
\usepackage[utf8]{inputenc}

\pdfoutput=1

\usepackage{amsmath, amsthm, amssymb, thm-restate}
\usepackage{algorithmicx}
\usepackage[table,xcdraw]{xcolor}
\usepackage[ruled,vlined,linesnumbered]{algorithm2e}
\usepackage{setspace}
\usepackage{mathtools}
\usepackage[numbers]{natbib}
\usepackage{comment} 
\usepackage{tcolorbox} 
\usepackage{xfrac}
\usepackage{hyperref}
\usepackage{multirow}
\usepackage{caption}
\usepackage{bm}
\usepackage{newfloat}
\usepackage{enumitem}
\usepackage{dblfloatfix} 
\usepackage{wrapfig}
\usepackage{csquotes}

\usepackage{fancyhdr}

\parskip=5pt

\usepackage[margin=1in]{geometry}

\allowdisplaybreaks


\definecolor{mygreen}{RGB}{10,150,110}
\definecolor{myred}{RGB}{150,10,20}

\hypersetup{
     colorlinks=true,
     citecolor= mygreen,
     linkcolor= myred
}

\renewcommand{\epsilon}{\varepsilon}

\DeclareMathOperator{\E}{\ensuremath{\normalfont \textbf{E}}}

\newcommand{\hiddencomment}[1]{}

\newcommand{\GMM}[0]{\ensuremath{\textup{GMM}}}

\newcommand{\ceil}[1]{{\left\lceil{#1}\right\rceil}}
\newcommand{\floor}[1]{{\left\lfloor{#1}\right\rfloor}}
\newcommand{\prob}[1]{\Pr\paren{#1}}
\newcommand{\card}[1]{\left\lvert#1\right\rvert}
\newcommand{\paren}[1]{\left( #1 \right)}
\newcommand{\bracket}[1]{\left[ #1 \right]}
\newcommand{\expect}[1]{\E\bracket{#1}}
\DeclareMathOperator*{\Exp}{\ensuremath{{\mathbb{E}}}}
\DeclareMathOperator*{\Prob}{\ensuremath{\textnormal{Pr}}}
\renewcommand{\Pr}{\Prob}


\DeclareMathOperator{\poly}{poly}

\usepackage[noabbrev,nameinlink]{cleveref}
\crefname{lemma}{Lemma}{Lemmas}
\crefname{theorem}{Theorem}{Theorems}
\crefname{property}{Property}{Properties}
\crefname{claim}{Claim}{Claims}
\crefname{result}{Result}{Results}
\crefname{definition}{Definition}{Definitions}
\crefname{observation}{Observation}{Observations}
\crefname{proposition}{Proposition}{Propositions}
\crefname{assumption}{Assumption}{Assumptions}
\crefname{line}{Line}{Lines}
\crefname{figure}{Figure}{Figures}
\creflabelformat{property}{(#1)#2#3}
\crefname{equation}{}{}
\crefname{section}{Section}{Sections}
\crefname{appendix}{Appendix}{Appendices}
\crefname{algCounter}{Algorithm}{Algorithms}
\Crefname{algCounter}{Algorithm}{Algorithms}

\newtheorem{theorem}{Theorem}

\newtheorem{lemma}{Lemma}[section]
\newtheorem{proposition}[lemma]{Proposition}

\newtheorem{definition}[lemma]{Definition}
\newtheorem{claim}[lemma]{Claim}

\newtheorem{remark}{Remark}
\newtheorem*{remark*}{Remark}

\definecolor{mylightgray}{RGB}{230,230,230}


\algnewcommand{\IIf}[2]{\textbf{if} #1 \textbf{then} #2}
\algnewcommand{\EndIIf}{\unskip\ \algorithmicend\ \algorithmicif}

\newenvironment{graytbox}{
\par\addvspace{0.1cm}
\begin{tcolorbox}[width=\textwidth,
                  boxsep=5pt,
                  left=1pt,
                  right=1pt,
                  top=2pt,
                  bottom=2pt,
                  boxrule=0pt,
                  arc=0pt,
                  colback=mylightgray,
                  colframe=black,
                  ]
}{
\end{tcolorbox}
}

\newenvironment{whitetbox}{
\par\addvspace{0.1cm}
\begin{tcolorbox}[width=\textwidth,
                  boxsep=5pt,
                  left=1pt,
                  right=1pt,
                  top=2pt,
                  bottom=2pt,
                  boxrule=1pt,
                  arc=0pt,
                  colframe=black,
                  colback=white
                  ]
}{
\end{tcolorbox}
}

\newcounter{algCounter}

\makeatletter
\renewcommand{\paragraph}{%
  \@startsection{paragraph}{4}%
  {\z@}{10pt}{-1em}%
  {\normalfont\normalsize\bfseries}%
}
\makeatother

\makeatletter
\patchcmd{\@algocf@start}
  {-1.5em}
  {0pt}
  {}{}
\makeatother

\title{Fully Dynamic Matching:\\ $(2-\sqrt{2})$-Approximation in Polylog Update Time}

\author{
Amir Azarmehr\\{\em Northeastern University} \and 
Soheil Behnezhad \\{\em Northeastern University} \and
Mohammad Roghani \\{\em Stanford University}
}

\date{}


\begin{document}

\maketitle

\thispagestyle{empty}
\begin{abstract}

    We study maximum matchings in fully dynamic graphs, which are graphs that undergo both edge insertions and deletions. Our focus is on algorithms that estimate the size of maximum matching after each update while spending a small time.

    \smallskip\smallskip
    An important question studied extensively is the best approximation achievable via algorithms that only spend $\poly(\log n)$ time per update, where $n$ is the number of vertices. The current best bound is a $(1/2+\epsilon_0)$-approximation for a small constant $\epsilon_0 > 0$, due to recent works of Behnezhad [SODA'23] ($\epsilon_0 \sim 0.001$) and Bhattacharya, Kiss, Saranurak, Wajc [SODA'23] ($\epsilon_0 \sim 0.006$) who broke the long-standing 1/2-approximation barrier. These works also showed that for any fixed $\epsilon > 0$, the approximation can be further improved to $(2-\sqrt{2}-\epsilon) \sim .585$ for bipartite graphs, leaving a huge gap between general and bipartite graphs.

    \smallskip\smallskip
    In this work, we close this gap. We show that for any fixed $\epsilon > 0$, a $(2-\sqrt{2}-\epsilon)$ approximation can be maintained in $\poly(\log n)$ time per update \emph{even in general graphs}. Our techniques also lead to the same approximation for general graphs in two passes of the semi-streaming setting, removing a similar gap in that setting.
\end{abstract}

{
\clearpage
\hypersetup{hidelinks}
\vspace{1cm}
\renewcommand{\baselinestretch}{0.1}
\setcounter{tocdepth}{2}
\thispagestyle{empty}
\clearpage
}

\setcounter{page}{1}
\section{Introduction}

We study the maximum matching problem in the fully dynamic setting.
Given a graph $G$ which undergoes both edge insertion and edge deletion updates,
the goal is to maintain a large matching while spending 
a small time per update. Denoting the number of vertices by $n$, the holy grail in dynamic graphs is to achieve algorithms with $\poly(\log n)$ update-time, as this would be polynomial in the size of each update (which can be represented with $\Theta(\log n)$ bits). Unfortunately, known conditional hardness results rule out any $O(n^{1 - \epsilon})$ update-time algorithm for maintaining an \emph{exact} maximum matching \cite{abboud2014popular, Dahlgaard16, henzinger2015unifying}. As such, much of the focus in the literature has been on \emph{approximate} maximum matchings \cite{arar2017dynamic, behnezhad2022new, behnezhad2020fully, bernstein2015fully, bernstein2016faster, bernstein2021framework, bhattacharya2021deterministic, bhattacharya2016new, bhattacharya2017fully, bhattacharya2018deterministic, charikar2018fully, grandoni2022maintaining, gupta2013fully, neiman2015simple, onak2010maintaining, Roghani2022beating,Behnezhad23,BKSW2023}.

For over a decade, we have had algorithms maintaining a greedy maximal matching, and thus a 1/2-approximation of maximum matching, in $\poly(\log n)$ time per update \cite{BaswanaGS-SJC18,Solomon-FOCS16,behnezhadDerakhshan19}. At the expense of using a larger polynomial in $n$ update time, it is known that the approximation can be improved using various {\em matching sparsifiers} developed in the literature \cite{GuptaPeng13,bernstein2016faster,behnezhad2022new}. However, the problem of maintaining a better-than-1/2-approximation in $\poly(\log n)$ time had remained open until last year where for a small  $\epsilon_0 > 0$, the concurrent works of \citet*{Behnezhad23} ($\epsilon_0 \sim 0.001$) and \citet*{BKSW2023} ($\epsilon_0 \sim 0.006$) achieved a $(1/2+\epsilon_0)$-approximation provided that the goal is to maintain just the size (and not the edge-set) of the matching. This state of affairs leaves two major open problems:
\begin{itemize}[itemsep=0pt,topsep=5pt]
    \item \emph{Can we also maintain the edges of a $1/2+\Omega(1)$ approximate matching in $\poly(\log n)$ time?}
    \item \emph{What is the best approximation of maximum matching size achievable in $\poly(\log n)$ time?}
\end{itemize}
Our focus in this work is on the latter question. 

\paragraph{A gap between bipartite and general graphs:} The algorithms of \cite{Behnezhad23,BKSW2023} have two phases. In the first phase, they maintain a maximal matching explicitly using the fast algorithms of \cite{BaswanaGS-SJC18,Solomon-FOCS16,behnezhadDerakhshan19}. In the second phase, they augment this maximal matching by employing the sublinear-time matching size estimator of \cite{Behnezhad21}. Although this leads to only slightly better than 1/2-approximation in general graphs, it is shown in \cite{Behnezhad23,BKSW2023} that it leads to a much better (almost) $(2-\sqrt{2} \sim 0.585)$-approximation if the input graph is bipartite. 

Such two-phase algorithms have also long been studied in the context of two-pass streaming algorithms \cite{KMM2012,KaleT17,EsfandiariHM16,Konrad18,KonradN-APPROX21} for which a similar gap between general and bipartite graphs has persisted. In particular, the state-of-the-art two-pass semi-streaming algorithm for bipartite graphs, by \citet*{Konrad18} from 2018, achieves the same (almost) $(2-\sqrt{2} \sim 0.585)$-approximation. However, despite attempts \cite{FeldmanS22,KaleT17} the best approximation for general graphs is 0.538 \cite{FeldmanS22}.

\paragraph{Our contribution:} In this paper, we close the aforementioned gap between bipartite and general graphs in both models. For dynamic graphs, we prove that the approximation can be improved to (almost) $2-\sqrt{2} \sim 0.585$, matching what was known for bipartite graphs and significantly improving the previous $0.506$ and $0.501$-approximations of \cite{BKSW2023,Behnezhad23} for general graphs.

\begin{graytbox}
\label{thm:dynamic-final}
\begin{restatable}[Formalized as \cref{thm:final-theorem-dynamic-detail}]{theorem}{mainthm}\label{thm:dynamic-final}
    For any fixed $\epsilon > 0$, there is an algorithm that maintains a $(2 - \sqrt{2} - \epsilon) \sim 0.585$-approximation of the size of the maximum matching in $\poly(\log n)$ worst-case update time even against adaptive adversaries.
\end{restatable}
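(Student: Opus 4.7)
The plan is to lift the two-phase template of \cite{Behnezhad23, BKSW2023} from bipartite graphs to general graphs. In the first phase, we explicitly maintain a maximal matching $M$ in $\poly(\log n)$ worst-case update time against an adaptive adversary using any of the known matching algorithms \cite{Solomon-FOCS16, behnezhadDerakhshan19, BaswanaGS-SJC18}. In the second phase, we treat $M$ as an oracle and estimate the size of a maximum matching in the ``augmentation subgraph'' $H$ consisting of $M$ together with every edge incident to an $M$-unmatched vertex. Every edge of $H$ either lies in $M$ or touches a free vertex, so $H$ captures exactly the structures needed for length-$3$ augmentations of $M$, and a $(1\pm\varepsilon)$-approximation to the maximum matching size of $H$ will yield the final estimate.

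The first and main step is a purely combinatorial claim: for any maximal matching $M$ in a general graph $G$ with maximum matching $\mathrm{OPT}(G)$, the augmentation subgraph $H$ admits a matching of size at least $(2-\sqrt{2}-\varepsilon)\cdot |\mathrm{OPT}(G)|$. For bipartite graphs this is the content of the analysis behind \cite{Konrad18}, which charges edges of $\mathrm{OPT}$ to matched edges of $M$ and counts vertex-disjoint length-$3$ augmenting paths. My plan in the general-graph case is to follow the same fractional charging, splitting $\mathrm{OPT}$-edges by how many endpoints they share with $M$, but to absorb the new complication---odd alternating cycles through $M$-edges that can defeat the bipartite-style disjoint-augmentation argument---by using maximality of $M$ to force only a negligible $\varepsilon$-fraction of the $\mathrm{OPT}$ mass to be routed through such odd obstructions, possibly via a local blossom-shrinking or random orientation argument. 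I expect this combinatorial extension to be the main obstacle: all prior attempts for general graphs in the analogous two-pass streaming model (e.g.\ \cite{FeldmanS22}) have been stuck well below $0.585$ precisely because of these odd structures, so new ideas for handling them without loss beyond $\varepsilon$ are the heart of the argument.

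The second step is algorithmic: we estimate the maximum matching size of $H$ in $\poly(\log n)$ queries using the sublinear-time estimator of \cite{Behnezhad21}. Each query into the adjacency structure of $H$ is simulated by one oracle query into $M$ plus $O(1)$ adjacency queries into $G$, each answerable in $\poly(\log n)$ time; the total estimator cost is $\poly(1/\varepsilon, \log n)$. Finally, to handle adaptive adversaries we apply the epoch trick used in \cite{Behnezhad23}: freeze the estimator's random bits for windows of $\poly(\log n)$ updates, recompute at window boundaries, and bound the drift of the matching size during a window by the number of updates in it. This yields worst-case $\poly(\log n)$ update time with the claimed $(2-\sqrt{2}-\varepsilon)$ approximation, reducing the entire theorem to the structural lemma identified above.
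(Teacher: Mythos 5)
Your plan is missing the paper's central algorithmic ingredient, and the omission breaks the whole argument. You propose to estimate $\mu(H)$ directly, where $H$ is $M$ together with all edges touching a free vertex, using the sublinear estimator of \cite{Behnezhad21}. But that estimator only computes the size of a \emph{random greedy maximal matching} of whatever graph it is pointed at, which is merely a $1/2$-approximation of $\mu(H)$; applying it to $H$ would yield an estimate of roughly $\tfrac{1}{2}(2-\sqrt{2})\mu(G)\approx 0.29\,\mu(G)$, which is worse than the trivial bound $|M|\geq \tfrac{1}{2}\mu(G)$. No $\widetilde{O}(n)$-time $(1\pm\epsilon)$-approximation of matching size is known for graphs of unbounded degree, so step two of your plan is infeasible as stated. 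The paper's fix is exactly the piece you dropped: it first finds a maximal \emph{$b$-matching} $B$ in $H$ (capacities $k$ on $V(M)$ and $\lceil kb\rceil$ on $\overline{V(M)}$ with $b=1+\sqrt{2}$), so that $G[M\cup B]$ has \emph{constant} maximum degree, and then estimates $\mu(G[M\cup B])$ to $(1-\epsilon)$-accuracy by locally exploring $B$-neighborhoods via the oracle of \cite{Behnezhad21} and simulating a LOCAL matching algorithm \cite{harris2020}. The related remark that each adjacency query into $H$ costs ``$O(1)$ adjacency queries into $G$'' is also wrong, since vertices of $H$ can have degree $\Theta(n)$.

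Correspondingly, the structural lemma you identify as the heart of the proof is not the one the paper needs. It is not enough to show that $\mu(H)\geq(2-\sqrt{2}-\epsilon)\mu(G)$; one must show the much more specific statement $\mu(G[M\cup B])\geq(2-\sqrt{2}-\epsilon)\mu(G)$ for the particular bounded-degree $b$-matching $B$ that the algorithm can actually access. This requires using the saturation structure of $B$ (every $M^*_1$ edge either lies in $B$ or has a capacity-saturated endpoint), and the general-graph difficulty is concretely about \emph{parallel edges} in $B$ arising from the reduction of $b$-matching to random greedy maximal matching on the blown-up graph $\tH$: a few $B$-edges can hog all the capacity of a matched endpoint, killing the desired augmenting structure. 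The paper deals with this by arguing, via the random-greedy process, that either an $M^*_1$ edge is picked or its endpoint is saturated by many \emph{distinct} edges with small multiplicity, and then verifying blossom inequalities on small vertex sets for a carefully built fractional matching (\cref{prp:general-fractional-matching}). Your suggested ``local blossom-shrinking or random orientation argument'' does not address this multiplicity issue and is not the route the paper takes.
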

\end{graytbox}

In the two-pass semi-streaming model, we show that the same (almost) $(2-\sqrt{2} \sim 0.585)$ approximation can be achieved for general graphs as well, matching what was known for bipartite graphs \cite{Konrad18} and significantly improving prior 0.531 and 0.538 approximations of \cite{KaleT17,FeldmanS22} for general graphs. We emphasize that our streaming algorithm does not just estimate the size of the maximum matching, but rather returns the edges of the matching as well. Additionally, unlike our dynamic algorithm, our streaming algorithm is deterministic.

\begin{graytbox}
\begin{restatable}{theorem}{twopassthm}\label{thm:two-pass-alg}
    For any fixed $\epsilon > 0$, there is a deterministic two-pass streaming algorithm that finds (the edges of) a $(2-\sqrt{2} - \epsilon) \sim 0.585$-approximate maximum matching using $O(n)$ space.
\end{restatable}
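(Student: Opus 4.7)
\textbf{Proof proposal for \cref{thm:two-pass-alg}.} The plan is to follow the standard two-pass augmentation framework. In the first pass, we deterministically compute a maximal matching $M$ by adding each arriving edge whose endpoints are both free; this uses $O(n)$ space and yields $|M| \geq \mu(G)/2$. We may assume $|M| < (2-\sqrt{2}-\epsilon)\mu(G)$, as otherwise we output $M$ and stop. The second pass is devoted to finding a large collection of vertex-disjoint $3$-augmenting paths with respect to $M$. While processing the stream, for each matched edge $(u,v)\in M$ we keep a bounded number of candidate free neighbors of $u$ and of $v$; at the end of the pass, we assemble a near-maximum system of vertex-disjoint length-$3$ augmenting paths via an offline subroutine on the auxiliary bipartite graph whose left side is $M$, whose right side is the set of free vertices, and where $(u,v)\in M$ is connected to a free $u'$ if $uu'\in E$. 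Applying these paths to $M$ produces the final matching, which is returned.

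The heart of the argument is a structural lemma: if $M$ is any matching of $G$ with $|M|=\alpha\,\mu(G)$ and $\alpha<2-\sqrt{2}-\epsilon$, then $G$ contains $\Omega_\epsilon(\mu(G))$ vertex-disjoint $M$-augmenting paths of length exactly $3$. For bipartite graphs this is Konrad's lemma: revealing an optimal matching $M^*$, the symmetric difference $M\oplus M^*$ decomposes into paths and even cycles, and a splitting argument on the long alternating paths produces the required family. The plan is to derive the analogous bound for general graphs by handling the odd cycles of $M\oplus M^*$ separately. The key observation is that an odd cycle of length $2k+1$ in $M\oplus M^*$ contributes $k$ edges to $M$ and $k+1$ to $M^*$, so the deficit $(1-\alpha)\mu(G)$ of $M$ relative to $M^*$ must be concentrated in the path components, where a Konrad-style argument applies. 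Combining the two contributions and carefully balancing the threshold that separates long from short alternating paths should recover the $(2-\sqrt{2})$ bound in the general setting.

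The main obstacle I anticipate is this structural lemma in general graphs: ruling out a pathological interaction between many odd blossom-like cycles in $M\oplus M^*$ and tuning the threshold (analogous to $\alpha^*=2-\sqrt{2}$ in Konrad's proof) so that both the short-augmentable and long-augmentable regimes yield matchings of size at least $(2-\sqrt{2}-\epsilon)\mu(G)$. I expect to reuse (or mirror) the structural machinery developed for the dynamic algorithm of \cref{thm:dynamic-final}, which already resolves this obstacle in the sublinear-time/dynamic setting and should translate to a purely combinatorial existence statement usable here. A secondary algorithmic issue is finding a $(1-\epsilon)$-approximate packing of vertex-disjoint $3$-augmenting paths in $O(n)$ space during a single pass; I would handle this by storing $O(1/\epsilon)$ candidate free neighbors per matched edge and running an offline packing routine after the pass, absorbing the loss into $\epsilon$.
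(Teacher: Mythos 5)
Your algorithm follows the same two-phase framework as the paper (maximal matching in pass one, bounded augmentation structure in pass two), but your proposed analysis takes a genuinely different route and contains a gap that hits exactly the hard part of the problem.

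First, a factual error that derails your sketch: the symmetric difference $M \oplus M^*$ has \emph{no odd cycles}, in any graph. Every vertex has degree at most one in each of $M$ and $M^*$, so every cycle of $M \oplus M^*$ alternates between $M$-edges and $M^*$-edges and is therefore even. Your proposed resolution --- ``handling the odd cycles of $M \oplus M^*$ separately'' and arguing they concentrate the deficit into path components --- is addressing a structure that does not exist, and it also misstates their edge counts (a $(2k{+}1)$-cycle cannot contribute $k$ edges to one matching and $k{+}1$ to another, since the two matchings would then share a pair of adjacent edges). So the existence of many length-$3$ augmenting paths follows from the bipartite argument essentially verbatim; the decomposition of $M\oplus M^*$ is not where general graphs differ from bipartite ones.

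The actual obstacle is elsewhere, and your proposal does not engage with it. In the second pass you store bounded candidate sets of free neighbors per matched endpoint. In a bipartite graph, a free vertex $w$ cannot be adjacent to \emph{both} endpoints of a matched edge $(u,v)$, so any two candidates on opposite sides of $(u,v)$ immediately assemble into a length-$3$ augmenting path. In a general graph, $w$ can be adjacent to both $u$ and $v$, and the stored candidates can form triangles (and larger odd structures) rather than augmenting paths. This is exactly the configuration illustrated by the paper's Figure~1 and the reason the paper cannot reuse the bipartite analysis. Your ``offline packing routine'' must be shown to recover a large packing from a candidate set polluted by such blossoms, and the Konrad-style path/cycle decomposition of $M \oplus M^*$ does not give you that: it tells you augmenting paths exist in $G$, not that your stored, degree-capped candidate subgraph still contains a large vertex-disjoint collection of them.

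The paper closes this gap differently: it stores a maximal $b$-matching $B$ (capacities $k$ on $V(M)$, $\lceil kb\rceil$ on free vertices, no duplicate edges), outputs a maximum matching of $M\cup B$, and proves the $(2-\sqrt 2 -\epsilon)$ bound by constructing a fractional matching on $M \cup B$ and verifying the relaxed blossom inequalities (\cref{prp:general-fractional-matching}) for all odd sets of size $O(1/\epsilon)$. The blossom inequalities are precisely the tool that discounts triangle-type configurations. You would need either to prove those inequalities for your candidate subgraph, or to find a direct combinatorial argument that odd structures cannot eat up a constant fraction of the candidate paths --- and your current sketch does neither.
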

\end{graytbox}

While much of the technicality of our work is on the dynamic algorithm of \cref{thm:dynamic-final}, both the streaming algorithm of \cref{thm:two-pass-alg} and its analysis are simple and clean.

\paragraph{Going beyond $(2-\sqrt{2})$-approximations:} The $(2-\sqrt{2})$-approximation turns out to be a barrier in several settings, even for bipartite graphs. For instance, a work of \citet{HuangEtal-SODA19} establishes that no online matching algorithm under edge-arrivals (even allowing preemptions) can surpass $(2-\sqrt{2})$-approximations. While this is a different model than the ones considered in this work, it is in fact closely related to the  streaming setting. See, in particular, the paper of \citet[Section~1]{Kapralov-SODA21} who points out that his techniques \enquote{can probably be extended} to the construction of \citet{HuangEtal-SODA19}, ruling out \emph{single-pass} semi-streaming algorithms achieving better than $(2-\sqrt{2})$-approximations. We also refer interested readers to the paper of \citet*{KonradN-APPROX21} which is more specifically focused on \emph{two-pass} streaming algorithms and includes a discussion on beating $(2-\sqrt{2})$-approximations. In particular, they show this bound is tight for a certain class of algorithms and argue that \enquote{new techniques are needed in order to improve on the $(2-\sqrt{2})$ approximation factor}. Given this current landscape, we believe it is an important open question for future research to either go beyond $(2-\sqrt{2})$-approximations in the fully dynamic model or the two-pass streaming model, or alternatively, prove its impossibility.

\paragraph{Paper organization:} In \cref{sec:techniques}, we present an overview of our techniques both in the dynamic algorithm and in the streaming algorithm. As the streaming algorithm turns out to be much simpler, we first present the proof of \cref{thm:two-pass-alg} in \cref{sec:two-pass} as a warm-up to our techniques. We then incorporate the needed new ingredients and prove \cref{thm:dynamic-final} in \cref{sec:dynamic}.

\section{Technical Overview}\label{sec:techniques}

In this section, we give a brief overview of the technical challenges in designing our algorithms. We start with one of the existing algorithms for bipartite graphs and show why this algorithm does not perform well when the input graph is non-bipartite. We then discuss the new ingredients that we incorporate into our algorithm to achieve the same approximation ratio for non-bipartite graphs.

Let us consider the algorithm of \citet*{BKSW2023} for bipartite graphs which achieves a $0.585$-approximation. Let $b_v$ be the capacity of vertex $v$. We define a $b$-matching to be a collection of edges of $E$ such that each vertex $v$ has at most $b_v$ incident edges in the collection. Their algorithm has two main building blocks: (1) maintaining a maximal matching $M$, (2) a maximal $b$-matching $B$, in the bipartite graph between vertices of $V(M)$ and $V \setminus V(M)$. We let $\overline{V(M)} = V \setminus V(M)$ and let $G[V(M), \overline{V(M)}]$ be the induced bipartite graph between $V(M)$ and $\overline{V(M)}$. Since $B$ is between matched and unmatched vertices by $M$, it becomes challenging to dynamically maintain $B$. This difficulty arises due to the fact that updates in $M$ result in vertex updates in $G[V(M), \overline{V(M)}]$, and currently, there is no known algorithm capable of maintaining a matching under this type of update. To overcome this challenge, this algorithm uses the sublinear matching algorithm of \citet{Behnezhad21} to estimate the size of $B$. Then, it is possible to estimate the size of the maximum matching of $G$ as a function of only the sizes of $M$ and $B$. For a specific value of $b = 1 + \sqrt{2}$ and sufficiently large $k$, when the capacity of vertices in $V(M)$ is $k$ and the capacity of vertices in $\overline{V(M)}$ is $\ceil{kb}$ in the maximal $b$-matching, the algorithm achieves the approximation guarantee of $2 - \sqrt{2}$.

\paragraph{First challenge: parallel edges in the $b$-matching.} It is important to note that the algorithm of \cite{BKSW2023} allows the $b$-matching to include the same edge multiple times. Consider the graph depicted in \Cref{fig:figure1}. In this particular instance, our maximal matching $M$ contains only the edge $(u_1, u_2)$. Furthermore, the set $B$ contains $k$ duplicates of each of the two edges connected to $u_4$. Consequently, neither $M$ nor $B$ contains the dashed green edge, which implies that the approximation guarantee cannot be better than 0.5. In this example, it is vital for the algorithm to include unique edges in $B$ instead of selecting an edge multiple times. 

\begin{figure}[h]
    \centering
    \includegraphics[scale=0.7]{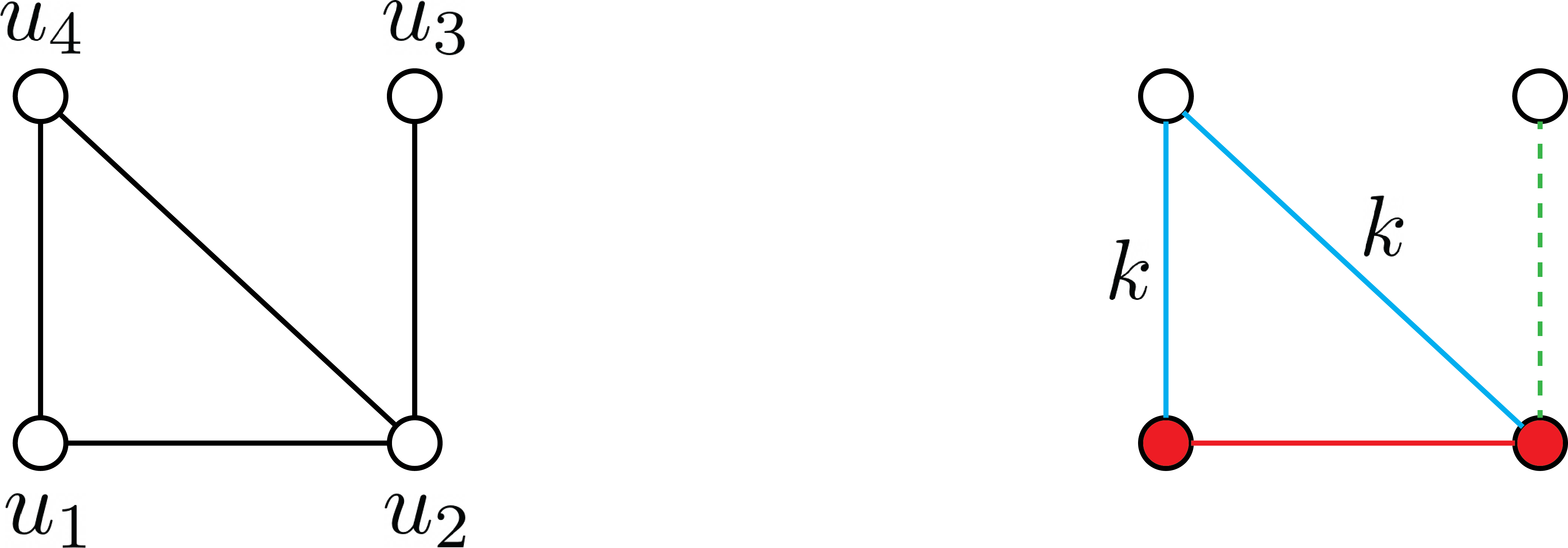}
    \caption{This figure shows a non-bipartite graph such that if the algorithm allows multi-edges in the $b$-matching, it fails to achieve a large approximation guarantee. In particular. The left figure is the input graph. The right figure is a possible output of the algorithm. Here, the red edge denotes the maximal matching $M$, the red vertices denote $V(M)$, the blue edges show the maximal $b$-matching $B$, and the dashed green edge is not in $M \cup B$. Moreover, the number of copies of each edge in $B$ is written next to it.}
    \label{fig:figure1}
\end{figure}

Since we store the $b$-matching explicitly in the streaming setting, it is easier to avoid including parallel edges. In \Cref{sec:two-pass}, we provide a novel analysis of this algorithm via fractional matchings which relies on blossom inequalities. We prove that if we restrict the $b$-matching to pick each edge at most once, then the algorithm indeed obtains a $(2-\sqrt{2})$-approximation even for general graphs. More specifically, we construct a fractional matching $x$ such that, except for the edges of two matchings, the fractional value of all edges is exceedingly small, i.e.\ $O(\epsilon^3)$. In the first matching, the fractional value of each edge is precisely $1-1/b$, while in the second matching, this value is at most $1/b$. This characterization of the fractional matching, as established in the analysis, helps to show that the blossom inequality holds for vertex sets of size at most $O(1/\epsilon)$. Consequently, we can utilize \Cref{prp:general-fractional-matching} to prove that $M \cup B$ contains an integral matching almost as large as the fractional matching.

However for the dynamic algorithm---specifically the part where the sublinear time algorithm of \cite{Behnezhad21} is used to estimate the size of the $b$-matching---it is important to \emph{allow} parallel edges. To see why this is true, note that the reduction from maximal $b$-matching to maximal matching relies on copying vertices with respect to their capacity. With this approach, one edge might be included in the maximal $b$-matching multiple times. To tackle this obstacle, we use the power of random greedy maximal matching when we find the maximal $b$-matching. Let $M^*_1$ be the maximum matching edges of $G$ that has exactly one matched endpoint in $M$. We show that for each edge $e \in M^*_1$, either it is included in $B$, or its capacity is saturated with many distinct incident edges. It is not hard to see that the following process is equivalent to constructing random greedy maximal matching: in each round, select one edge among the remaining edges uniformly at random, include it in the maximal matching, and remove both endpoints of the edge from the graph. Now, consider an edge $e$ in the matching $M^*_1$. Let's examine the rounds in which one of the incident edges to $e$ is chosen. If among these rounds, in many of them, the number of incident edges to $e$ is small, then with high probability, at least one copy of $e$ is going to be included in $B$. On the other hand, if in most of the rounds, the number of incident edges to $e$ is large, we anticipate the copies to be distributed evenly. This is enough for us to show that $M \cup B$ contains a large matching via constructing a large fractional matching and exploiting blossom inequalities in the analysis.

\paragraph{Second challenge: estimating the output size.} The second challenge arises when attempting to provide an estimate for the size of the maximum matching. Consider a graph that is a path of length three and its middle edge is in $M$ and another graph that is a triangle in which one of its edges is in $M$. Note that $|B|$ is equal in both graphs (including the duplicate edges), i.e.\ in both graphs $|B|$ is equal to $2k$ since the capacity of vertices in $V(M)$ is $2k$ and we can use all their capacities. However, the maximum matching size differs: the first graph has a maximum matching size of 2, while the second graph has a maximum matching size of 1. Consequently, if the algorithm relies solely on the size of $B$ to make its estimation, it cannot output a value greater than 1. This results in an approximation guarantee of 0.5 for the first graph.

Nevertheless, subgraph $G[M \cup B]$ contains a large matching, i.e.\ in this example it contains all maximum matching edges of the original graph. Thus, all we need is to accurately estimate $\mu(G[M \cup B])$, i.e.\ to $(1-\epsilon)$-approximate $\mu(G[M \cup B])$, to obtain the estimation for the maximum matching of the original graph. On the other hand, we cannot afford to construct the whole maximal $b$-matching $B$ explicitly since we are using sublinear algorithms to build $B$. Instead, we can access to incident edges of a vertex $v$ by spending $\widetilde{O}(n)$ time using the sublinear algorithm of \cite{Behnezhad21}. Additionally, since the maximum degree of subgraph $G[M \cup B]$ is a constant, the total number of vertices in a close neighborhood of vertex $v$ is a constant. This characteristic allows us to use maximum matching algorithms in the LOCAL model to estimate $\mu(G[M \cup B])$. By combining the previous two ideas, we can develop an algorithm that estimates $\mu(G[M \cup B])$ within a factor of $(1-\epsilon)$ in $\widetilde{O}(n)$ time, which we can afford.
\section{Preliminaries}\label{sec:preliminaries}

\paragraph{Notation:} Throughout the paper, we use $G = (V, E)$ to denote the input graph. We let $n$ denote the number of vertices in $G$. We use $\mu(G)$ to denote the size of the maximum matching of $G$. For $U \subseteq V$, we let $G[U]$ be the induced subgraph of $G$ on vertices $U$. Also, for $U \subseteq V$ and $U' \subseteq V$ such that $U \cap U' = \emptyset$, we use $G[U,U']$ to denote the induced bipartite subgraph between $U$ and $U'$. For $E'\subseteq E$, we let $G[E']$ be the subgraph of $G$ that is induced by edges $E'$.

\subsection{Computational Models}

\paragraph{Streaming Algorithms:} In the streaming setting, we assume that the edges of the input graph $G$  appear one by one and in an arbitrary order. The goal is to have a single pass over the stream and return a large matching of $G$ using a small space. Our focus is particularly on $O(n\cdot \poly \log n)$ space algorithms, which is also known as the semi-streaming setting. In the two-pass streaming setting, the algorithm is allowed to have two passes over the stream while the space complexity is the same.

\paragraph{Fully Dynamic Algorithms:} In the fully dynamic model, we have a graph on a fixed set of $n$ vertices. Edges then can be inserted or deleted from the graph. We study dynamic algorithms for estimating the size of matching. That is, the algorithm has to return an estimate $\widetilde\mu$ for the size of the maximum matching of $G$ after each update. For $0 \leq \alpha \leq 1$, we say an algorithm achieves an $\alpha$-approximation if $\alpha \cdot \mu(G)\leq \widetilde\mu \leq \mu(G)$ after each update. The objective is to achieve a high approximation ratio while spending a small time per update. 

We say the adversary---which issues the updates---is {\em oblivious} if he does not change the sequence of updates based on the algorithm's previous output. On the other hand, an {\em adaptive} adversary may choose the sequence of updates adaptively based on the algorithm's outputs.

\paragraph{Local Algorithms:} In the distributed LOCAL model \cite{Linial92}, each vertex of the graph hosts a processor and two processors can exchange unlimited messages in each round if their corresponding nodes are neighbors in the graph. The goal in LOCAL algorithms is to compute a property (e.g. a matching) of the underlying communication network in few rounds. The output is often returned in a distributed manner --- for example, each node outputs the neighbor to which it is matched, if any. In our algorithms, we only use the following by now well-known property of LOCAL algorithms: the existence of an $r$-round LOCAL algorithm for a problem implies that the output of each vertex is only a function of its $r$-hop neighborhood.

\subsection{Probabilistic Tools}
\begin{proposition}[Markov inequality] \label{markov}
If $X$ is a non-negative random variable,
then for any $a > 0$, it holds that
\[
\prob{X \geq a} \leq \frac{\E[X]}{a}.
\]
\end{proposition}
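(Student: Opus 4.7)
The plan is to prove the inequality by a standard pointwise bound on $X$ combined with monotonicity of expectation. The key observation is that, because $X \geq 0$, the expectation $\E[X]$ can be lower bounded by restricting attention to the event $\{X \geq a\}$, on which $X$ is at least $a$.

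First, I would introduce the indicator random variable $\mathbf{1}[X \geq a]$, which takes value $1$ when $X \geq a$ and $0$ otherwise. Since $X$ is non-negative, I can write the pointwise inequality
\[
X \;\geq\; X \cdot \mathbf{1}[X \geq a] \;\geq\; a \cdot \mathbf{1}[X \geq a],
\]
where the first inequality follows from $\mathbf{1}[X \geq a] \in \{0,1\}$ together with $X \geq 0$, and the second follows from the definition of the indicator: on the event $\{X \geq a\}$ we have $X \geq a$, and on its complement both sides are zero.

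Next, I would take expectations on both sides and use linearity and monotonicity of $\E[\cdot]$ to obtain
\[
\E[X] \;\geq\; a \cdot \E\bigl[\mathbf{1}[X \geq a]\bigr] \;=\; a \cdot \Pr[X \geq a],
\]
where the equality uses the identity $\E[\mathbf{1}[A]] = \Pr[A]$ for any event $A$. Dividing both sides by $a > 0$ yields $\Pr[X \geq a] \leq \E[X]/a$, as required.

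There is essentially no technical obstacle here; the only subtle point is ensuring that the pointwise bound $X \geq a \cdot \mathbf{1}[X \geq a]$ is valid everywhere (including on $\{X < a\}$, where it holds because $X \geq 0$ and the right-hand side is zero), which is exactly where non-negativity of $X$ is used. The argument extends without modification to the case where $X$ is only assumed to be non-negative almost surely, and the division at the end is legitimate because the hypothesis $a > 0$ is given.
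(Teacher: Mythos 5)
Your proof is correct and is the standard textbook argument via the pointwise bound $X \geq a \cdot \mathbf{1}[X \geq a]$ followed by monotonicity of expectation. The paper states Markov's inequality as a well-known probabilistic tool without providing a proof, so there is nothing to compare against; your argument is exactly the canonical one that would be given.
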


\begin{proposition}[Chernoff bound] \label{chernoff}
Let $X_1, \ldots, X_n$ be independent random variables
taking values in $[0, 1]$.
Let $X = \sum X_i$ and let $\mu = \Exp\bracket{X}$.
Then, for any $0 < \delta \leq 1$ and $0 < a \leq \mu$, we have 
\[
\prob{X \geq (1 + \delta)\mu} \leq \exp\paren{-\frac{\delta^2 \mu}{3}} \qquad \text{and} \qquad \prob{X \geq \mu + a} \leq \exp\paren{-\frac{a^2}{3 \mu}}.
\]
\end{proposition}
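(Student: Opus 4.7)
The plan is to reduce both tail bounds to the standard moment-generating function argument. First, I would observe that the second inequality is simply a reparameterization of the first: setting $a = \delta\mu$ turns the range $0<\delta\leq 1$ into $0<a\leq\mu$, and $\delta^2\mu/3$ into $a^2/(3\mu)$. So it suffices to prove $\Pr\bracket{X \geq (1+\delta)\mu} \leq \exp(-\delta^2\mu/3)$.

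For the main bound I would apply the exponential Markov trick (Chernoff's method). For any $t>0$, \Cref{markov} applied to the non-negative random variable $e^{tX}$ gives
\[
\Pr\bracket{X \geq (1+\delta)\mu} = \Pr\bracket{e^{tX} \geq e^{t(1+\delta)\mu}} \leq \frac{\Exp\bracket{e^{tX}}}{e^{t(1+\delta)\mu}}.
\]
By independence of the $X_i$'s, $\Exp\bracket{e^{tX}} = \prod_i \Exp\bracket{e^{tX_i}}$. Since each $X_i \in [0,1]$, convexity of $x\mapsto e^{tx}$ on $[0,1]$ yields $e^{tX_i} \leq 1 + (e^t-1)X_i$, so $\Exp\bracket{e^{tX_i}} \leq 1 + (e^t-1)\Exp\bracket{X_i} \leq \exp\paren{(e^t-1)\Exp\bracket{X_i}}$ via the elementary bound $1+y\leq e^y$. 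Multiplying over $i$, we obtain $\Exp\bracket{e^{tX}} \leq \exp((e^t-1)\mu)$, hence
\[
\Pr\bracket{X \geq (1+\delta)\mu} \leq \exp\paren{\mu\paren{e^t - 1 - t(1+\delta)}}.
\]
Choosing $t = \ln(1+\delta)$ optimizes the exponent and produces the classical bound $\paren{e^\delta/(1+\delta)^{1+\delta}}^\mu$.

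The remaining and only mildly technical step is the one-variable inequality $(1+\delta)\ln(1+\delta) - \delta \geq \delta^2/3$ for $\delta \in (0,1]$, which rewrites the classical bound as the desired $\exp(-\delta^2\mu/3)$. I expect this calculus inequality to be the main obstacle in the sense of requiring care rather than new ideas: it is a routine textbook estimate verifiable by examining $f(\delta) = (1+\delta)\ln(1+\delta) - \delta - \delta^2/3$ and its derivatives. One checks $f(0)=f'(0)=0$ and that $f'(\delta) = \ln(1+\delta) - 2\delta/3$ stays non-negative throughout $[0,1]$: since $f''(\delta)=1/(1+\delta)-2/3$ changes sign only at $\delta=1/2$, $f'$ is unimodal on $[0,1]$, starts at $0$, reaches a positive maximum at $1/2$, and still satisfies $f'(1)=\ln 2 - 2/3 > 0$. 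Once this inequality is in place, both tail bounds follow from the reduction in the first paragraph.
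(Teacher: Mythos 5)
The paper states this Chernoff bound as a standard probabilistic tool without proof, so there is no internal argument to compare against. Your derivation is the standard one and is correct: the reduction $a=\delta\mu$ between the two forms is clean, the moment-generating-function (Chernoff) method with the convexity bound $e^{tX_i}\leq 1+(e^t-1)X_i$ and the choice $t=\ln(1+\delta)$ gives the classical $\bigl(e^\delta/(1+\delta)^{1+\delta}\bigr)^\mu$ bound, and your analysis of $f(\delta)=(1+\delta)\ln(1+\delta)-\delta-\delta^2/3$ (checking $f(0)=f'(0)=0$, that $f''$ changes sign once at $\delta=1/2$, and that $f'(1)=\ln 2-2/3>0$) correctly establishes $f\geq 0$ on $[0,1]$. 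Nothing is missing.
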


\subsection{Background on Matching Theory}

\begin{definition}
    Given a graph $G = (V, E)$,
    a $b$-matching with non-negative capacities $b : V \to \mathbb{Z}$, is a (multi)set $B$ of the edges in $E$,
    such that for any vertex $u \in V$, the number of edges in $B$ that are adjacent to $u$ is at most $b(u)$.
\end{definition}

\begin{definition}
    Given a graph $G = (V, E)$, a fractional matching of $G$
    is a set of weights $x : E \to [0,1]$ on the edges
    such that for any vertex $u$ the following condition is satisfied:
    \[
        \sum_{e \in \delta(u)} x_e \leq 1.
    \]
    For a vertex set $S \subseteq V$, we define 
    $x(S) = \sum_{e\in G[S]} x_e$.
\end{definition}

The following proposition is an application of the blossom inequality if we relax the constraints to only consider subsets of vertices with a size of at most $1/\epsilon$. We refer readers to \cite{azarmehr2023robust} for the proof and to section 25.2 of \cite{schrijver2003combinatorial} for a detailed discussion about blossom inequalities.

\begin{proposition}\label{prp:general-fractional-matching}
    Let $G$ be any graph,
    and let $x$ be a fractional matching on $G$
    such that for every vertex set $S \subseteq V$
    that $\card{S} < 1/\epsilon$,
    we have
    \[
        \sum_{e \in G[S]} x_e \leq \floor{\frac{\card{S}}{2}}.
    \]
    Then, it holds that $\mu(G) \geq (1 - \epsilon) \sum_e x_e$.
    We refer to the above inequality as the blossom inequality.
\end{proposition}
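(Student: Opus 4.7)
The plan is to show that the scaled fractional matching $y := (1-\epsilon) x$ lies inside the matching polytope of $G$, and then invoke Edmonds' theorem to conclude that $\mu(G) \geq \sum_e y_e = (1-\epsilon) \sum_e x_e$. Recall that the matching polytope is cut out by nonnegativity, the vertex degree constraints $\sum_{e \ni v} y_e \leq 1$, and the blossom inequalities $\sum_{e \in G[S]} y_e \leq \lfloor |S|/2 \rfloor$ for every odd $|S| \geq 3$.

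First, I would verify the easy constraints. Nonnegativity of $y$ is immediate from that of $x$. The degree constraints follow because $\sum_{e \ni v} y_e = (1-\epsilon)\sum_{e \ni v} x_e \leq (1-\epsilon) \leq 1$. So the only remaining task is to check blossom inequalities for every odd $S$ with $|S| \geq 3$.

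Next, split the blossom inequalities into two regimes. For odd $S$ with $|S| < 1/\epsilon$, the hypothesis of the proposition gives $\sum_{e \in G[S]} x_e \leq \lfloor |S|/2 \rfloor$ directly, and scaling by $(1-\epsilon) \leq 1$ preserves the bound for $y$. For odd $S$ with $|S| \geq 1/\epsilon$, the hypothesis is not available, so I would argue purely from the degree constraints of $x$. Summing the vertex constraints over $v \in S$ and using that each edge inside $G[S]$ is counted twice gives the loose bound $\sum_{e \in G[S]} x_e \leq |S|/2$. Hence
\[
\sum_{e \in G[S]} y_e \;\leq\; (1-\epsilon)\cdot \frac{|S|}{2} \;=\; \frac{|S|}{2} - \frac{\epsilon|S|}{2} \;\leq\; \frac{|S|-1}{2} \;=\; \left\lfloor \frac{|S|}{2} \right\rfloor,
\]
where the penultimate inequality uses $\epsilon|S| \geq 1$, which is exactly why the threshold $1/\epsilon$ is the right one.

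With all blossom inequalities verified, $y$ lies in the matching polytope, so by Edmonds' theorem $\mu(G) \geq \sum_e y_e = (1-\epsilon) \sum_e x_e$, as desired. There is no real obstacle here beyond being careful about the two regimes; the only subtlety worth highlighting is that the hypothesis of the proposition is stated for \emph{all} small subsets $S$ rather than only odd ones, which is slightly stronger than needed but harmless, and that the threshold $|S|<1/\epsilon$ is exactly calibrated so that the degree-based slack $\epsilon|S|/2$ absorbs the fractional gap $|S|/2 - \lfloor |S|/2\rfloor = 1/2$ for odd $S$.
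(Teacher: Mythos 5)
Your proof is correct, and it takes what is essentially the canonical route: scale $x$ down to $y=(1-\epsilon)x$, verify membership in Edmonds' matching polytope by splitting the blossom constraints at the threshold $\card{S}=1/\epsilon$, and invoke integrality of that polytope. The paper itself does not include a proof of this proposition (it cites \cite{azarmehr2023robust} and \cite{schrijver2003combinatorial}), but the argument in that reference is the same two-regime check: for small odd $S$ the hypothesis gives the blossom inequality directly, and for large odd $S$ the slack $\epsilon\card{S}/2 \geq 1/2$ gained from scaling absorbs the floor. Your computation $(1-\epsilon)\card{S}/2 \leq (\card{S}-1)/2$ exactly when $\card{S}\geq 1/\epsilon$ is the crux, and you state it cleanly. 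One minor point worth being explicit about (you implicitly have it right): the bound $\sum_{e\in G[S]} x_e \leq \card{S}/2$ used in the large regime comes from summing the vertex constraints of $x$ over $S$, under which edges inside $S$ are counted twice and edges leaving $S$ contribute nonnegatively, so the inequality holds without needing to discard those boundary terms carelessly. No gaps.
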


\paragraph{Random Greedy Maximal Matching:} Given an input graph $G = (V, E)$ and a permutation $\pi$ over the edges of $E$, a greedy maximal matching can be obtained by sequentially iterating over the edges in $E$ according to $\pi$ and adding each edge to the maximal matching if none of its adjacent edges have already been added. We let $\GMM(G, \pi)$ denote this maximal matching. If $\pi$ is chosen uniformly at random among all possible permutations over $E$, we call this maximal matching a random greedy maximal matching.

\section{Warm-up: The Two-Pass Streaming Algorithm}\label{sec:two-pass}

In this section, we introduce our two-pass streaming algorithm for general graphs. We adopt a well-established framework (see \cite{BKSW2023, KMM2012}) that has been commonly used in the literature for designing two-pass algorithms to find maximum matchings. The algorithm operates by first identifying a maximal matching in the initial pass, followed by obtaining a maximal $b$-matching in the subsequent pass. Specifically, during the first pass, we find a maximal matching denoted as $M$ within the graph $G$. We set $b = (1+\sqrt{2})$ and choose a large integer $k$ that we will specify later. Moving on to the second pass, we find a maximal $b$-matching $B$ in the bipartite graph $G[V(M), \overline{V(M)}]$ with a capacity of $k$ and $\ceil{kb}$ for the vertices in $V(M)$ and $\overline{V(M)}$, respectively. 
Crucially, we do not allow $B$ to contain multiple copies of an edge.
Finally, we output the maximum matching obtained from the union of $M$ and $B$.

Now let us discuss some of the key distinctions between our algorithm and the previous algorithms in this framework, which contribute to achieving an approximation ratio of $(2-\sqrt{2})$ for general graphs. The first notable difference lies in our approach to selecting the value of $k$. It is crucial to choose $k$ sufficiently large in our algorithm. This decision stems from the fact that if $k$ is too small, many edges in the maximal $b$-matching might not effectively contribute to augmenting the maximal matching $M$. To illustrate this, consider the scenario where $k = 1$ and $(u,v) \in M$. Now, for a vertex $w \in \overline{V(M)}$, it is possible that both edges $(u,w)$ and $(v,w)$ are included in our maximal $b$-matching. However, this inclusion of both edges does not lead to any length-three augmenting paths, which are necessary for expanding the matching. To overcome this issue, it becomes essential to let $k$ be sufficiently large. By doing so, we can avoid the problem mentioned above and ensure that the maximal $b$-matching includes edges that are truly beneficial for augmenting the maximal matching in the second pass.

In our proof, we construct a fractional matching and utilize \cref{prp:general-fractional-matching} to prove that $M \cup B$ has a large matching. It is worth noting that in a recent work by Bhattacharya, Kiss, Saranurak, and Wajc \cite{BKSW2023}, they also adopt the approach of selecting a sufficiently large value for $k$. However, a notable distinction arises in the second pass of their algorithm, specifically during the computation of the maximal $b$-matching. In their approach, they allow their algorithm to select an edge multiple times, whereas we do not. 
Allowing multiple copies of the same edge can lead to a situation where an edge belonging to the maximal matching has neighboring edges in the $b$-matching that are chosen multiple times, while certain other edges (i.e.\ edges of the optimal matching $M^*$) are not chosen at all. Consequently, this poses a challenge when attempting to construct a fractional matching that satisfies the blossom inequality for subsets of vertices with small sizes.

Our algorithm is formalized in \Cref{alg:two-pass}. In the rest of this section, we prove the approximation guarantee of \Cref{alg:two-pass}.

\paragraph{Notation:} Throughout this section, we let $G = (V, E)$ be the original graph. We use $M$ to show the maximal matching that our algorithm finds in the first pass of the stream. Let $V(M)$ be the endpoints of $M$ and $\overline{V(M)} = V \setminus V(M)$. Finally, let $M^*$ be an arbitrary maximum matching of $G$, $M^*_1 = M^* \cap (V(M) \times \overline{V(M)})$, and $M^*_2 = M^* \cap (V(M) \times V(M))$.

\begin{algorithm}
\caption{Two-pass Streaming Algorithm for General Graphs}
\label{alg:two-pass}

\textbf{Parameter:} let $b = 1 + \sqrt{2}$ and $k$ be an integer larger than $\frac{1}{b\epsilon^3}$.

\textbf{First Pass:} $M \gets$ maximal matching of $G$.  \algorithmiccomment{Finding maximal matching}

\textbf{Second Pass:} \algorithmiccomment{Finding $b$-matching}

Let $B = \emptyset$.
    
\For{$(u, v) \in G[V(M), \overline{V(M)}]$ where $u \in V(M)$}{
    \If{$\deg_B(u) < k$ and $\deg_B(v) < \ceil{kb}$}{
        $B \gets B \cup {(u, v)}$.
    }

}
    
\Return maximum matching of $M \cup B$.

\end{algorithm}

\twopassthm*

\paragraph{Proof outline:} To prove the theorem, we construct a fractional matching $x$ on $M \cup B$.
    In \cref{clm:two-pass-large-maximal}, we show the sum of $x$ on $M$ is at least $\paren{1 - \frac{1}{b}}\paren{\card{M_2^*} + \frac{1}{2}\card{M_1^*}}$.
    In \cref{clm:two-pass-large-bmatching}, we show that the sum of $x$ on $G[V(M), \overline{V(M)}]$ is (almost) at least $\frac{1}{b+1}\card{M_1^*}$.
    As a result, we can conclude $x$ has size (almost) at least $(2 - \sqrt{2})\mu(G)$.
    In \cref{clm:two-pass-fractional-gives-integral}, we show that $(1 - \epsilon)x$ satisfies the conditions of \cref{prp:general-fractional-matching} to prove $M \cup B$ has an integral matching (almost) as large as $x$.
    Finally, we put all this together to complete the proof.

First, we describe the construction of the fractional matching $x$ on $M \cup B$.
For all edges $e \in M$, we let $x_e = 1 - \frac{1}{b}$.
For all the edges $e \in B \setminus M_1^*$, we let $x_e = \frac{1}{\ceil{kb}}$.
Finally, for every edge $(u, v) \in B \cap M_1^*$ with $u \in V(M)$ and $v \in \overline{V(M)}$, 
we let $x_e = \frac{t}{\ceil{kb}}$ where $t$ is equal to $\min(k - \deg_B(u), \ceil{kb} - \deg_B(v))$.
Informally, for the analysis, we keep adding copies of $(u, v)$ to $B$
as long as $B$ remains a $b$-matching. This is a key fact in the proof of \cref{clm:two-pass-large-bmatching}. 

Notice that any vertex in $V(M)$ is adjacent to at most one edge from $M$
and $k$ edges from $B$.
Therefore, the sum of $x$ on the adjacent edges in $M$ is at most $1 - \frac{1}{b}$,
and the sum on adjacent edges in $B$ is at most $k \cdot \frac{1}{\ceil{kb}} \leq \frac{1}{b}$.
Similarly, any vertex in $\overline{V(M)}$ is adjacent to at most $\ceil{kb}$ edges from $B$, and is not adjacent to any edges of $M$.
Therefore, the sum of $x$ on the adjacent edges is  at most $\ceil{kb} \cdot \frac{1}{\ceil{kb}} = 1$.
Hence, $x$ is a fractional matching.

\begin{claim}\label{clm:two-pass-large-maximal}
    It holds that $x(M) \geq \paren{1 - \frac{1}{b}}\paren{\card{M_2^*} + \frac{1}{2}\card{M_1^*}}$.
\end{claim}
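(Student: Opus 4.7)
The plan is to reduce the claim to a vertex-counting argument on $V(M)$. By the definition of $x$ given just before the claim, every edge $e \in M$ receives weight $x_e = 1 - 1/b$, so
\[
x(M) = \paren{1 - \frac{1}{b}} \card{M},
\]
and it suffices to prove the purely combinatorial inequality $\card{M} \geq \card{M_2^*} + \tfrac{1}{2}\card{M_1^*}$.

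To establish this, I would count endpoints in $V(M)$ used by $M^*$. Since $M$ is maximal (produced in the first pass), every edge of $M^*$ has at least one endpoint in $V(M)$; otherwise such an edge could be added to $M$, contradicting maximality. By the definitions of $M_1^*$ and $M_2^*$, together they cover all edges of $M^*$ incident to $V(M)$, with edges in $M_2^*$ contributing two endpoints in $V(M)$ and edges in $M_1^*$ contributing exactly one. Because $M^*$ is itself a matching, these endpoints are all distinct vertices of $V(M)$, giving
\[
2\card{M_2^*} + \card{M_1^*} \leq \card{V(M)} = 2\card{M}.
\]
Dividing by $2$ yields $\card{M} \geq \card{M_2^*} + \tfrac{1}{2}\card{M_1^*}$, and multiplying by $(1 - 1/b)$ completes the proof.

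There is no real obstacle here; the statement is essentially the standard half-approximation bound of a maximal matching, refined to separate edges of $M^*$ by how many endpoints they place in $V(M)$. The only thing to be careful about is to invoke maximality of $M$ (not just that $M$ is a matching) in order to conclude that every edge of $M^*$ intersects $V(M)$, which is what justifies partitioning $M^*$ into $M_1^* \cup M_2^*$ with no remaining edges unaccounted for.
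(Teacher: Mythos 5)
Your proof is correct and follows essentially the same counting argument as the paper: both bound $\card{V(M)} = 2\card{M}$ from below by the number of $M^*$-endpoints it contains, namely $2\card{M_2^*} + \card{M_1^*}$, which are distinct because $M^*$ is a matching. One small note: maximality of $M$ is not actually needed for this particular inequality (it holds for any matching $M$ once $M_1^*, M_2^*$ are defined as the stated subsets of $M^*$); maximality is only invoked later, in the proof of \cref{thm:two-pass-alg}, to conclude $\card{M_1^*} + \card{M_2^*} = \mu(G)$.
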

\begin{proof}
    Consider the vertices of $V(M)$ and how they are covered by the edges of $M^*$.
    There are three possibilities: 
    the vertex is also covered by $M^*_1$;
    the vertex is also covered by $M^*_2$; or
    the vertex is not covered by $M^*$.
    Notice that there are $\card{M_1^*}$ vertices of the first type,
    and $2\card{M^*_2}$ of the second type.
    Therefore, it holds:
    \[
    \card{V(M)} \geq \card{M_1^*} + 2\card{M_2^*},
    \]
    and since $\card{M} = \frac{1}{2}\card{V(M)}$:
    \[
    \card{M} \geq \card{M_2^*} + \frac{1}{2}\card{M_1^*}. 
    \]
    Given that the value of $x$ on the edges of $M$ is $1 - \frac{1}{b}$, the claim follows.
\end{proof}

\begin{claim} \label{clm:two-pass-large-bmatching}
    It holds that $x(B) \geq (1 - \epsilon) \frac{1}{b + 1}\card{M_1^*}$.
\end{claim}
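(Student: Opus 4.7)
The plan is to introduce an auxiliary $b$-matching $B'$ obtained from $B$ by conceptually adding, for each $e = (u,v) \in B \cap M_1^*$, an extra $t_e$ copies of $e$, so that $e$ appears with multiplicity $1 + t_e$ in $B'$. By the very choice $t_e = \min(k - \deg_B(u), \lceil kb \rceil - \deg_B(v))$, the multiset $B'$ still respects the capacities $k$ on $V(M)$ and $\lceil kb \rceil$ on $\overline{V(M)}$, so it is a valid $b$-matching. Expanding the definition of $x$ directly yields the identity
\[
x(B) \;=\; \frac{|B'| - |B \cap M_1^*|}{\lceil kb \rceil},
\]
which reduces the problem to lower bounding $|B'|$.

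The central structural claim to establish next is that every edge $e = (u,v) \in M_1^*$ has at least one endpoint saturated in $B'$, meaning $\deg_{B'}(u) = k$ or $\deg_{B'}(v) = \lceil kb \rceil$. If $e \in B$, this is immediate from the definition of $t_e$: the endpoint realizing the minimum slack becomes saturated once the extra copies are added. If $e \notin B$, then maximality of the greedy second pass forces at least one capacity to have already been tight for $B$ at the moment $(u,v)$ was processed by the stream, and since degrees only grow, the saturation persists in $B'$. Letting $S \subseteq V(M)$ and $T \subseteq \overline{V(M)}$ denote the saturated vertices of $B'$, the matching property of $M_1^*$ gives $|M_1^*| \leq |S| + |T|$, while counting the edges of $B'$ by their endpoints yields $|B'| \geq k|S|$ and $|B'| \geq \lceil kb \rceil|T|$. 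Combining these inequalities gives
\[
|M_1^*| \;\leq\; \frac{|B'|}{k} + \frac{|B'|}{\lceil kb \rceil}, \qquad \text{hence} \qquad |B'| \;\geq\; \frac{k \lceil kb \rceil}{k + \lceil kb \rceil}|M_1^*|,
\]
which, after using $\lceil kb \rceil \leq kb + 1$, is at least $\frac{\lceil kb \rceil}{b+1}\bigl(1 - O(1/k)\bigr)|M_1^*|$.

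Plugging this bound back into the identity for $x(B)$, using $|B \cap M_1^*| \leq |M_1^*|$, and invoking the choice $k > 1/(b\epsilon^3)$ (so that both $1/k$ and $1/\lceil kb \rceil$ are $O(\epsilon^3)$), a short arithmetic check delivers $x(B) \geq (1 - \epsilon)|M_1^*|/(b+1)$. The most delicate step is the $e \notin B$ case of the saturation claim, which requires reasoning about the greedy processing order of the streaming algorithm and the monotonicity of degrees; the rest is a clean double counting plus a routine robustness calculation for the $\epsilon$-loss.
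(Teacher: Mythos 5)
Your proof is correct and follows essentially the same strategy as the paper's: extend $B$ by virtual copies of the $B \cap M_1^*$ edges until a capacity is hit, observe that maximality (of the second-pass greedy or of the virtual extension) forces every $M_1^*$ edge to have a saturated endpoint, and double-count saturated vertices against the total size of the extension. The paper packages this double count as a charging/potential argument over the multipliers $t_e$ (and bounds $\sum_e t_e$ directly), whereas you package it as a count of edges of the concrete auxiliary multiset $B'$; these differ only by the additive $\card{B \cap M_1^*}$ term, which you correctly pay back via the identity $x(B) = (\card{B'} - \card{B \cap M_1^*})/\ceil{kb}$ and absorb into the $\epsilon$-slack since $\ceil{kb} > \epsilon^{-3}$. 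Your presentation is arguably a bit cleaner, and as a small bonus it sidesteps a mild off-by-one in the paper's remark that ``$t(u)=k$ or $t(v)=\ceil{kb}$,'' since saturation is stated for $B'$ rather than for $t(\cdot)$.
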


\begin{proof}
    First, we introduce some definitions.
    For every edge $e \in B$ define $t_e$ equal to $x_e \cdot \ceil{kb}$,
    i.e.\ $t_e$ is an integer such that $x_e = \frac{t_e}{\ceil{kb}}$.
    Also, define $t(u)$ as the sum of $t$ on its adjacent edges, 
    that is:
    \[ t(u) = \sum_{e \in \delta_B(u)} t_e. \]
    Notice, for every $u \in V(M)$, it holds that $t(u) \leq k$,
    and for every $v \in \overline{V(M)}$, it holds that $t(v) \leq \ceil{kb}$.
    Furthermore, for every $(u,v) \in M_1^*$ with $u \in V(M)$ and $v \in \overline{V(M)}$,
    it holds that $t(u) = k$ or $t(v) = \ceil{kb}$.

    We use a charging argument.
    We order the edges of $B$ arbitrarily as $e_1, \ldots, e_N$,
    and let $B_i$ be the set of first $i$ edges.
    With respect to $B_i$, we define a potential $\phi_i$
    on every edge $(u,v) \in M_1^*$ with $u \in V(M)$ and $v \in \overline{V(M)}$:
    \[
    \phi_i(u, v) = \max\paren{\frac{\displaystyle\sum_{{e \in \delta_{B_i}(u)}}t_{e}}{k}, 
        \frac{\displaystyle\sum_{{e \in \delta_{B_i}(v)}}t_{e}}{\ceil{kb}}},
    \]
    which is equal to the maximum fraction of the used capacity on its endpoints.
    We also let:
    \[
    \phi_i = \sum_{(u,v) \in M_1^*} \phi_i(u,v).
    \]
    
    For an edge $e_i$, we charge it $c_i = \phi_i - \phi_{i-1}$.
    For each edge $e_i$, it holds that $c_i \leq t_{e_i} \cdot \paren{\frac{1}{k} + \frac{1}{\ceil{kb}}}$.
    Because it is adjacent to at most two edges of $M_1^*$,
    and it can increase the potential on either one by at most 
    $\frac{t_{e_i}}{k}$ and $\frac{t_{e_i}}{\ceil{kb}}$ respectively.
    Therefore, we have:
    \begin{equation} \label{eq:two-pass-eq1}
    \phi_N = \sum_{i = 1}^N c_i \leq \paren{\frac{1}{k} + \frac{1}{\ceil{kb}}} \sum_{i = 1}^N t_{e_i}.
    \end{equation}
    
    It also holds that $\phi_N(u,v) = 1$ for every $(u,v) \in M_1^*$.
    To show this, we examine two cases.
    If $(u,v) \in B$, that is $(u,v) = e_i$ for some $i$,
    then $\phi_j(u, v)$ is equal to one for all $j \geq i$.
    If $(u,v) \notin B$, then at the point in the stream that $(u, v)$ arrived,
    at least one endpoint must have been saturated by the edges of $B$,
    i.e.\ $\phi_i(u, v)$ is equal to one whenever $B_i$ includes 
    all the adjacent edges $(u, v)$ in $B$. Therefore, we have:
    \begin{equation} \label{eq:two-pass-eq2}
    \phi_N = \card{M_1^*}.
    \end{equation}

    Putting \eqref{eq:two-pass-eq1} and \eqref{eq:two-pass-eq2} together we get:
    \[
    \card{M_1^*} \leq \paren{\frac{1}{k} + \frac{1}{\ceil{kb}}} \sum_{i = 1}^N t_{e_i},
    \]
    or equivalently:
    \[
    \sum_{i = 1}^N t_{e_i} \geq \frac{k \cdot \ceil{kb}}{k + \ceil{kb}} \card{M_1^*}
    = \frac{k + kb}{k + \ceil{kb}} \cdot \frac{\ceil{kb}}{b+1}  \card{M_1^*}
    \geq (1 - \epsilon) \frac{\ceil{kb}}{b + 1} \card{M_1^*}
    \]
    Given the fact that $x(e_i) = \frac{t_{e_i}}{\ceil{kb}}$, it follows:
    \[
    x(B) = \frac{1}{\ceil{kb}} \sum_{i = 1}^N t_{e_i}
    \geq (1 - \epsilon) \frac{1}{b + 1}\card{M_1^*}. \qedhere
    \]
\end{proof}

\begin{claim} \label{clm:two-pass-fractional-gives-integral}
    $M \cup B$ contains an integral matching of size $(1 - \epsilon)^2\sum_e x_e$.
\end{claim}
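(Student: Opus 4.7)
The plan is to apply \cref{prp:general-fractional-matching} to the subgraph $G[M \cup B]$ equipped with the scaled fractional matching $(1-\epsilon)x$ constructed earlier in the section. If I can verify that $(1-\epsilon)x$ satisfies the blossom inequality on every vertex set $S$ with $|S| < 1/\epsilon$, the proposition will deliver an integral matching in $M \cup B$ of size at least $(1-\epsilon)\sum_e (1-\epsilon) x_e = (1-\epsilon)^2 \sum_e x_e$, which is exactly the claim.

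The reason this is tractable is that $x$ is essentially concentrated on two integral matchings. The first is $M$ itself, on which every edge carries weight precisely $1 - 1/b$. The second is $B \cap M_1^*$, which is a matching because $M_1^* \subseteq M^*$ is; its edges carry weight at most $k/\lceil kb \rceil \leq 1/b$, since $x_e = t/\lceil kb \rceil$ with $t \leq k$ and $\lceil kb \rceil \geq kb$. Every remaining edge lies in $B \setminus M_1^*$ and carries the uniform weight $1/\lceil kb \rceil \leq \epsilon^3$ by the choice $k > 1/(b\epsilon^3)$.

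To check the blossom condition for a set $S$ of size $s < 1/\epsilon$, I would partition the edges of $G[S] \cap (M \cup B)$ into these three groups. Since both $M$ and $M_1^*$ are matchings, at most $\lfloor s/2 \rfloor$ edges of each type appear in $G[S]$, contributing a combined weight of at most
\[
\lfloor s/2 \rfloor (1 - 1/b) + \lfloor s/2 \rfloor (1/b) = \lfloor s/2 \rfloor.
\]
The remaining edges contribute at most $\binom{s}{2} \cdot \epsilon^3 \leq \epsilon/2$. Multiplying the total by $(1-\epsilon)$ and observing that the savings $\epsilon \lfloor s/2 \rfloor \geq \epsilon$ for $s \geq 2$ comfortably absorbs the $\epsilon/2$ slack (while $s \in \{0,1\}$ is trivial), I obtain $(1-\epsilon) \sum_{e \in G[S]} x_e \leq \lfloor s/2 \rfloor$ as required.

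The main obstacle, or rather the one point that requires care, is recognizing that the weights on the two dominant matchings sum to \emph{exactly} $1$, so that $\lfloor s/2 \rfloor$ edges of each fit perfectly inside the allowance $\lfloor s/2 \rfloor$ without overshooting. This is precisely what forces the tight cap $1/b$ on the weights of $B \cap M_1^*$ edges. Once that structural alignment is in hand, the handling of the $B \setminus M_1^*$ error via the large choice of $k$ is routine, and the claim follows from \cref{prp:general-fractional-matching}.
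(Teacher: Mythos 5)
Your proof is correct and follows essentially the same approach as the paper: both verify the blossom condition for $(1-\epsilon)x$ on small sets by decomposing the edges of $G[S]$ into $M$, $B\cap M_1^*$, and $B\setminus M_1^*$, using the bounds $1-\tfrac1b$, $\tfrac1b$, and $\tfrac1{\ceil{kb}}\le\epsilon^3$ respectively, and letting the $(1-\epsilon)$ scaling absorb the $O(\epsilon)$ slack from the third group. The only cosmetic difference is that you bound all set sizes uniformly via $\floor{|S|/2}$, whereas the paper dispatches even $|S|$ directly from the fractional-matching constraint and treats only odd $|S|$ by the counting argument.
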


\begin{proof}
    To prove the statement, we show that $(1-\epsilon)x$ satisfies the conditions of \cref{prp:general-fractional-matching}.
    That is, we prove $x$ satisfies $x(S) \leq \ceil{\frac{S}{2}}$ for every vertex set $S \subseteq V$ 
    of size at most $\frac{1}{\epsilon}$. 
    Notice that since $(1 - \epsilon)x$ is a fractional matching the inequality holds for any 
    set $S$ with an even size. 
    Also, notice that if $x$ satisfies the inequality for a set $S$, then so does $(1-\epsilon)x$.
    This leaves us with one case.

    Take a vertex set $S$ of size equal to $2s + 1 \leq \frac{1}{\epsilon}$ such that $x$ does not satisfy the condition,
    i.e.\ $s < x(S) \leq s + 1$.
    For any edge $e \in M$, we have $x_e \leq 1 - \frac{1}{b}$,
    for any edge $e \in B \cap M_1^*$, we have $x_e \leq \frac{1}{b}$, and
    for any edge $e \in B \setminus M_1^*$, we have $x_e \leq \frac{1}{\ceil{kb}}$.
    Given the fact that there are at most $s$ edges of $M$ and $B \cap M_1^*$ in $S$, we can conclude:
    \begin{align*}
        x(S) &= x(M) + x(B \cap M_1^*) + x(B \setminus M_1^*) \\
        &\leq \paren{1 - \frac{1}{b}}\card{M} + \frac{1}{b} \card{B \cap M_1^*} + \frac{1}{\ceil{kb}}\card{S}^2  \\
        &\leq \paren{1 - \frac{1}{b}}s + \frac{1}{b} s + \frac{1}{\ceil{kb}}\frac{1}{\epsilon^2} \\
        &\leq s + \epsilon \tag{$k \geq \frac{1}{b\epsilon^3}$}
    \end{align*}
    Therefore, we have:
    \[
    (1-\epsilon)x(S) \leq (1-\epsilon)(s+\epsilon) \leq s + \epsilon - s\epsilon - \epsilon^2 \leq s - \epsilon^2.
    \]
    The claim follows from applying \cref{prp:general-fractional-matching} to $(1 - \epsilon)x$.
\end{proof}

\begin{proof}[Proof of \cref{thm:two-pass-alg}]
    First, we use \cref{clm:two-pass-large-maximal,clm:two-pass-large-bmatching} to show $\sum_e x_e \geq (1 - \epsilon)(2 - \sqrt{2})\mu(G)$.
    It holds that:
    \begin{align*}
        \sum_e x_e &= x(M) + x(B) \\
        &\geq \paren{1 - \frac{1}{b}}\paren{\card{M_2^*} + \frac{1}{2}\card{M_1^*}} + (1 - \epsilon) \frac{1}{b + 1}\card{M_1^*} \tag{\cref{clm:two-pass-large-maximal,clm:two-pass-large-bmatching}} \\
        &\geq 
        (1-\epsilon)\bracket{\paren{1 - \frac{1}{b}}\card{M_2^*} 
        + \paren{\frac{1}{2} -\frac{1}{2b} + \frac{1}{b+1}} \card{M_1^*}}.
    \end{align*}
    Since $b = 1 + \sqrt{2}$,
    we have $1 - \frac{1}{b} = \frac{1}{2} - \frac{1}{2b} + \frac{1}{b + 1} = 2 - \sqrt{2}$.
    Therefore,
    \[
    \sum_e x_e \geq (1 - \epsilon)(2 - \sqrt{2})(\card{M^*_1} + \card{M^*_2})
    = (1 - \epsilon)(2 - \sqrt{2}) \mu(G).
    \]
    
    To complete the proof, we note that by \cref{clm:two-pass-fractional-gives-integral},
    $M \cup B$ contains a matching of size $(1-\epsilon)^3(2 - \sqrt{2})\mu(G) \geq
    (1 - \epsilon) ^3\cdot .585 \cdot \mu(G)$.
    Also, \Cref{alg:two-pass} stores $O(n)$ edges for $M$ and $O(n \poly \frac{1}{\epsilon})$ edges
    for $B$. Hence, it uses space $O(n \poly \frac{1}{\epsilon})$.
    Replacing $\epsilon$ by $\frac{\epsilon}{3}$ gives the theorem.
\end{proof}
\newcommand{\tH}{{\widetilde{H}}}
\newcommand{\tB}{{\widetilde{B}}}

\section{The Fully Dynamic Algorithm} \label{sec:dynamic}

In this section, we show how we can turn our two-pass streaming algorithm in \Cref{sec:two-pass} into a fully dynamic algorithm with polylogarithmic update time. More formally, we prove the following theorem.

\begin{theorem}\label{thm:final-theorem-dynamic-detail}
For any $\epsilon > 0$, there is a fully dynamic algorithm that maintains a $(2 - \sqrt{2} - \epsilon) \sim 0.585$-approximation of the size of maximum matching in $2^{\poly(1/\epsilon)}\cdot \poly(\log n)$ worst-case update time in general graphs. The algorithm is randomized but works against adaptive adversaries.
\end{theorem}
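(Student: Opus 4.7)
\textbf{Proof proposal for \Cref{thm:final-theorem-dynamic-detail}.} The plan is to follow the two-pass blueprint of \Cref{sec:two-pass}, but to maintain $M$ explicitly and to access $B$ implicitly through a sublinear oracle, while aggregating the output estimate over a batch of updates. Concretely, I would maintain a maximal matching $M$ in $\poly(\log n)$ worst-case update time against adaptive adversaries using \cite{BaswanaGS-SJC18,Solomon-FOCS16,behnezhadDerakhshan19}. Setting $b=1+\sqrt{2}$ and $k=\Theta(1/\epsilon^3)$ as in \Cref{alg:two-pass}, I conceptually define $B$ to be a \emph{random} greedy maximal $b$-matching of $G[V(M),\overline{V(M)}]$ with vertex capacities $k$ on $V(M)$ and $\lceil kb\rceil$ on $\overline{V(M)}$. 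In contrast to the streaming setting, I allow parallel copies in $B$: the sublinear maximum matching algorithm of \cite{Behnezhad21}, which will serve as an oracle answering the edges of $B$ incident to any queried vertex $v$ in $\widetilde{O}(n)$ time, naturally produces parallel copies through its vertex-duplication reduction.

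The approximation guarantee requires upgrading the fractional matching analysis from \Cref{clm:two-pass-large-maximal,clm:two-pass-large-bmatching,clm:two-pass-fractional-gives-integral} to tolerate parallel edges. The plan is to equivalently view the random greedy $b$-matching as repeatedly sampling a uniformly random surviving edge in the duplicated graph. Fix an edge $e\in M^*_1$ with endpoints $u\in V(M)$, $v\in\overline{V(M)}$, and consider only the rounds in which an edge incident to $e$ is picked. I split into two regimes: if in many such rounds the number of surviving incident edges of $e$ is small, then with high probability $e$ itself is chosen in one of them; otherwise, the capacity at the endpoints of $e$ is consumed by a large number of \emph{distinct} neighbors. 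In either case, $G[M\cup B]$ viewed as a simple graph admits a fractional matching $x$ with $x_e=1-1/b$ on $M$, $x_e\leq 1/b$ on one distinct incident edge per $M^*_1$-edge, and $x_e=O(\epsilon^3)$ on the rest, so that $\sum_e x_e \geq (1-\epsilon)(2-\sqrt{2})\mu(G)$. The blossom inequality on sets of size up to $1/\epsilon$ is then verified exactly as in the proof of \Cref{clm:two-pass-fractional-gives-integral}, and \Cref{prp:general-fractional-matching} yields an integral matching of size $(2-\sqrt{2}-O(\epsilon))\mu(G)$ in $G[M\cup B]$.

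To efficiently estimate $\mu(G[M\cup B])$, I exploit that $G[M\cup B]$ has maximum degree at most $1+k+\lceil kb\rceil = \poly(1/\epsilon)$, so every $r$-hop neighborhood in this subgraph has at most $2^{\poly(1/\epsilon)\cdot r}$ vertices. Using an $r=O(1/\epsilon)$-round LOCAL $(1-\epsilon)$-approximate matching algorithm, the matched status of any vertex depends only on its $r$-hop ball in $G[M\cup B]$. Sampling $\poly(1/\epsilon)\cdot\log n$ vertices and simulating the LOCAL algorithm on each by issuing one $B$-query per vertex encountered during exploration estimates $\mu(G[M\cup B])$ in $\widetilde{O}(n)\cdot 2^{\poly(1/\epsilon)}$ total time. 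Since $\mu(G)$ changes by at most one per update, this estimate only needs to be recomputed every $\Theta(\epsilon n)$ updates, yielding an amortized $2^{\poly(1/\epsilon)}\cdot\poly(\log n)$ bound; running two staggered copies of the estimator in parallel de-amortizes this to worst-case. Adaptive-adversary robustness carries through because both the maintenance of $M$ and Behnezhad's sublinear oracle are adaptive-robust, provided the oracle's randomness is refreshed at the start of every batch.

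The main obstacle is the parallel-edge analysis in the approximation step: the charging argument in \Cref{clm:two-pass-large-bmatching} crucially assumes each edge of $B$ contributes at most once to the capacity charging of any $M^*_1$-edge, whereas parallel copies all charge to the same $M^*_1$-edge. Converting "many parallel copies" into "many distinct incident neighbors" via the random greedy process is the core technical content and requires a careful probabilistic case split on the size of the surviving incident-edge set when each incident edge is drawn. A secondary difficulty is coordinating the implicit $B$-oracle with the LOCAL simulation under adaptive adversaries: the sublinear algorithm only guarantees correctness when its randomness is independent of the query sequence, so queries must be answered consistently within a batch while the randomness is refreshed between batches.
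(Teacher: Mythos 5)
Your high-level plan tracks the paper closely: both use the two-pass blueprint with an explicitly maintained $M$, an implicit random greedy maximal $b$-matching $B$ accessed via the sublinear oracle of \cite{Behnezhad21}, the case split (``$e$ is picked with good probability'' versus ``an endpoint of $e$ is saturated by many distinct neighbors'') to control parallel copies, a blossom-inequality argument via \Cref{prp:general-fractional-matching}, and a bounded-degree LOCAL simulation to estimate $\mu(G[M\cup B])$. The core technical content you identify is indeed the one the paper develops (Claims~\ref{clm:dynamic-distributed-bmatching}--\ref{clm:dynamic-better-loads}). However, there are two concrete gaps in the parts you sketch more quickly.

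First, your claim that one can ``maintain a maximal matching $M$ in $\poly(\log n)$ worst-case update time against adaptive adversaries using \cite{BaswanaGS-SJC18,Solomon-FOCS16,behnezhadDerakhshan19}'' is not correct: those algorithms assume an \emph{oblivious} adversary, and maintaining a genuinely maximal matching against an adaptive adversary in polylog time is open. The paper sidesteps this by maintaining only an \emph{almost} maximal matching (Proposition~\ref{lem:dynamic-almost-maximal-algorithm}, from \cite{BKSW2023,wajc2020}), which is adaptive-robust, and then shows the analysis only loses a $(1-\epsilon)$ factor because the identity $\card{M^*_1}+\card{M^*_2}=\mu(G)$ degrades gracefully to $\card{M^*_1}+\card{M^*_2}\geq(1-\epsilon)\mu(G)$. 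Without this substitution your algorithm does not have the claimed adaptive-adversary guarantee.

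Second, your de-amortization scheme of recomputing the sampled estimate every $\Theta(\epsilon n)$ updates (with two staggered copies) does not yield a correct $\poly(\log n)$ worst-case algorithm. The sampling-based estimate carries an additive error of $\epsilon n$, and recomputing every $\Theta(\epsilon n)$ updates allows $\mu(G)$ to drift by $\Theta(\epsilon n)$ between queries; both of these are only tolerable when $\mu(G)=\Theta(n)$. When $\mu(G)$ is small (say $\poly\log n$), either the additive error swamps the answer or, if you shrink the batch to $\Theta(\epsilon\mu(G))$, the amortized cost $\widetilde{O}(n/(\epsilon\mu(G)))$ blows up far beyond polylog. This is precisely why the paper does not batch naively but instead invokes the semi-dynamic-to-fully-dynamic reduction via vertex sparsification (\cref{prop:semi-dynamic-to-fully}), which internally keeps the working graph's vertex count comparable to $\mu/\epsilon$ so that the additive $\epsilon n'$ error and the $Q(n',\epsilon^2)/n'$ amortization become benign. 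You need this reduction (or an equivalent sparsification argument), not a raw update-count batch, to close the proof.
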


Prior to delving into the algorithm and proofs, we define a setting known as a {\em semi-dynamic} setting. In this context, an algorithm $\mathcal{A}$ is categorized as semi-dynamic if it only generates an estimation of the maximum matching size when prompted with a query. By known reductions \cite{Behnezhad23,BKSW2023,Kiss22}, such semi-dynamic algorithms can be transferred into fully dynamic algorithms that are capable of maintaining the maximum matching size continuously and not only upon receiving a query. The following lemma from \cite{Behnezhad23}, formalizes this:

\begin{proposition}[Lemma 4.1 in \cite{Behnezhad23}]\label{prop:semi-dynamic-to-fully}
For a fully dynamic graph $G$ and $\epsilon > 0$, suppose there is a data structure $\mathcal{A}$ that takes $U(n)$ worst-case time per update to $G$ and provides an estimate $\widetilde\mu$ in $Q(n, \epsilon)$ time upon being queried, satisfying $\alpha \cdot \mu(G) - \epsilon n \leq \E[\widetilde\mu] \leq \mu(G)$.  Then, there exists a randomized data structure $\mathcal{B}$ that maintains an estimate $\widetilde\mu'$ such that, throughout the updates, $(\alpha - \epsilon) \cdot \mu(G) \leq \widetilde\mu' \leq \mu(G)$ with high probability. Additionally, $\mathcal{B}$ has a worst-case update time of $O\left(\left(U(n) + \frac{Q(n, \epsilon^2)}{n}\right) \cdot \poly(\log n, 1/\epsilon)\right)$. Moreover, if $\mathcal{A}$ can handle an adaptive adversary, then $\mathcal{B}$ can as well.
\end{proposition}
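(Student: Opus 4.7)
The plan is to transform the query-only semi-dynamic estimator $\mathcal{A}$ into a fully dynamic estimator $\mathcal{B}$ that reports an estimate after every update, by combining three standard ingredients: (i) probability amplification to convert the in-expectation guarantee $\E[\widetilde\mu] \geq \alpha \mu(G) - \epsilon n$ into a high-probability bound; (ii) \emph{lazy re-evaluation}, where $\mathcal{A}$ is queried only once every $T$ updates and its output is reused, with a deterministic correction, between queries; and (iii) \emph{pipelining} the query computation across those $T$ updates so that the per-update cost is worst-case rather than amortized.

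For the amplification step I would run $k = \Theta(\log n / \epsilon)$ independent copies of $\mathcal{A}$ on the same update stream, cap each output by a trivial upper bound on $\mu(G)$ (for instance, twice the size of a concurrently maintained Baswana--Gupta--Sen maximal matching, which costs only $\poly(\log n)$ per update), and output the maximum. A Markov-style argument on the nonnegative random variable $\mu(G) - \widetilde\mu_i$, combined with the hypothesis $\E[\widetilde\mu] \geq \alpha \mu(G) - \epsilon n$ and the cap, shows that each individual copy lies in $[\alpha \mu(G) - O(\epsilon n),\, \mu(G)]$ with probability at least $\Omega(\epsilon)$; taking the maximum across the $k$ copies then yields a high-probability estimate in the same range while preserving the pointwise upper bound. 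Since the aggregation uses only the copies' outputs and not their internal randomness, adaptive-adversary robustness is inherited from $\mathcal{A}$.

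For steps (ii) and (iii), I exploit that a single edge update changes $\mu(G)$ by at most one. If at some step I obtain a fresh amplified estimate $\widehat\mu_0$ for the then-current graph $G_0$, I can output $\widetilde\mu'_t := \widehat\mu_0 - t$ for the next $T$ updates: this preserves $\widetilde\mu'_t \leq \mu(G_t)$ pointwise and loses at most $2T$ on the lower bound, so choosing $T = \Theta(\epsilon \widehat\mu_0)$ keeps the loss inside the $\epsilon$-slack. To get worst-case per-update time I begin executing the next query of $\mathcal{A}$ immediately when the previous one finishes, splitting its $Q(n, \epsilon^2)$ work into $T$ equal chunks and processing one chunk per update. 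Combined with the $k \cdot U(n)$ per-update cost of maintaining the copies and the $\poly(\log n)$ cost of the BGS subroutine, this gives total worst-case update time $O\bigl((U(n) + Q(n, \epsilon^2)/(\epsilon n)) \cdot \poly(\log n, 1/\epsilon)\bigr)$, matching the claimed bound.

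The main obstacle is absorbing the additive $\epsilon n$ error of $\mathcal{A}$ into the purely multiplicative $(\alpha - \epsilon)$ guarantee demanded of $\mathcal{B}$. This is exactly why the reduction invokes $\mathcal{A}$ with parameter $\epsilon^2$ (hence the $Q(n, \epsilon^2)$ in the cost): the additive error becomes $\epsilon^2 n$, which is $O(\epsilon \mu(G))$ whenever $\mu(G) \geq \epsilon n$, i.e., precisely the regime where a nontrivial multiplicative approximation is actually required. For the complementary regime $\mu(G) < \epsilon n$ I would use the auxiliary BGS matching as a detector that $\mu(G)$ is small and switch $\mathcal{B}$'s output to a dedicated small-matching procedure maintained on the $O(\epsilon n)$ active vertices; a careful case split over the scale of $\mu(G)$ then yields the multiplicative guarantee uniformly across all regimes without enlarging the asymptotic update time. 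Verifying that the entire schedule of querying, pipelining, and regime-switching depends only on the update count and on the non-random BGS matching, and not on any random output of the copies, then discharges the adaptive-adversary requirement, completing the reduction.
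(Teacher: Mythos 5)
This proposition is not proved in the paper at all: it is imported verbatim as Lemma~4.1 of \cite{Behnezhad23}, and the authors only remark that its proof ``is built upon the concept of vertex sparsification.'' Measured against that intended argument, your proposal has two genuine gaps. First, the amplification step is unsound. The hypothesis gives only $\E[\widetilde\mu]\leq \mu(G)$, not a pointwise upper bound, so the random variable $\mu(G)-\widetilde\mu_i$ need not be nonnegative and the Markov-style tail bound you invoke does not apply; worse, taking the \emph{maximum} over $k$ copies can overshoot $\mu(G)$, and capping by $2|M|$ (which is an upper bound on $\mu(G)$, not equal to it) does not restore the required guarantee $\widetilde\mu'\leq\mu(G)$. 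The standard repair is to average $\Theta(\epsilon^{-2}\poly\log n)$ independent queries, apply a Chernoff bound to the bounded summands, and then subtract the deviation so the estimate sits below $\mu(G)$ with high probability --- exactly the pattern the paper itself uses in Lemma~\ref{lem:approx-final}.

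Second, and more centrally, the conversion of the additive $\epsilon n$ error into a multiplicative $(\alpha-\epsilon)$ guarantee in the regime $\mu(G)\ll\epsilon n$ is the entire content of the lemma, and your ``dedicated small-matching procedure on the $O(\epsilon n)$ active vertices'' is not an actual mechanism: $\mathcal{A}$ is a data structure over a fixed vertex set, the set of active vertices changes with every update, and restricting to $V(M)$ discards the edges to unmatched vertices that the whole construction is designed to exploit. The missing idea is vertex sparsification \cite{assadiKhannaYang2019,Kiss22}: maintain $O(\log n)$ parallel instances of $\mathcal{A}$, where instance $j$ runs on the multigraph obtained by randomly hashing the $n$ vertices into $n_j\approx 2^j/\epsilon$ buckets. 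Contraction never increases the matching size, and at the scale $n_j=\Theta(\mu(G)/\epsilon^2)$ it preserves $(1-\epsilon)\mu(G)$ with high probability, so the additive error $\epsilon^2 n_j=O(\epsilon\,\mu(G))$ becomes multiplicative; one then reports the largest validated estimate across scales. This is also what forces the $Q(n,\epsilon^2)$ (rather than $Q(n,\epsilon)$) term and the extra $\poly(\log n,1/\epsilon)$ factor in the update time. Your lazy re-evaluation and pipelining of the query work across $\Theta(\epsilon\widehat\mu)$ updates is fine and standard, but without the sparsification layer the reduction does not go through.
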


The proof of the aforementioned lemma is built upon the concept of \enquote{vertex sparsification,} which has been previously used in the literature \cite{assadiKhannaYang2019, Kiss22}. In this work, we show that there is a semi-dynamic algorithm that for any general graph $G$ achieves $U(n) = \text{poly}(\log n)$, $Q(n, \epsilon) = n\cdot 2^{\poly(1/\epsilon)}\cdot \poly(\log n)$, $\alpha = 2- \sqrt{2}$, and works against adaptive adversaries. Plugged into \cref{prop:semi-dynamic-to-fully}, this implies our desired \cref{thm:dynamic-final}.

The first step to design such a dynamic algorithm is to simulate the first pass of the original streaming algorithm. The problem of maintaining maximal matching in a dynamic setting has been extensively studied in the field and there exists a rich literature that leads to fully dynamic algorithms with polylogarithmic update time (see \cite{behnezhadDerakhshan19, bernsteinForster21, Solomon-FOCS16}). We use the following result as one of the building blocks of our algorithm.

\begin{proposition}[\cite{behnezhadDerakhshan19}]\label{prop:dynamic-maximal}
There exists a data structure that maintains a maximal matching in a fully dynamic graph with $\poly(\log n)$ worst-case update time against an oblivious adversary.
\end{proposition}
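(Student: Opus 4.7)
The plan is to build on the level-based hierarchy framework of Baswana, Gupta, and Sen and then deamortize it into worst-case $\poly(\log n)$ time, following the broad strategy behind the cited paper. First I would partition the vertices into $L = O(\log n)$ levels. Each edge is \emph{owned} by its higher-level endpoint (ties broken arbitrarily). The invariant is that a vertex at level $\ell$ owns roughly $(1+\epsilon)^\ell$ edges; both endpoints of a matched edge sit at the same level; and for every level $\ell$ and every unmatched vertex $u$, no neighbor of $u$ at level $\ell$ is itself unmatched. This last property is what gives maximality. The matching edge at each matched vertex is chosen uniformly at random from its owned edge set at its current level, and this randomness is the only place where obliviousness of the adversary is used.

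Next I would specify the update rules. On an insertion $(u,v)$: if either endpoint is unmatched, match them and possibly raise levels to restore the density invariant; otherwise add the edge to the owner's list at the owner's current level. On a deletion $(u,v)$: if the edge is not in the matching, just remove it from the owner's list and possibly lower levels; if it is the matching edge, the two endpoints become free and must each search their owned edge lists from the top level downward for a new partner, potentially triggering rises at neighbors. A vertex that accumulates too many edges at its level rises one level and resamples its matching partner uniformly from the enlarged owned set; a vertex with too few drops.

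For the amortized analysis, the key calculation is that a rise from level $\ell$ costs $O((1+\epsilon)^\ell)$ work (rewriting ownership and notifying neighbors), but because the matching edge is a \emph{uniformly random} owned edge, each adversarial deletion hits it with probability $O((1+\epsilon)^{-\ell})$; thus in expectation the adversary must spend $\Omega((1+\epsilon)^\ell)$ updates before it destabilizes the choice again. A potential argument charging this cost against the stream gives $O(\log^2 n)$ amortized expected update time. This step also requires showing that when an endpoint of a matching edge is freed, the search for a replacement takes only $O(\log n)$ expected work per level, again via the random-sampling argument.

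The main obstacle — and the entire substance beyond prior work — is converting this amortized expected bound into a worst-case $\poly(\log n)$ guarantee. The plan is to schedule rises and shuffles in small chunks over subsequent updates, maintaining a \emph{target} hierarchy that encodes the pending work and a \emph{current} hierarchy that answers queries, and to synchronize them gradually. The difficulty is that while a deferred rise of vertex $v$ from level $\ell$ is being processed piecemeal, further insertions and deletions can invalidate intermediate decisions; this is handled by logging all updates that touch affected owned sets during the processing window and replaying them at the end, together with a scheduling invariant guaranteeing that the total pending work at any moment is at most $\poly(\log n)$ times a deterministic potential, so that a fixed per-update budget of $\poly(\log n)$ suffices to drain it. Because the adversary is oblivious, the update sequence is fixed before any internal randomness is sampled, so the expected-time bound on the amortized side carries over to a high-probability worst-case bound on each update via standard concentration. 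Combining the hierarchy, its amortized analysis, and the deamortization yields the claimed data structure.
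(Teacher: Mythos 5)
This proposition is not proved in the paper at all: it is imported as a black box from \cite{behnezhadDerakhshan19} (see also \cite{BaswanaGS-SJC18,Solomon-FOCS16}), so there is no in-paper argument to compare yours against. Judged on its own terms, your sketch correctly identifies the standard architecture --- a randomized level hierarchy \`a la Baswana--Gupta--Sen in which each matched vertex's mate is a uniformly random owned edge, with obliviousness used exactly to argue that the adversary deletes the matching edge of a level-$\ell$ vertex with probability $O((1+\epsilon)^{-\ell})$ per deletion --- and that analysis does yield the amortized expected polylogarithmic bound. That part is fine.

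The genuine gap is the deamortization, which is the entire content of the cited result and which your sketch does not actually carry out. Two concrete problems. First, your claim that ``the expected-time bound on the amortized side carries over to a high-probability worst-case bound on each update via standard concentration'' is false: once the adversary's (fixed, oblivious) sequence happens to delete the matching edge of a vertex at a high level $\ell$, the resulting rebuild costs $\Omega((1+\epsilon)^{\ell})$ work \emph{deterministically on that update}; concentration controls how often this happens across the sequence, not how much work any single update requires. The worst-case guarantee can only come from explicitly spreading that rebuild over many subsequent updates. Second, once work is deferred, the invariant that the maintained matching is \emph{maximal after every update} is in jeopardy: while a vertex's rise or re-matching is being processed piecemeal, there can be two adjacent unmatched vertices, and your proposal to ``log and replay'' updates does not explain how maximality is certified at the intermediate time steps, nor why the total pending work is bounded by a fixed per-update budget under adversarial interleaving. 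These are precisely the difficulties that the deamortization papers spend most of their length resolving (via carefully scheduled background rebuilds with provisional matchings that remain maximal throughout), so as written your argument establishes the amortized version of the proposition but not the worst-case statement actually claimed.
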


In order to simplify our algorithm, we incorporate the assumption of having an oblivious adversary as stated in \Cref{prop:dynamic-maximal}. However, it is important to note that this assumption is only employed at this specific point in the outline of our algorithm. Toward the end of \cref{sec:dynamic-approx}, we will explain how we can eliminate this assumption.

Let $M$ be the maximal matching that we maintain and $H$ be the induced bipartite subgraph between matched and unmatched vertices, i.e.\ $G[V(M), \overline{V(M)}]$. To simulate the second pass of our streaming algorithm, we incorporate a sublinear algorithm for estimating the size of the maximum matching. A similar approach was utilized in \cite{Behnezhad23, BKSW2023} to achieve an equivalent approximation ratio for bipartite graphs. For a more comprehensive understanding of the reduction from a dynamic matching algorithm to a sublinear matching algorithm, refer to \cite{Behnezhad23, BKSW2023}. In the second pass of the streaming algorithm for bipartite graphs, even if we select the same edge multiple times in the maximal $b$-matching, we can still demonstrate the same approximation guarantee. However, when dealing with general graphs, it is necessary to choose distinct edges in order to attain a significant approximation ratio. Let $B$ be a maximal $b$-matching in $H$. In \Cref{sec:two-pass}, we proved that $M \cup B$ contains a 0.585-approximate matching. A second technical challenge arises at this point. Unlike the algorithm presented in \cite{BKSW2023} for bipartite graphs, where the size of $B$ is adequate for estimation, we now need to estimate the size of the maximum matching of $M \cup B$ accurately. More formally, we need a $(1-\epsilon)$-approximation of $\mu(G[M \cup B])$ to achieve our approximation guarantee.

Our primary technical contribution in implementing our two-pass streaming algorithm in the fully dynamic setting involves addressing the above two challenges. If we are not restricted in selecting distinct edges for maximal $b$-matching, we can simplify the process by creating multiple copies of $\overline{V(M)}$ (specifically, $\ceil{kb}$ copies) and $V(M)$ ($k$ copies). This reduction allows us to convert the maximal $b$-matching into an instance of maximal matching, for which we already have a fast sublinear algorithm available \cite{Behnezhad21} (similar to the approach used in \cite{BKSW2023}).

\begin{proposition}[\cite{Behnezhad21, Behnezhad23}]\label{prop:sublinear-maximal}
    Let $\epsilon > 0$, $v$ be a random vertex in graph $G$, and $\pi$ be a random permutation over edges of $G$. There exists an algorithm that determines if $v$ is matched in $\GMM(G, \pi)$ that works in $\widetilde{O}(n/\epsilon)$ expected time with a success probability of $1 - \epsilon$. Moreover, if $v$ is matched, the algorithm returns the matching edge.
\end{proposition}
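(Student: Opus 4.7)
The plan is to build a recursive \emph{local oracle} that, given query access to $G$ and the permutation $\pi$, simulates $\GMM(G,\pi)$ around the vertex $v$ without exploring the whole graph. To decide whether $v$ is matched, the oracle lists the edges incident to $v$ in increasing order of $\pi$-rank and processes them one at a time: for $e=(v,u)$ of rank $r$, it recursively queries whether $u$ is already matched in $\GMM(G,\pi)$ by some edge $(u,w)$ with $\pi(u,w)<r$. If the recursive call returns ``unmatched before $r$,'' then $e$ is the edge that adds $v$ to $\GMM(G,\pi)$, and the oracle returns $e$. Otherwise it proceeds to $v$'s next edge, and if none succeeds, it declares $v$ unmatched. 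Correctness is immediate from the definition of $\GMM$: an edge enters the greedy matching iff neither endpoint has been matched by a smaller-rank edge, which is exactly the predicate the recursion computes.

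The running-time analysis is the heart of the proof and follows the Yoshida--Yamamoto--Ito / Nguyen--Onak oracle framework, refined in \cite{Behnezhad21} for matchings. The idea is to bound the \emph{expected} total number of edge probes when $v$ is a uniformly random vertex and $\pi$ is a uniformly random permutation. One charges each probed edge to its rank in $\pi$: an edge $e$ is probed only if the recursion reaches it through a chain of strictly decreasing ranks descending from $v$. By exchangeability of $\pi$ one can show that the probability that a given edge appears at depth $k$ in the recursion tree decays fast enough in $k$, and that summing over all edges and depths yields a total expected number of probes of $\widetilde{O}(n)$ (averaged over the random choice of $v$ and $\pi$). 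Each probe costs $\widetilde{O}(1)$ using adjacency-list access augmented with a hash on $\pi$, so the expected running time is $T=\widetilde{O}(n)$.

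To convert the expected bound into the high-probability bound in the statement, apply Markov's inequality: the actual running time exceeds $T/\epsilon$ with probability at most $\epsilon$. The algorithm simply runs the oracle with a wall-clock budget of $T/\epsilon=\widetilde{O}(n/\epsilon)$; if it terminates, it returns the (correct) matching status of $v$ and the matching edge, and otherwise it declares failure. This gives success probability $\ge 1-\epsilon$ in worst-case time $\widetilde{O}(n/\epsilon)$, as claimed.

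The main obstacle is the expected-probe bound $\widetilde{O}(n)$. A naive recursion-tree analysis yields super-linear bounds (e.g.\ $\widetilde{O}(n\sqrt{n})$), and obtaining a near-linear bound requires the more delicate argument of \cite{Behnezhad21}, which exploits the randomness of both $v$ and $\pi$ simultaneously and carefully tracks how the ``descending-rank'' constraint limits the size of the explored subtree. Everything else---the oracle definition, correctness, and the Markov step---is essentially routine once this expected-complexity bound is in hand.
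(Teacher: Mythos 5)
The paper does not prove this proposition; it imports it as a black box from \cite{Behnezhad21,Behnezhad23}. Your sketch is a faithful high-level reconstruction of the local-oracle argument in those cited works---the recursive vertex/edge oracle with the strictly-decreasing-rank invariant, the $\widetilde{O}(n)$ expected-probe bound over random $(v,\pi)$ established in \cite{Behnezhad21}, and the Markov-plus-budget conversion to $(1-\epsilon)$ success probability in $\widetilde{O}(n/\epsilon)$ time---and you correctly defer the nontrivial expected-complexity bound to \cite{Behnezhad21}, which is exactly what the paper does.
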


However, with this reduction, the possibility of selecting an edge multiple times arises, preventing us from obtaining the desired approximation guarantee for general graphs. It is worth noting that the algorithm in \cite{Behnezhad21} estimates the size of the randomized greedy maximal matching. To overcome the aforementioned challenge, we leverage the observation that when we run the randomized greedy maximal matching on the maximal $b$-matching instance, each edge in $M^*_1$  will either be selected at least once or has one endpoint that is nearly saturated with distinct edges in the maximal $b$-matching. This observation allows us to achieve the same approximation guarantee, disregarding some dependence on $\epsilon$.

For the second challenge, we need to design an oracle that, given a vertex $v$ as input, can determine whether $v$ is part of an approximately optimal maximum matching of $M \cup B$. We will then apply this oracle to several randomly selected vertices to estimate $\mu(G[M \cup B])$. To design the oracle, we can exploit the fact that the maximum degree of $G[M \cup B]$ is constant. This enables us to utilize existing LOCAL algorithms for maximum matching, as the number of vertices within a bounded distance from the queried vertex is at most a certain constant.

\begin{proposition}[\cite{harris2020}]\label{prop:local-algorithm}
    For $\epsilon > 0$, there exists a $O(\epsilon^{-3} \log \Delta)$-round LOCAL algorithm that outputs $(1-\epsilon)$-approximate maximum matching in expectation.
\end{proposition}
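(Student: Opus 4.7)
The plan is to construct the matching via the classical augmenting-path framework, exploiting the Hopcroft--Karp observation: if a matching $M$ admits no augmenting path of length at most $2k-1$, then $|M| \geq (1 - 1/k)\mu(G)$. Setting $k = \lceil 1/\epsilon \rceil$, it suffices to iteratively eliminate all such short augmenting paths in the LOCAL model, starting from the empty matching.

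First, I would proceed in $O(1/\epsilon)$ phases. In each phase, every vertex collects its $O(1/\epsilon)$-hop neighborhood (in $O(1/\epsilon)$ rounds) and enumerates all candidate augmenting paths of length at most $2k-1$ passing through it with respect to the current matching. To extract a large vertex-disjoint subset of these paths, I would construct a \emph{conflict graph} whose nodes are the candidate augmenting paths and whose edges connect paths sharing at least one vertex, and compute a maximal independent set (MIS) on it using a randomized distributed algorithm such as Ghaffari's, which runs in $O(\log \Delta_{\text{conf}})$ rounds in expectation. Augmenting along the selected disjoint paths would yield the matching for the next phase. After all phases, no short augmenting path remains, so the resulting matching is a $(1-\epsilon)$-approximation.

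The main obstacle is tight round-complexity control. Since each short augmenting path has $O(1/\epsilon)$ vertices and each vertex participates in at most $\Delta^{O(1/\epsilon)}$ such paths, the conflict graph has maximum degree $\Delta_{\text{conf}} \leq \Delta^{O(1/\epsilon)}$, so each MIS invocation costs $O(\log \Delta_{\text{conf}}) = O(\epsilon^{-1} \log \Delta)$ rounds. Combined over $O(1/\epsilon)$ phases, this naively gives $O(\epsilon^{-2} \log \Delta)$ rounds, short of the claimed $O(\epsilon^{-3} \log \Delta)$. The additional $\epsilon^{-1}$ factor would be paid by a potential-based analysis showing that a single MIS eliminates only a $(1-\Omega(\epsilon))$ fraction of the conflicting augmenting structures in expectation, so $\Theta(1/\epsilon^2)$ phases are needed to drive the number of remaining short augmenting paths below $\epsilon \mu(G)$. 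Formalizing this would rely on a fractional-matching potential defined on the residual (symmetric-difference) graph, coupled with a charging argument that attributes each non-eliminated augmenting path to a random choice made by the MIS algorithm. The expectation bound in the statement then follows from the expected residual gap after all phases.
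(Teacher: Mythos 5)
This proposition is a black-box citation to Harris~\cite{harris2020}; the paper offers no proof of its own, so there is no paper argument to match against and your sketch should be judged on its internal soundness alone. On that measure there are two concrete problems.

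First, the iteration count is not justified. Hopcroft--Karp's guarantee that the shortest augmenting-path length strictly increases requires, in each phase, a maximal set of vertex-disjoint \emph{shortest} augmenting paths, not merely a maximal set of disjoint paths of length $\leq 2k-1$. When you only insist on ``length $\leq 2k-1$'' and use an MIS on the conflict graph, augmenting can create fresh short augmenting paths passing through vertices touched by the augmented set, so $O(1/\epsilon)$ phases is not enough and there is no short cut to the Hopcroft--Karp bound. Your proposed fix---a ``potential-based analysis'' showing a $(1-\Omega(\epsilon))$ decay per MIS call, hence $\Theta(\epsilon^{-2})$ phases---is stated but not demonstrated, and it is not a standard lemma; this is precisely the nontrivial part of the argument. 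You also never address general (non-bipartite) graphs, where the proposition is applied to $G[M\cup B]$, so one must cope with blossoms when enumerating length-$O(1/\epsilon)$ alternating paths; the unmodified augmenting-path enumeration you describe is correct only for bipartite graphs.

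Second, the round accounting already has a factor-$\epsilon^{-1}$ slip before you attempt the ``extra'' factor. Conflicting paths live within $O(1/\epsilon)$ hops in $G$, so \emph{one} round of the conflict graph costs $O(1/\epsilon)$ rounds of $G$. Thus a single MIS call on the conflict graph, with $\log \Delta_{\mathrm{conf}} = O(\epsilon^{-1}\log \Delta)$ MIS rounds, already costs $O(\epsilon^{-2}\log\Delta)$ rounds of $G$. With $O(1/\epsilon)$ phases this is already $O(\epsilon^{-3}\log\Delta)$; the extra $\epsilon^{-1}$ factor of phases you then invoke would overshoot the target bound. So the ``missing'' factor you are trying to recover is an artifact of undercounting the simulation overhead, not a real deficit. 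For reference, Harris's actual argument does not proceed by MIS over an explicit conflict graph of augmenting paths; it works through a layered fractional-matching / degree-splitting framework, so your route is genuinely different rather than a rederivation of the cited proof.
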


Note that we do not have direct access to the adjacency matrix of graph $G[M \cup B]$. However, we can utilize \Cref{prop:sublinear-maximal} to identify all the maximal $b$-matching edges of a vertex $v$. This allows us to obtain the neighbors of $v$ in $M \cup B$ within a time complexity of $\widetilde{O}_\epsilon(n)$. Consequently, by spending $\widetilde{O}_\epsilon(n)$ time, we can obtain all the vertices that are at a distance of $O(\epsilon^{-3} \log \Delta)$ from a given vertex.

In the rest of this section, we provide formal proof of the approximation guarantee and running time of \Cref{alg:dynamic}.

\paragraph{Notation:}
Throughout this section, we let $G = (V, E)$ be the original graph that undergoes edge deletion and insertion. We use $M$ to show the maximal matching that our algorithm maintains. Let $V(M)$ be the endpoints of $M$ and $\overline{V(M)} = V \setminus V(M)$. Let $H \coloneqq G[V(M), \overline{V(M)}]$ and $\tH$ be the graph constructed by having $k$ copies of vertices of $V(M)$ and $\ceil{kb}$ copies of $\overline{M}$. Additionally, let $\tB$ denote the random greedy maximal matching of $\tH$ and $B$ be the corresponding $b$-matching on $H$. Finally, let $M^*$ be an arbitrary maximum matching of $G$, $M^*_1 = M^* \cap (V(M) \times \overline{V(M)})$, and $M^*_2 = M^* \cap (V(M) \times V(M))$.

\begin{algorithm}\caption{Semi-Dynamic Algorithm for General Graphs}\label{alg:dynamic}
    
    Let $M$ be the maximal matching of $G$ that we maintain using \Cref{prop:dynamic-maximal}. 

    Let $H =G[V(M), \overline{V(M)}]$, and $\tH$ be the auxiliary graph based on $H$ ($\tH$ is not constructed explicitly).

    Let $\pi$ be a random permutation over edges of $\tH$ ($\pi$ is not constructed explicitly).
    
    Let $\tB$ be a random greedy maximal matching on $\tH$ with respect to $\pi$, and let $B$ be the corresponding $b$-matching on $H$ ($\tB$ and $B$ are not constructed explicitly).

    Sample $r = 24\epsilon^{-2} \log n$ random vertices $v_1, \ldots, v_r$ from $V$. \label{ln:number-of-samples}

Let $X_i$ be the indicator variable for the event that $v_i$ is matched in the $(1-\epsilon)$-approximate maximum matching of $M \cup B$ computed via \Cref{prop:local-algorithm}.

Let $X = \sum_{i=1}^r X_i$ and $\widetilde \mu = \frac{nX}{2r} - \frac{\epsilon n}{2}$. \label{ln:mu-estimate}
\end{algorithm}

\subsection{Approximation Ratio} \label{sec:dynamic-approx}

This subsection is devoted to proving the following claim, that is, $G[M \cup B]$ approximates the maximum matching of $G$. 

\begin{claim} \label{clm:dynamic-approx}
    It holds that 
    $(2 - \sqrt{2} - \epsilon)\mu(G) \leq \expect{\mu(G[M \cup B])} \leq \mu(G)$.
\end{claim}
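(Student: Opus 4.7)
The upper bound $\expect{\mu(G[M \cup B])} \leq \mu(G)$ is immediate since $G[M \cup B]$ is a subgraph of $G$ for every realization of the randomness. For the lower bound, the plan is to reuse the fractional-matching template of \cref{sec:two-pass} but to rebuild the charging argument of \cref{clm:two-pass-large-bmatching} so that it tolerates the duplicate edges that can appear in $B$. Recall that $B$ is the projection of the random greedy maximal matching $\tB$ on the blown-up graph $\tH$; unlike \cref{alg:two-pass}, which explicitly forbade multi-edges, the same underlying edge of $H$ may be picked many times by $\tB$. Since $G[M \cup B]$ depends only on the support of $B$, this multiplicity is the only obstruction to a direct port of the streaming proof.

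I would use the same fractional matching $x$ as in \cref{sec:two-pass}, but defined on the support of $M \cup B$: assign $x_e = 1 - 1/b$ to every $e \in M$, set $x_e = t_e/\ceil{kb}$ on the support of $B$ with $t_e \in \{1, \ldots, \ceil{kb}\}$ counting the ``usable'' capacity of $e$, and push $x_e$ on $e \in B \cap M_1^*$ up to the maximum value consistent with the $b$-matching constraints. With this choice, the feasibility of $x$ and \cref{clm:two-pass-large-maximal} transfer verbatim (the latter only used $\card{M}$), and \cref{clm:two-pass-fractional-gives-integral} is preserved as soon as $k \geq 1/(b\epsilon^3)$, because the per-edge upper bounds $x_e \leq 1 - 1/b$ on $M$, $x_e \leq 1/b$ on $B \cap M_1^*$, and $x_e \leq 1/\ceil{kb}$ elsewhere still hold. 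Everything therefore reduces to proving the expected analog of \cref{clm:two-pass-large-bmatching}, namely
\[
    \expect{x(B)} \;\geq\; (1 - O(\epsilon)) \cdot \frac{1}{b+1}\card{M_1^*}.
\]

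The main obstacle, and the single point where the randomness of $\tB$ is used, is showing that for each fixed $(u,v) \in M_1^*$ the per-edge potential $\phi_N(u,v)$ from the proof of \cref{clm:two-pass-large-bmatching} is at least $1 - O(\epsilon)$ with high probability. I plan to exploit the standard ``uniformly random edge'' view of $\GMM(\tH, \pi)$: at each step, pick a uniformly random surviving edge of $\tH$, add it to $\tB$, and delete both endpoints. The edge $(u,v) \in M_1^*$ appears as $k \cdot \ceil{kb}$ parallel copies in $\tH$. I then split into two cases. Case~(a): in at least an $\epsilon$-fraction of the rounds that remove some copy of $u$ or of $v$, the surviving subgraph has only $O(\epsilon k)$ \emph{distinct} neighbors of $u$ (resp.\ of $v$) on the opposite side; in each such round a copy of $(u,v)$ itself is selected with probability $\Omega(\poly(\epsilon))$, so with high probability at least one copy is picked, giving $(u,v) \in \operatorname{supp}(B)$ and $\phi_N(u,v) = 1$. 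Case~(b): otherwise, in most of these rounds $\Omega(k)$ distinct neighbors survive, so by a Chernoff bound the $k$ slots of $u$ (and the $\ceil{kb}$ slots of $v$) are distributed over a $(1 - O(\epsilon))$-fraction of distinct neighbors in $B$, which by the definition of $\phi$ directly yields $\phi_N(u,v) \geq 1 - O(\epsilon)$. The delicate point is to choose the polynomial in $\epsilon$ appearing in case~(a) large enough, and $k$ correspondingly, so that a union bound over the at most $n$ edges of $M_1^*$ still succeeds with probability $1 - 1/\poly(n)$.

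Plugging these per-edge bounds into the charging argument of \cref{clm:two-pass-large-bmatching} and then running through the algebra that appears at the very end of the proof of \cref{thm:two-pass-alg}, with $b = 1+\sqrt{2}$, yields $\expect{\sum_e x_e} \geq (1 - O(\epsilon))(2 - \sqrt{2}) \mu(G)$. Applying \cref{prp:general-fractional-matching} to $(1 - \epsilon)x$ then produces an integral matching of $M \cup B$ of expected size at least $(1 - O(\epsilon))(2 - \sqrt{2}) \mu(G)$, and rescaling $\epsilon$ by a constant finishes the proof of \cref{clm:dynamic-approx}.
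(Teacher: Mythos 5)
Your high-level template is right: construct a fractional matching on $M \cup B$, run the charging argument on $B$, and invoke \cref{prp:general-fractional-matching} on $(1-\epsilon)x$. Claims~\ref{clm:two-pass-large-maximal} and~\ref{clm:two-pass-fractional-gives-integral} do transfer as you say, and the paper indeed takes exactly this route (\cref{clm:dynamic-large-maximal,clm:dynamic-fractional-gives-integral}). The whole difficulty is concentrated in the analogue of \cref{clm:two-pass-large-bmatching}, and that is where your sketch goes wrong.

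The first problem is that your dichotomy is on the wrong quantity. You split on the number of \emph{distinct surviving neighbors} of $u$ (resp.\ $v$). But the picking probability of an edge $(u,w)$ in the random greedy process on $\tH$ is proportional to $w(u,w) = r(u)\,r(w)$, the number of surviving copies, not to whether $w$ is distinct. Even if $\Omega(k)$ distinct neighbors of $u$ survive, one of them could retain nearly all $\ceil{kb}$ copies while the rest retain only one; then a constant fraction of all picks at $u$ go to that single neighbor, and its multiplicity in $B$ becomes $\Theta(k)$, destroying the $\epsilon^3\ceil{kb}$ bound. The paper's \cref{clm:dynamic-uv-is-picked,clm:dynamic-optimal-adversary} deliberately phrase the dichotomy in terms of the \emph{total weight} $w(I')$ exceeding $W=\floor{\epsilon^{-4}}\,k\,\ceil{kb}$, and then characterize the worst case adversary for how that weight is spread; your case~(b) conclusion (``$k$ slots distributed evenly by Chernoff'') simply does not follow from ``many distinct neighbors,'' and a plain Chernoff bound is not available because the per-step distribution is adaptively chosen and non-uniform. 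Relatedly, your stated conclusion in case~(b) --- that the $k$ slots cover a $(1-O(\epsilon))$-fraction of distinct neighbors --- is also strictly weaker than what is needed: it still allows a single edge to soak up $\Theta(\epsilon k)$ of the slots, which is far above the $\epsilon^3\ceil{kb}$ cap that makes the non-$M^*_1$ weights of $x$ small enough for the blossom inequality in \cref{clm:dynamic-fractional-gives-integral}.

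The second problem is the final union bound. You plan to choose $k$ large enough that the per-edge failure probability is $1/\poly(n)$ and then union bound over the $n$ edges of $M^*_1$. This cannot work: $k$ is a parameter of the algorithm that must be $\poly(1/\epsilon)$ (it directly enters the update time as $k^d$), so the per-edge failure probability is at best a constant depending on $\epsilon$, not $1/\poly(n)$. The paper never needs a high-probability statement here; \cref{clm:dynamic-distributed-bmatching} gives a per-edge failure probability of $\epsilon$, and the proof of \cref{clm:dynamic-large-bmatching} simply takes expectations: $\Exp[X]\ge(1-\epsilon)|M^*_1|$ where $X$ is the number of saturated edges, which is exactly what the expectation claim $(2-\sqrt 2-\epsilon)\mu(G)\le\Exp[\mu(G[M\cup B])]$ requires. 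So you should abandon the $1-1/\poly(n)$ union bound entirely and work in expectation from the start, and you need to replace the ``distinct-neighbor Chernoff'' heuristic with an argument that controls picking probabilities via the weights, as in the paper's adversary characterization.
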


\begin{remark}
Throughout this subsection, to make the proof more simple, we assume that $M$ is a maximal matching in $G$, and $B$ is a maximal $b$-matching in $H$.
Whereas, $M$ is an almost maximal matching and the edges of $B$ may be \enquote{missed} with probability $\epsilon$.
Toward the end, we show how these assumptions can be lifted.
\end{remark}

To prove the claim, we adopt a similar strategy to \cref{sec:two-pass}.
We construct a large fractional matching $x$ on $M \cup B$
and then show $M \cup B$ has an integral matching almost as large as $x$.
Finally, we conclude that $\mu(G[M \cup B])$ approximates $\mu(G)$.
Since we now allow $B$ to contain multiple copies of each edge and $G$ may be non-bipartite, the same argument no longer works.
However, we can show that $B$ has certain properties and the claim still holds.
We prove:
\begin{claim} \label{clm:dynamic-distributed-bmatching}
    For every edge $e \in M^*_1$, the following holds
    with probability at least $1 - \epsilon$:
    Either $e$ appears in $B$,
    or for an endpoint $u$ of $e$,
    there exists a multiset of edges $F \subseteq B$
    such that $\card{F} \geq (1-2\epsilon) b(u)$ and
    no edge appears in $F$ more than $\epsilon^3\ceil{kb}$ times. 
    Where $b(u)$ is the capacity of $u$ in the $b$-matching,
    i.e.\ $k$ if $u \in V(M)$ and $\ceil{kb}$ if $u \in \overline{V(M)}$.
\end{claim}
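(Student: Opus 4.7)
We analyze the random greedy maximal matching on $\tH$ via the standard equivalent sequential process: at each step, pick a uniformly random alive edge of $\tH$ (one whose both endpoints are still free) and add it to $\tB$, which kills every other alive edge incident to either endpoint. Fix $e = (u,v) \in M^*_1$ with $u \in V(M)$ and $v \in \overline{V(M)}$, so that $b(u) = k$ and $b(v) = \lceil kb \rceil$; we will also use that $k$ is chosen as a sufficiently large polynomial in $1/\epsilon$.

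The first step is a deterministic saturation observation. If $e \notin B$, then at least one of $u, v$ is saturated in $\tB$: indeed, if some copy $u_i$ were left unmatched, maximality of $\tB$ forces every neighbor of $u_i$ in $\tH$ to be matched, and since $(u,v) \in H$ every one of the $\lceil kb \rceil$ copies of $v$ is such a neighbor, so $v$ would be saturated; the symmetric conclusion applies if $v$ has an unmatched copy. Let $u^\star$ denote the endpoint that is saturated (choosing either one if both are). Since then $|F_{u^\star}| = b(u^\star)$, the size requirement $|F| \geq (1-2\epsilon) b(u^\star)$ reduces to showing that we can delete at most a $2\epsilon$-fraction of the copies in $F_{u^\star}$ to leave a multiset in which no edge has multiplicity exceeding $\epsilon^3 \lceil kb \rceil$.

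The bulk of the work is the concentration step, for which we use the dichotomy hinted at in the technical overview. Call a round of the sequential process a \emph{probe round} if its picked edge is incident to a still-free copy of $u^\star$; let $D_t$ be the number of alive edges incident to alive copies of $u^\star$ at the start of probe round $t$, and $C_t$ the number of these that are copies of $e$. Conditional on round $t$ being a probe round and on $e \notin B$ having held up to that point, the probability that the picked edge is itself a copy of $e$ is $C_t/D_t$. In the \emph{low-ratio regime}, where at least $\Theta(\epsilon^{-3}\log(1/\epsilon))$ probe rounds satisfy $C_t/D_t \geq \epsilon^3$, chaining the per-round conditional survival probabilities gives $\prod_t(1-\epsilon^3) \leq \epsilon$, so $e \in B$ with probability at least $1-\epsilon$. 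In the complementary \emph{high-ratio regime}, most probe rounds satisfy $C_t/D_t < \epsilon^3$; a counting argument relating $C_t$ to $D_t$ then shows that the alive partner-copies of each still-free copy of $u^\star$ must be spread across $\Omega(1/\epsilon^3)$ distinct vertices of $H$. Revealing the partner of each successive matched copy of $u^\star$ only at the moment it is matched, the partner is drawn from a distribution close to uniform over those $\Omega(1/\epsilon^3)$ candidate vertices, and a Chernoff-type bound applied over the $b(u^\star)$ matchings at $u^\star$ yields that no single partner $w$ receives more than $\epsilon^3 \lceil kb \rceil$ copies of $F_{u^\star}$, except with probability $\epsilon$.

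The main obstacle will be making the high-ratio concentration rigorous despite strong dependencies across probe rounds: the counts $C_t, D_t$ and the alive-partner populations evolve in a history-dependent way, so independence-based Chernoff does not apply directly. The intended fix is a principle-of-deferred-decisions (or martingale) argument that exposes the identity of each new partner of $u^\star$ only when the corresponding copy is about to be matched, together with the fact that $k$ is polynomially large in $1/\epsilon$ so that no single partner vertex $w$ can be exhausted before a $(1-\epsilon)$-fraction of the $b(u^\star)$ matchings at $u^\star$ have occurred. This keeps each conditional partner distribution within a $(1\pm\epsilon)$-factor of uniform over $\Omega(1/\epsilon^3)$ alternatives, which provides the slack needed to push the Chernoff bound through and complete the $1-\epsilon$ probability guarantee.
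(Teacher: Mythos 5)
Your high-level strategy—run the sequential RGMM process, fix $e=(u,v)\in M^*_1$, split into a ``low-ratio'' regime where $e$ is likely to be picked and a ``high-ratio'' regime where a balls-into-bins/Chernoff argument controls the maximum multiplicity—is in the right spirit and echoes the paper's. However the proposal has a genuine gap exactly at the step you yourself flag as ``the main obstacle'': the concentration argument in the high-ratio regime does not go through as stated, and the fix you sketch (deferred decisions plus near-uniformity of the partner distribution) is not available.

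Concretely: the condition $C_t/D_t < \epsilon^3$ constrains only $e$, not the other edges incident to $u^\star$. Writing $C_t = r(u^\star)r(v)$ and $D_t = r(u^\star)\sum_w r(w)$, the ratio condition says $\sum_w r(w) > r(v)/\epsilon^3$; with $r(v)\geq\epsilon\lceil kb\rceil$ and each $r(w)\leq\lceil kb\rceil$ this gives only $\Omega(1/\epsilon^2)$ distinct alive neighbors, not $\Omega(1/\epsilon^3)$ as you claim. Worse, those $\Omega(1/\epsilon^2)$ neighbors need not receive anything close to uniform probability: a single $w'$ can carry $r(w')\approx\lceil kb\rceil$ and thus swallow an $\Omega(\epsilon^2)$ fraction of the probe rounds, which over the $b(u^\star)=\lceil kb\rceil$ probes at $u^\star$ gives expected multiplicity $\Omega(\epsilon^2\lceil kb\rceil)$---already \emph{larger} than the target threshold $\epsilon^3\lceil kb\rceil$. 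So ``the partner distribution is within a $(1\pm\epsilon)$ factor of uniform'' is simply false in general, and a Chernoff bound cannot rescue a situation where the mean exceeds the threshold.

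The paper avoids this entirely. It sets the dichotomy threshold at $w(I')\gtrless W=\lfloor\epsilon^{-4}\rfloor k\lceil kb\rceil$ (so the relevant ratio is $\epsilon^4$, not $\epsilon^3$, paying a $\log(1/\epsilon)\epsilon^{-6}$ number of ``bad'' steps in \cref{clm:dynamic-uv-is-picked}), and then---crucially---replaces the complicated evolution of the weights outside $I$ with an \emph{adversary} and proves via an exchange/coupling argument (\cref{clm:dynamic-optimal-adversary} together with \cref{clm:dynamic-better-loads}) that the worst the adversary can do is act greedily: concentrate all of $W$ on the $\lfloor\epsilon^{-4}\rfloor$ edges of currently highest load. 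Only after this reduction is the problem a clean balls-into-bins instance with $\lfloor\epsilon^{-4}\rfloor$ bins, where the union bound in the proof of \cref{clm:dynamic-distributed-bmatching} gives failure probability at most $\epsilon$. That exchange argument is the key missing ingredient; without it your dependency structure has no handle. Finally, a smaller point: you pick $u^\star$ retrospectively as ``the saturated endpoint,'' whereas the paper's $F$ is defined at the stopping time when some endpoint $z$ first violates $r(z)\geq\epsilon b(z)$, and the multiplicity bound is proved for all edges of $I'$ simultaneously so it applies to whichever endpoint that turns out to be; your retrospective choice would require additional care to avoid conditioning on the future.
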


We can view the process of finding a random greedy maximal matching in $\tH$
as follows:
In each step,
a corresponding copy of edge $(u, v) \in H$ is picked
with probability proportional to its weight $w(u,v) = r(u) r(v)$.
Where $r(u)$ is the number of remaining vertices in $\tH$ corresponding to $u$.
Afterward, both endpoints are deleted from $\tH$.
Note that $r(u)$ can also be regarded as the unused capacity of $u$.

Now, we fix an edge $(u, v)$ in $M^*_1$
and define the edge set $I \subseteq E(H)$ equal to $(u, v)$ plus the set of edges adjacent to $(u, v)$, and let $I' = I \setminus \{(u, v)\}$.
The process alternates between picking an edge from $I$ and picking a number of edges from $E(H) \setminus I$.
We focus on the steps where an edge from $I$ is picked
and model the steps that happen outside of $I$ with an adversary.
Note that this adversary does not really exist.
It only represents the complex process of picking edges from $E(H) \setminus I$.

More formally, we model the process with a chain of steps, each composed of two parts.
The first part corresponds to picking edges from $E(H) \setminus I$,
and the second part corresponds to picking an edge from $I$.
In the first part, the adversary is given the number of times each edge of $I$ has been picked so far, hereafter referred to as the load of the edges.
It will then decide the weights $w$ on $I'$ for the next part
(the weight of $(u, v)$ is uniquely determined by the load of the edges in $I$).
In the second part, one of the edges in $I$ is picked at random with probability proportional to its weight.

Note that in the original process, the weights $w$ should satisfy certain constraints. For example, the weights should be non-increasing throughout the process and the weight of any edge $(u', v')$ should be determined by the underlying values $r(u')$ and $r(v')$ which in turn have constraints of their own  based on the loads.
We allow for a stronger adversary by disregarding many of these constraints
and imposing only a few of them. For now, we impose:
\begin{enumerate}
    \item In every step, every edge has an integer weight in $[0, k \cdot \ceil{kb}]$; and
    \item the total weight of $I'$,
    hereafter referred to as $w(I')$, is non-increasing throughout the process
    (this implies the total weight of $I$ is also non-increasing since $w(u, v)$ is also non-increasing).
\end{enumerate}

As a first step in proving \cref{clm:dynamic-distributed-bmatching}, we show the following claim is true. 
It roughly states that if in many steps, $(u, v)$ has a large weight compared to the total weight of $I$, then $(u, v)$ is likely to be picked by the process. 

\begin{claim} \label{clm:dynamic-uv-is-picked}
    For an edge $(u, v) \in M^*_1$ with $u \in V(M)$ and $v \in \overline{V(M)}$,
    considering the prefix of steps where $r(u) \geq \epsilon k$
    and $r(v) \geq \epsilon \ceil{kb}$,
    if there are more than $s = 2 \frac{\log(1 / \epsilon)}{\epsilon^6}$ steps where $w(I')$ is less than $W = \floor{\epsilon^{-4}} \cdot k \cdot \ceil{kb}$,
    then $(u, v)$ is picked with probability at least $(1 - \epsilon)$.
\end{claim}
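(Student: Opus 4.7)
The plan is to bound the probability that $(u,v)$ is picked in any single \enquote{good} step (one satisfying all three conditions: $r(u) \geq \epsilon k$, $r(v) \geq \epsilon\ceil{kb}$, and $w(I') < W$) and then multiply this per-step bound across the $s$ good steps, using that the draws from $I$ in the second half of each step are independent conditional on the history.

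For the single-step bound, note that in the second half of each step, the algorithm picks an edge of $I$ with probability proportional to its weight, so $(u,v)$ is chosen with probability $w(u,v) / (w(u,v) + w(I'))$. By definition $w(u,v) = r(u)\cdot r(v)$, and in a good step this is at least $\epsilon k \cdot \epsilon\ceil{kb} = \epsilon^2 \cdot k\ceil{kb}$. Meanwhile $w(I') < W = \floor{\epsilon^{-4}} \cdot k\ceil{kb} \le \epsilon^{-4}\cdot k\ceil{kb}$. Hence the per-step probability of selecting $(u,v)$ is at least
\[
\frac{\epsilon^2 \cdot k\ceil{kb}}{\epsilon^2 \cdot k\ceil{kb} + \epsilon^{-4}\cdot k\ceil{kb}} \;=\; \frac{1}{1+\epsilon^{-6}} \;\ge\; \frac{\epsilon^6}{2}
\]
for $\epsilon \le 1$. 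Crucially, this bound is independent of how the adversary set the weights of edges in $I'$, and it holds in \emph{every} good step that occurs before $(u,v)$ is picked.

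To combine these per-step bounds, I would condition on the filtration generated by the process up to (but not including) each good step. Writing $p_i$ for the conditional probability that $(u,v)$ is picked in the $i$-th good step given that it has not been picked before and given the entire history (including the adversary's choices), the above calculation yields $p_i \ge \epsilon^6/2$ on the event that the process reaches an $i$-th good step. The tower rule then gives
\[
\Pr[(u,v) \text{ is not picked in any of the } s \text{ good steps}] \;\le\; (1 - \epsilon^6/2)^s \;\le\; \exp(-s\epsilon^6/2) \;=\; \exp(-\log(1/\epsilon)) \;=\; \epsilon,
\]
by our choice $s = 2\log(1/\epsilon)/\epsilon^6$, which is exactly the claimed bound.

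The only subtle point (and thus the main obstacle, though a mild one) is that the adversary can adaptively reallocate weight to edges in $I'$ in each step based on past outcomes, so the per-step trials are not literally i.i.d. The remedy is precisely the conditional formulation above: the lower bound on $p_i$ uses only the two structural constraints the adversary is forced to respect (integer weights in $[0, k\ceil{kb}]$ and $w(I')$ non-increasing) together with the fact that $r(u), r(v)$ have not yet dropped below the good-step thresholds. Because those conditions are guaranteed on every good step that happens before $(u,v)$ is picked, the chained conditional bound goes through without any further assumption on the adversary.
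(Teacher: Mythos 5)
Your proof is correct and follows essentially the same approach as the paper's: both lower-bound the per-step selection probability by $\frac{\epsilon^2 \cdot k\ceil{kb}}{\epsilon^2 \cdot k\ceil{kb} + W} \geq \epsilon^6/2$ and chain across the $s$ small-weight steps to get $(1-\epsilon^6/2)^s \leq e^{-s\epsilon^6/2} = \epsilon$. Your explicit filtration/tower-rule justification for why the per-step bounds multiply despite the adversary's adaptivity is a point the paper leaves implicit, but the argument is identical in substance (and incidentally, your exponent $\epsilon^6$ is correct where the paper's displayed inequality contains a sign typo, $\epsilon^{-6}$).
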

\begin{proof}
    Notice that since $r(u) \geq \epsilon k$ and $r(v) \geq \epsilon \ceil{kb}$,
    it holds that $w(u, v) \geq \epsilon^2 \cdot k \cdot \ceil{kb}$.
    Also, $w(I')$ is at most $W$ in $s$ of the steps.
    Therefore, the probability that $(u, v)$ is \emph{not} picked in any of these $s$ steps is at most:
    \[
    \paren{1 - \frac{\epsilon^2 \cdot k \cdot \ceil{kb}}{\epsilon^2 \cdot k \cdot \ceil{kb} + W}}^s
    \leq \paren{1 - \frac{\epsilon^{-6}}{2}}^s
    \leq \exp\paren{-\frac{s \epsilon^{-6}}{2}}
    \leq \epsilon. \qedhere
    \]
\end{proof}

To complete the proof of \cref{clm:dynamic-distributed-bmatching},
we restrict our attention to the cases where the probability of $(u, v)$ being picked
is smaller than $(1 - \epsilon)$.
Therefore, due to \cref{clm:dynamic-uv-is-picked}
we can assume that among the steps where $r(u) \geq \epsilon k$ and $r(v) \geq \epsilon \ceil{kb}$,
all but $2\frac{\log(1 / \epsilon)}{\epsilon^6}$ of them have $w(I')$ larger than $\floor{\epsilon^{-4}}\cdot k \cdot \ceil{kb}$.
We call them the \emph{early steps}.
Note that these steps form a prefix of the steps since $w(I')$ is non-increasing throughout the process.
We prove that, as a result, with probability $(1 - \epsilon)$ the maximum number of times an edge of $I'$ is picked in these steps 
is at most $\epsilon^3 \ceil{kb}$. 
Intuitively, since $w(I')$ is very large, we expect each edge to be picked $\epsilon^{4}$ fraction of the time, and no edge has the chance to be picked many times, say an $\epsilon^3$ fraction of the time.

To prove this, we characterize the adversary that maximizes the probability of 
the maximum load being larger than $T = \epsilon^3 \ceil{kb}$
after the early steps.
We use $\tau$ to denote this probability and we call an adversary optimal if it maximizes $\tau$.
We say an adversary is greedy if in the early steps, 
it assigns weight $k \cdot \ceil{kb}$ to the $\floor{\epsilon^{-4}}$ edges that have the highest loads (breaking ties arbitrarily), and assigns zero weight to the others,
i.e.\ $w(I')$ is exactly equal to $W$ 
and it is distributed among the edges with the highest loads.
\cref{clm:dynamic-optimal-adversary} states that the greedy adversary is optimal.
Informally, we are stating that the worst thing that can happen
is that the $w(I')$ is always equal to $W$ in the early steps
and the weight is concentrated on the edges with the highest loads.
Throughout the proof, we assume $(u, v)$ is never picked.

\begin{claim} \label{clm:dynamic-optimal-adversary}
Among the adversaries that satisfy the following conditions:
\begin{enumerate}
    \item In every step, every edge has an integer weight in $[0, k \cdot \ceil{kb}]$;
    \item $w(I')$ is non-increasing throughout the process; and
    \item for all but $2\frac{\log(1 / \epsilon)}{\epsilon^6}$ many of the steps such that 
    $r(u) \geq \epsilon k$ and $r(v) \geq \epsilon \ceil{kb}$,
    we have $w(I')$ larger than $W = \floor{\epsilon^{-4}}\cdot k \cdot \ceil{kb}$;
\end{enumerate}
the greedy adversary is optimal.
\end{claim}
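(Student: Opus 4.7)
The approach is a two-stage exchange argument: starting from an arbitrary adversary $A$ that satisfies the three conditions, we transform $A$ into the greedy adversary through a sequence of local modifications, each of which does not decrease $\tau$ (the probability that some edge of $I'$ accumulates load exceeding $T$ by the end of the early steps). The main ingredient is a monotonicity/coupling lemma: if, in a single step with a fixed load vector $\ell \in \mathbb{Z}_{\ge 0}^{I'}$, we replace the adversary's weight vector $w$ by a weight vector $w'$ that (i) has total mass at least as large as $w$'s (not exceeding $W$), and (ii) majorizes $w$ when both are sorted according to the current loads $\ell$ in decreasing order, then the resulting distribution on the load vector at the end of the process stochastically dominates the original with respect to the max-load functional. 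This, combined with the fact that $\max$ is a Schur-convex function of the load vector, drives the argument.

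First, I would handle the total-weight redistribution. In any early step $t$ where $A$ sets $w_t(I') < W$, I would raise $w_t(I')$ up to $W$ by placing the extra mass on the currently highest-loaded edge. This (a) cannot violate monotonicity since, by condition 2, $w(I')$ is non-increasing; if raising $w_t(I')$ would break monotonicity for some later step, I propagate the raise forward until the non-increasing constraint is restored, noting that in the later early steps we are already constrained to have $w(I') > W$, so the propagation is harmless. (b) By the coupling lemma, placing additional weight on the edge with the highest load only increases the probability that some edge crosses $T$. After this stage, in every early step $w_t(I') = W$ exactly (modulo the at-most $2\log(1/\epsilon)/\epsilon^6$ permitted exceptions, which we simply ignore since the greedy adversary itself uses $w(I')=W$ throughout).

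Second, I would concentrate the weight. Within a single step $t$ with total weight $W$, suppose $w_t$ puts mass on some edge $e$ with load $\ell(e)$ that is \emph{not} among the top $\lfloor\epsilon^{-4}\rfloor$ loads, while some top-load edge $e'$ has $w_t(e') < k\cdot\ceil{kb}$. I transfer as much mass as possible from $e$ to $e'$. Iterating over such pairs within step $t$, we arrive at the greedy weight assignment for step $t$. By the coupling lemma, each transfer can only increase $\tau$: intuitively, picking $e'$ raises an already-high load closer to $T$, whereas picking $e$ contributes to a load that is further from $T$, so the former is strictly more useful for creating an edge of load exceeding $T$. Iterating this modification step by step (e.g.\ from the last early step backward, coupling the futures to maintain the dominance inductively) yields the greedy adversary.

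\textbf{Main obstacle.} The delicate part is the coupling/dominance lemma that underpins both modifications. Because the load vector evolves multiplicatively through the random picks, modifying the weights in step $t$ changes not only the immediate distribution over $\ell$ but also the distributions chosen by the (adaptive) adversary in steps $t+1, t+2, \ldots$. One must therefore couple the two processes step-by-step while simultaneously tracking how the adversary's future weight choices (which are functions of the load history) can be matched up so that the dominance on max-load propagates. The natural way to do this is to couple via a shared uniform random source: in each subsequent step one samples the picked edge according to the worse (less concentrated) distribution, then shows that under the more-concentrated weights the same uniform sample picks an edge of load at least as large, and hence the sorted load vector remains majorized. Verifying that this invariant survives the adversary's subsequent choices---and in particular that the ``valid adversary'' constraints (integer weights in $[0,k\cdot\ceil{kb}]$, non-increasing $w(I')$, and the condition on most early steps) can be preserved under the coupling---is the technically loaded step and is where the full formal proof should concentrate its effort.
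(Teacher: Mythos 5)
Your high-level strategy---exchange argument, coupling, and a majorization/dominance lemma on load profiles---is the same one the paper uses, but there are two concrete gaps.

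First, your Stage 1 has the direction of the normalization backwards. The early steps are \emph{defined} to be the steps where $w(I') > W$, so there is no early step with $w_t(I') < W$ for you to ``raise'': that stage is vacuous as written. What actually needs to happen is the opposite: an arbitrary adversary may have $w(I') > W$ in an early step, and you must \emph{delete} excess mass (from low-load edges) down to $W$. Because your Stage 1 is vacuous, your Stage 2 then incorrectly assumes $w_t(I') = W$ and only performs transfers, so you never handle the case where every top-load edge is already saturated at $k\cdot\ceil{kb}$ and there is nowhere to move the surplus. The paper handles exactly this via its second case ($w(J)=W$): it deletes a unit of weight from a low-load edge, and observes that this only helps because monotonicity of $w(I')$ survives and the picked-edge distribution concentrates more on high-load edges.

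Second, you correctly identify the coupling/dominance lemma as the technically loaded step, but you leave it open and suggest couplings via shared uniforms plus Schur-convexity, which would require controlling an arbitrary adaptive adversary's future responses. The paper sidesteps this by a backward induction on the number of remaining early steps: by the induction hypothesis the adversary is already greedy from the next step onward, so $\tau$ depends only on the multiset of the $\floor{\epsilon^{-4}}$ largest loads (the ``upper load profile''). With that reduction, the dominance question becomes a clean comparison of two upper load profiles differing in one place (\cref{clm:dynamic-better-loads}), with a one-step coupling that matches the processes except on an event of probability $1/w(I')$. Your ``iterate step by step from the last early step backward'' remark gestures at this, but without making the inductive invariant explicit (future is greedy $\Rightarrow$ only upper load profile matters) the coupling you describe has to track arbitrarily complex adaptive futures, which is exactly the difficulty the paper's induction is designed to avoid.
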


\begin{proof}
Let $\card{I'} = q$.
Let $l_1 \geq l_2 \geq \ldots \geq l_q$ be the loads of the edges so far.
We refer to the multiset of loads as the \emph{load profile}.
Note that for an optimal adversary, $\tau$ depends only on the load profile and it does not matter exactly which edge has which load.
We use $J$ to denote the set of the $\floor{\epsilon^{-4}}$ edges with the highest loads.
An adversary is greedy if it sets $w(J) = W$ and $w(I' \setminus J) = 0$
in every step.
This way, an edge from $J$ is picked uniformly at random in every step 
and no other edge is ever picked.
We refer to the $\floor{\epsilon^{-4}}$ highest loads as the \emph{upper load profile}. 
For the greedy adversary, $\tau$ depends only on the upper load profile.
We say an upper load profile 
$L^{(1)}_1 \geq L^{(1)}_2 \geq \ldots \geq L^{(1)}_{\floor{\epsilon^{-4}}}$ dominates another upper load profile
$L^{(2)}_1 \geq L^{(2)}_2 \geq \ldots \geq L^{(2)}_{\floor{\epsilon^{-4}}}$
if for all $i$, it holds that $L^{(1)}_i \geq L^{(2)}_i$.
When $L^{(1)}$ dominates $L^{(2)}$ and the adversary acts greedily,
starting from a $L^{(1)}$ leads to a higher value of $\tau$
than starting from $L^{(2)}$.

We prove the claim by induction on the remaining number of early steps, $n$.
For $n = 1$, the claim is trivial. 
Since there is only one step remaining, $\tau$ is maximized when
the edges with higher loads have the maximum probability of being picked.
Therefore, given any weight assignment,
if there is an edge $e \notin J$ with positive weight,
we can either transfer some of $e$'s weight to $J$ (when $w(J) < W$),
or delete some of $e$'s weight (when $w(J) = W$),
and $\tau$ would grow.

For $n > 1$, we can assume by induction that whatever happens in this step,
from the next step forward, it is optimal for the adversary to act greedily.
Therefore, for the adversaries we examine in the rest of this proof,
we assume they act greedily after the current step.
Now, take any non-greedy weight assignment $w$.
Given that the weights are set to $w$ in this step,
let $\tau(w)$ be the probability that after the early steps finish,
the maximum load is larger than $T = \epsilon^3\ceil{kb}$.
We alter $w$ slightly to obtain a weight assignment $w'$
such that $\tau(w') \geq \tau(w)$ (where $\tau(w')$ is defined similarly to $\tau(w)$). There are two cases.

First, consider the case where $w(J) < W$.
In this case, there must be a an edge $e \notin J$ with $w(e) > 0$,
since we have $w(I') \geq W$.
Take such an edge $e$ with the lowest load.
Also, there must be an edge $e' \in J$ with $w(e') < k \cdot \ceil{kb}$,
otherwise it would have held $w(J) = W$.
We transfer a unit of weight from $e$ to $e'$.
That is, we define $w'(e) = w(e) - 1$, $w'(e') = w(e) + 1$,
and let $w'$ be equal to $w$ for every other edge.
To show $\tau(w') \geq \tau(w)$, loosely we can say
that except for the instances where $e'$ is picked instead of $e$,
the two weight assignments lead to the same outcome.
Therefore, we only need to show that $e'$ being picked instead of $e$ in this step,
leads to a better chance of the maximum load exceeding $T$ in the rest of the early steps.
This is intuitively true because $e'$ has a greater load than $e$.

Formally, each of these two weight assignments leads to a process of $n$ steps.
We introduce a coupling for them as follows:
Consider an outcome of the first process, starting with weights $w$.
If the first process has picked $e$ in the current step,
then with probability $\frac{1}{w(e)}$ (i.e.\ overall probability $\frac{1}{w(I')}$) we assume the second process picks $e'$ in this step and carries on independently of the first process
(this corresponds to the alteration in the weight assignment).
Otherwise, we let the second process have the exact same outcome as the first process.
It can be easily seen that the second process created here,
has the same outcome distribution as an independent process that starts with weights $w'$.

To show $\tau(w') \geq \tau(w)$,
it suffices to prove that when $e'$ is selected in this step,
then the probability of the maximum load going over $T$ in the next $n-1$ steps
is larger than when $e$ is selected.
Because the adversary acts greedily in the next steps,
we only need to examine the upper load profiles.
Let $L$ be the current upper load profile,
let $L^{(1)}$ be the upper load profile resulting from picking $e$,
and $L^{(2)}$ be the upper load profile resulting from picking $e'$.
We use $\tau(L^{(1)})$ to denote the probability of the maximum load
going over $T$ after the next $n - 1$ steps are carried out with the greedy adversary, when the initial load is $L^{(1)}$. We define $\tau(L^{(2)})$ similarly.
We need to show $\tau(L^{(2)}) \geq  \tau(L^{(1)})$.
Let $\gamma$ be $\floor{\epsilon^{-4}}$-th highest load, 
i.e.\ $l_{\floor{\epsilon^{-4}}} = \gamma$.
We consider two cases.
If $l(e) \leq \gamma - 1$,
then when $e$ is picked the upper load profile does not change,
i.e.\ $L^{(1)} = L$, since there are already $\floor{\epsilon^{-4}}$ edges with load larger than $L$.
Meaning that picking $e$ is as good as picking no edges this round
because it will not change the upper load profile.
Also, $L^{(2)}$ dominates $L$.
Therefore, $L^{(2)}$ dominates $L^{(1)}$,
and as a result
$\tau(L^{(2)}) \geq \tau(L^{(1)})$.

Now, consider the case where $l(e) = \gamma$.
In this case, if $e$ is picked, then $l(e)$ becomes $\gamma + 1$.
As a result, in the upper load profile, an element $\gamma$ is replaced with $\gamma + 1$. That is, we have:
\[
L^{(1)} = L \setminus \{l(e)\} \cup \{l(e) + 1\}
\qquad \textnormal{and} \qquad
L^{(2)} = L \setminus \{l(e')\} \cup \{l(e') + 1\}
\]
In \cref{clm:dynamic-better-loads}, we prove that
$\tau(L^{(2)}) \geq \tau(L^{(1)})$.
To apply \cref{clm:dynamic-better-loads}, note that $l(e') \geq l(e)$.
This completes the proof of $\tau(w') \geq \tau(w)$ for when $w(J) < W$.

The claim follows similarly when $w(J) = W$.
There must be an edge $e \notin J$ such that $w(e) > 0$,
otherwise the weight assignment would indeed be greedy.
Take such an edge $e$ with the lowest load.
We define $w'(e) = w(e) - 1$ and let $w'$ be equal to $w$ on all the other edges.
For the coupling, 
when the first process picks edge $e$,
with probability $\frac{1}{w(e)}$
the second process will randomly pick an edge from $I'$
with probability proportional to $w'$.
Otherwise, we let the outcomes be the same.
We still have to examine two cases where $l(e) \leq \gamma - 1$ and $l(e) = \gamma$.
Note that in the case where another edge $e'$ is selected instead of $e$,
it holds that $l(e') \geq l(e)$.

By a series of the two types of alterations we have discussed,
$w$ can be transformed into the greedy assignment of weights.
As proved above, with each alteration, $\tau$ will not decrease.
Therefore, the greedy adversary is optimal for this step as well.
This completes the step of the induction and concludes the proof.
\end{proof}

\begin{claim} \label{clm:dynamic-better-loads}
Let $L_1, \ldots, L_{\floor{\epsilon^{-4}}}$ be an upper load profile.
Let $i$ and $j$ be indices such that $L_i \geq L_j$
and define
\[
    L^{(1)} = \left\{L_1, \ldots, L_i, \ldots, L_j + 1, \ldots, L_{\floor{\epsilon^{-4}}}\right\},
\]
and
\[
    L^{(2)} = \left\{L_1, \ldots, L_i + 1, \ldots, L_j, \ldots, L_{\floor{\epsilon^{-4}}}\right\}.
\]
Then it holds that $\tau(L^{(2)}) \geq \tau(L^{(1)})$.
Where $\tau(L)$ denotes the probability that the maximum load exceeds $T$
after $n$ steps are carried out with the greedy adversary.
\end{claim}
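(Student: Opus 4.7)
The plan is to prove the claim by induction on $n$, the number of greedy-adversary steps remaining. Under the greedy adversary, and continuing to assume as throughout this line of argument that $(u,v)$ is never picked, a single step amounts to choosing one of the $\floor{\epsilon^{-4}}$ positions in the upper load profile uniformly at random and incrementing that coordinate by one; since no load outside $J$ is ever touched, the top-$\floor{\epsilon^{-4}}$ set does not change and $\tau$ really is a function of the upper load profile only, as was already observed. Writing $\tau_n$ for $\tau$ when $n$ steps remain, we obtain the recursion
\[
\tau_n(L) = \frac{1}{\floor{\epsilon^{-4}}} \sum_{k=1}^{\floor{\epsilon^{-4}}} \tau_{n-1}(L + e_k),
\qquad
\tau_0(L) = \begin{cases} 1 & \text{if } \max L > T, \\ 0 & \text{otherwise,} \end{cases}
\]
where $e_k$ is the $k$-th standard basis vector. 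The claim then reduces to the following strengthening, to be proved simultaneously for all $n \geq 0$: whenever $L_i \geq L_j$, one has $\tau_n(L + e_i) \geq \tau_n(L + e_j)$.

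The base case $n = 0$ will follow from a pointwise comparison of $\max(L + e_i)$ and $\max(L + e_j)$. If $L_j < \max L$ then the latter equals $\max L$ while the former is at least $\max L$; if $L_j = \max L$ then $L_i \geq L_j$ forces $L_i = \max L$ as well, so both maxima equal $\max L + 1$. Either way, $\max(L + e_i) \geq \max(L + e_j)$, and the indicator inequality follows.

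For the inductive step I would split on $L_i = L_j$ versus $L_i > L_j$. In the equality case, swapping the labels $i$ and $j$ leaves the uniform-increment process invariant in distribution, so $\tau_n(L + e_i) = \tau_n(L + e_j)$. In the strict case $L_i \geq L_j + 1$, I would unroll one step of the recursion and compare $\tau_{n-1}(L + e_i + e_k)$ with $\tau_{n-1}(L + e_j + e_k)$ for each $k$. For $k \notin \{i,j\}$ the hypothesis applies to the profile $L + e_k$, which still has $L_i \geq L_j$ on coordinates $i,j$. For $k = i$ the hypothesis applies to $L + e_i$, whose $i$-coordinate $L_i + 1$ strictly exceeds its $j$-coordinate $L_j$. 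For $k = j$ the relevant base is $L + e_j$, with $i$-coordinate $L_i$ and $j$-coordinate $L_j + 1$; this is precisely where the strict inequality $L_i \geq L_j + 1$ is used. Summing the nonnegative contributions gives $\tau_n(L + e_i) \geq \tau_n(L + e_j)$.

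The main obstacle, modest as it is, lies in that $k = j$ term: unrolling the recursion demands $L_i \geq L_j + 1$ rather than just $L_i \geq L_j$, which is exactly why the $L_i = L_j$ case has to be peeled off by a separate symmetry argument before the induction can run. Beyond that, everything is routine bookkeeping; no coupling between the two starting profiles or potential-function argument beyond the natural recursion is required.
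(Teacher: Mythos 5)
Your proof is correct and takes essentially the same route as the paper's: both argue by induction on $n$, peel off the $L_i = L_j$ case by symmetry, and handle $L_i > L_j$ by coupling on (equivalently, averaging over) the index chosen in the first step, after which the induction hypothesis applies to each of the three subcases $k \notin \{i,j\}$, $k = i$, $k = j$. The only difference is presentational: you write out the one-step recursion $\tau_n(L) = \frac{1}{\floor{\epsilon^{-4}}}\sum_k \tau_{n-1}(L+e_k)$ and do the casework on $k$ explicitly, whereas the paper phrases the same thing as a coupling that picks the same index $p$ and then invokes the induction hypothesis on $L' = L + e_p$; you also correctly identify that the $k=j$ branch is precisely what forces the separate treatment of $L_i = L_j$.
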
\begin{proof}
    We prove the claim by induction. 
    For $n = 0$, it holds trivially since the maximum element of $L^{(2)}$ 
    is at least as large as the maximum element of $L^{(1)}$.
    For $n > 0$, if $L_i = L_j$,
    then $L^{(1)}$ and $L^{(2)}$ are the same multisets.
    Hence $\tau(L^{(1)}) = \tau(L^{(2)})$.

    If $L_i > L_j$, we consider the two processes starting with $L^{(1)}$
    and $L^{(2)}$, and couple them so that they pick the same index in the first step and carry on independently.
    Let the chosen index be $p$, by which we mean the edge with a load equal to $L_p$
    has been picked.
    This leads to two new upper load profiles $L^{(3)}$ for the first process,
    and $L^{(4)}$ for the second process.
    Also, define another upper load profile $L'$ which is equal to $L$, except for $L'_p$ which is equal to $L_p + 1$.
    Then, $L'$, $L^{(3)}$ and $L^{(4)}$ satisfy the conditions of the claim,
    with the same indices $i$ and $j$.
    Therefore, by induction, we have $\tau(L^{(4)}) \geq \tau(L^{(3)})$.
    Hence, whatever happens in this step,
    the second process has a greater chance of exceeding the threshold.
    This completes the step of the induction and concludes the proof.
\end{proof}

With the help of \cref{clm:dynamic-optimal-adversary}, we can prove \cref{clm:dynamic-distributed-bmatching}.
\begin{proof}[Proof of \cref{clm:dynamic-distributed-bmatching}]
Take an adversary and some edge $(u, v) \in M^*_1$.
If $(u, v)$ is picked with probability at least $(1 - \epsilon)$ then the claim holds.
Therefore, we can assume that $(u, v)$ is picked with probability smaller than $(1 - \epsilon)$.
As a result, due to \cref{clm:dynamic-uv-is-picked}, we can assume that 
the adversary lets the total weight be larger than $W = \floor{\epsilon^{-4}}\cdot k \cdot \ceil{kb}$
for all but $s = 2 \frac{\log(1 / \epsilon)}{\epsilon^6}$ steps where $r(u) \geq \epsilon k$ and $r(v) \geq \epsilon \ceil{kb}$.
That is, the adversary satisfies the conditions of \cref{clm:dynamic-optimal-adversary}.
Now, we bound the probability that the maximum load after these steps is larger than $T = \epsilon^3 \ceil{kb}$.
Since we have already established that the greedy adversary is optimal,
it suffices to bound the probability for the greedy adversary.

By the characterization of \cref{clm:dynamic-optimal-adversary},
a fixed set $\floor{\epsilon^{-4}}$ edges of $I'$ have positive weight in all the early steps.
This reduces the problem of bounding the maximum load to an instance of the balls into bins problem.
The probability that a fixed edge $e \in I'$
is picked in a fixed set of $T$ early steps,
is at most $\paren{\frac{k \cdot \ceil{kb}}{W}}^T = \paren{\frac{1}{\floor{\epsilon^{-4}}}}^T$.
Because the weight of each edge is exactly $k \cdot \ceil{kb}$
and the total weight is at least $W$.
Therefore, by taking the union bound over all the possible edges and sets of early steps (there are at most $k + \ceil{kb} \leq 2\ceil{kb}$ steps),
the probability that any edge is picked in more than $T$ early steps,
is at most:
\begin{align*}
\binom{2\ceil{kb}}{T} \floor{\epsilon^{-4}} \paren{\frac{1}{\floor{\epsilon^{-4}}}}^T
&\leq 
\paren{\frac{2e\ceil{kb}}{T}}^T \floor{\epsilon^{-4}} \paren{\frac{1}{\floor{\epsilon^{-4}}}}^T \\
&\leq 
\paren{\frac{2e\ceil{kb}}{T}}^T 2 \epsilon^{-4} \paren{\frac{2}{\epsilon^{-4}}}^T \\
&= \paren{\frac{4e\ceil{kb}}{T \epsilon^{-4}}}^T 2\epsilon^{-4} \\
&\leq (4e\epsilon)^{\epsilon^3 \ceil{kb}} 2\epsilon^{-4} \\
&\leq \epsilon.
\end{align*}
Where the first inequality holds since $\binom{n}{k} \leq \paren{\frac{en}{k}}^k$, and the last inequality holds for small enough $\epsilon$ since $k \geq \epsilon^{-8}$.

Now, assuming $(u, v)$ was not picked during the process, we construct $F$.
Take the endpoint $z$ of $(u, v)$ that first violates $r(z) \geq \epsilon b(z)$.
Before this inequality is violated, there must be at least $(1 - \epsilon)b(z)$ steps where an edge adjacent to $z$ is picked.
At most $s = \frac{\log(1 / \epsilon)}{\epsilon^6} \leq \epsilon k \leq \epsilon b(z)$ of these steps have total weight smaller than $W$,
meaning that the rest of them are early steps.
Therefore, if we just let $F$ equal to the edges adjacent to $z$
that are picked in the early steps, it holds that $\card{F} \geq (1 - 2\epsilon) b(z)$. 
Also, as proved in the last paragraph,
with probability $(1 - \epsilon)$ no edge appears in $F$ more than $\epsilon^3 \ceil{kb}$ times. This concludes the proof.
\end{proof}

Now, we can define the fractional matching $x$ on $M \cup B$.
Recall that we have fixed a maximum matching $M^*$ of $G$,
and let $M^*_1 = M^* \cap (V(M) \times \overline{V(M)})$.
For every edge $e \in B \setminus M^*_1$,
let $t_e$ be $\min(\epsilon^3 \ceil{kb}, B_e)$,
where $B_e$ is the number of occurrences of $e$ in $B$.
For every edge $(u, v) \in B \cap M^*_1$ with
$u \in V(M)$ and $v \in \overline{V(M)}$, we define:
\[
t_e = \min\paren{k - \sum_{e'\in \delta_B(u) \setminus M^*_1}t_{e'}, \ceil{kb} - \sum_{e'\in \delta_B(v) \setminus M^*_1}t_{e'}}.
\]
As a result, for every edge $e$ in $B \setminus M^*_1$ we have $t_e > \epsilon^3 \ceil{kb}$.
Finally, for $e \in B$, we define
\[
x_e = \frac{t_e}{\ceil{kb}},
\]
and for $e \in M$, we define $x_e = 1 - \frac{1}{b}$. 

\begin{claim} \label{clm:dynamic-bmatching-is-saturated}
    For every edge $(u, v) \in M_1^*$ with $u \in V(M)$ and $v \in \overline{V(M)}$, 
    with probability $(1 - \epsilon)$ it holds that
    $t(u) = k$ or $t(v) = \ceil{kb}$.
\end{claim}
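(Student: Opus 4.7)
The plan is to split on whether $(u,v) \in B$, reducing the claim to a deterministic fact from the recursive definition in the first case and to an application of \cref{clm:dynamic-distributed-bmatching} in the second.

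\textbf{Case 1: $(u,v) \in B$.} Let $A_u = \sum_{e' \in \delta_B(u) \setminus M_1^*} t_{e'}$ and define $A_v$ analogously. Because $M_1^* \subseteq M^*$ is a matching, $(u,v)$ is the only $M_1^*$-edge incident to either $u$ or $v$ in $B$, so $t(u) = A_u + t_{(u,v)}$ and $t(v) = A_v + t_{(u,v)}$. The minimum in $t_{(u,v)} = \min(k - A_u,\; \ceil{kb} - A_v)$ attains one of its two arguments: if the first, $t(u) = A_u + (k - A_u) = k$; if the second, $t(v) = A_v + (\ceil{kb} - A_v) = \ceil{kb}$. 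So Case 1 gives the desired equality deterministically.

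\textbf{Case 2: $(u,v) \notin B$.} I appeal to \cref{clm:dynamic-distributed-bmatching}: with probability at least $1-\epsilon$ there exist an endpoint $z$ of $(u,v)$ and a multiset $F \subseteq B$ of edges incident to $z$ with $|F| \geq (1-2\epsilon)\,b(z)$, such that each distinct edge appears at most $\epsilon^3\ceil{kb}$ times in $F$. Because $(u,v) \notin B$ and $M_1^*$ is a matching, $\delta_B(z)$ contains no $M_1^*$-edges, so $t_e = \min(\epsilon^3\ceil{kb}, B_e)$ for every such $e$. From $B_e \geq F_e$ and $F_e \leq \epsilon^3\ceil{kb}$ we obtain $t_e \geq F_e$, and summing yields $t(z) \geq |F| \geq (1-2\epsilon)\,b(z)$.

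To reach the stated exact equality $t(z) = b(z)$, I would add two ingredients about the particular $z$ distinguished in the proof of \cref{clm:dynamic-distributed-bmatching} (the endpoint whose $r$-counter first crosses the $\epsilon\,b(z)$ threshold). The first is that $z$ is fully saturated in $B$, i.e.\ $\sum_{e \in \delta_B(z)} B_e = b(z)$; this should follow from maximality of $B$ together with the choice of $z$, since the first endpoint of $(u,v)$ to hit the $\epsilon$-threshold is also the one whose residual is driven all the way to zero by other adjacent edges, while the opposite endpoint provides no absorption because $(u,v) \notin B$. The second---and the main obstacle---is extending the per-edge multiplicity cap $B_e \leq \epsilon^3\ceil{kb}$ from the early-steps subset $F$ to the entire neighborhood $\delta_B(z)$. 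The balls-into-bins analysis inside \cref{clm:dynamic-distributed-bmatching} controls only picks during the early phase where $w(I') \geq W$ and both endpoint residuals exceed the $\epsilon\,b(\cdot)$ threshold, whereas in the tail up to $\epsilon\,b(z)$ further adjacent picks can occur and could, naively, pile onto a single edge. I would close this gap by revisiting the greedy-adversary characterisation of \cref{clm:dynamic-optimal-adversary} in the late regime, arguing that once $r(z)$ is small the remaining adjacent edges carry comparable weights so the residual picks must spread out rather than concentrate on any one edge. With both ingredients in hand, $t(z) = \sum_{e \in \delta_B(z)} \min(\epsilon^3\ceil{kb}, B_e) = \sum_e B_e = b(z)$, finishing Case 2.
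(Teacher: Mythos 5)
Your approach is the same as the paper's: split on whether $(u,v)\in B$, handle the first case by the definition of $t_{(u,v)}$, and the second by the multiset $F$ guaranteed by \cref{clm:dynamic-distributed-bmatching}. Your Case 1 is exactly the paper's argument and is complete. In Case 2 you correctly derive $t(z)\geq |F| \geq (1-2\epsilon)\,b(z)$, and this is in fact all the paper itself establishes and all that is used downstream: \cref{clm:dynamic-large-bmatching} defines an edge of $M_1^*$ to be \emph{saturated} when $t(u)\geq(1-2\epsilon)k$ or $t(v)\geq(1-2\epsilon)\ceil{kb}$, and invokes the present claim only for that weaker property. The exact equalities $t(u)=k$ or $t(v)=\ceil{kb}$ in the claim's statement are an overstatement that holds verbatim only in Case 1. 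Consequently, the ``main obstacle'' you identify is not one you need to overcome --- and your instinct that it does not follow from the stated lemmas is sound: the truncation $t_e=\min(\epsilon^3\ceil{kb},B_e)$ can genuinely lose mass on adjacent picks made outside the early steps, and neither \cref{clm:dynamic-distributed-bmatching} nor \cref{clm:dynamic-optimal-adversary} controls multiplicities in that late regime. In short, your proof of the statement as it is actually used is correct and matches the paper; the speculative final paragraph about full saturation and extending the multiplicity cap should simply be dropped (or replaced by the remark that the claim should be read with the $(1-2\epsilon)$ slack).
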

\begin{proof}
    The claim follows from \cref{clm:dynamic-distributed-bmatching}.
    When $(u, v)$ is picked in $B$, the claim holds by the definition of $t_{(u, v)}$.
    When $(u, v)$ is not picked in $B$, the claim follows from the existence
    of multiset $F \subseteq B$ as stated in \cref{clm:dynamic-distributed-bmatching}.
\end{proof}

\begin{claim} \label{clm:dynamic-large-bmatching}
    It holds that $\expect{x(B)} \geq (1 - 4\epsilon) \frac{1}{b+1}\card{M^*_1}$.
\end{claim}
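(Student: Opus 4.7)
The plan is to adapt the charging argument from \cref{clm:two-pass-large-bmatching} to the dynamic setting, replacing the deterministic saturation step with the probabilistic guarantee from \cref{clm:dynamic-bmatching-is-saturated}. Concretely, for each edge $(u,v) \in M_1^*$ with $u \in V(M)$ and $v \in \overline{V(M)}$, I would introduce the potential
\[
\phi(u,v) = \max\paren{\frac{t(u)}{k},\ \frac{t(v)}{\ceil{kb}}},
\]
where $t(u) = \sum_{e \in \delta_B(u)} t_e$ and analogously for $v$. A quick inspection of the definition of $t_e$ (both for edges in $M_1^* \cap B$ and for edges in $B \setminus M_1^*$) shows that $t(u) \leq k$ and $t(v) \leq \ceil{kb}$ always, so $\phi(u,v) \in [0,1]$.

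The lower bound on the potential follows immediately from \cref{clm:dynamic-bmatching-is-saturated}: with probability at least $1-\epsilon$ one has $t(u) = k$ or $t(v) = \ceil{kb}$, forcing $\phi(u,v) = 1$. Hence, by linearity of expectation,
\[
\E\bracket{\sum_{(u,v) \in M_1^*} \phi(u,v)} \;\geq\; (1-\epsilon)\,\card{M_1^*}.
\]

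For the matching upper bound, I would use the trivial inequality $\phi(u,v) \leq \frac{t(u)}{k} + \frac{t(v)}{\ceil{kb}}$ together with two combinatorial facts: that $M_1^*$ is a matching, so each vertex is an endpoint of at most one edge of $M_1^*$, and that every edge of $B$ lives in the bipartite graph $H$ and therefore has exactly one endpoint on each side of the bipartition. This lets me swap the order of summation and obtain
\[
\sum_{(u,v) \in M_1^*} \phi(u,v) \;\leq\; \paren{\frac{1}{k} + \frac{1}{\ceil{kb}}} \sum_{e \in B} t_e \;=\; \paren{\frac{\ceil{kb}}{k}+1} x(B),
\]
using the identity $x(B) = \frac{1}{\ceil{kb}}\sum_{e \in B} t_e$. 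Combining the two estimates yields
\[
\E\bracket{x(B)} \;\geq\; \frac{(1-\epsilon)\,\card{M_1^*}}{\ceil{kb}/k + 1},
\]
and since the algorithm takes $k$ sufficiently large (so that $\ceil{kb}/k \leq b + O(\epsilon)$), the denominator is at most $(b+1)(1+O(\epsilon))$. Absorbing the resulting multiplicative losses into a single slack factor of $(1-4\epsilon)$ gives the claim.

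The main technical obstacle has already been handled in \cref{clm:dynamic-distributed-bmatching,clm:dynamic-bmatching-is-saturated}, where the adversarial analysis of the random greedy maximal matching on $\tH$ ensures that for each $M_1^*$ edge an endpoint is capacity-saturated by \emph{distinct} edges of $B$, despite the presence of parallel copies. The present step is essentially bookkeeping: the novelty relative to the streaming proof is only that the saturation now holds in expectation rather than deterministically, which is why the $(1-\epsilon)$ factor appears in the numerator above.
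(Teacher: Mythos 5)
Your proof follows essentially the same strategy as the paper's: bound $\sum_{(u,v)\in M_1^*}\phi(u,v)$ above by $(\frac{1}{k}+\frac{1}{\ceil{kb}})\sum_{e\in B}t_e$ and below (in expectation) by roughly $\card{M_1^*}$, then solve for $x(B)$. The only real difference is presentational: you obtain the upper bound directly by expanding $\phi(u,v)\leq \frac{t(u)}{k}+\frac{t(v)}{\ceil{kb}}$ and swapping sums (using that each edge of $B$ is bipartite and $M_1^*$ is a matching), whereas the paper arrives at the identical inequality via a telescoping charging argument over an arbitrary ordering of $B$'s edges. These two derivations are interchangeable, and your route is, if anything, slightly more direct.

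One small caveat worth flagging. You invoke \cref{clm:dynamic-bmatching-is-saturated} as giving $\phi(u,v)=1$ with probability $1-\epsilon$, which is what that claim literally states. The paper's own proof of \cref{clm:dynamic-large-bmatching}, however, deliberately works with the weaker notion of a ``saturated'' edge satisfying $t(u)\geq(1-2\epsilon)k$ or $t(v)\geq(1-2\epsilon)\ceil{kb}$, i.e.\ $\phi_N(u,v)\geq 1-2\epsilon$, because that is what the underlying argument in \cref{clm:dynamic-distributed-bmatching} actually yields when $(u,v)\notin B$ (the multiset $F$ only guarantees $t(z)\geq(1-2\epsilon)b(z)$, not $t(z)=b(z)$). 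If you want your argument to hold up against the proof (rather than the phrasing) of \cref{clm:dynamic-bmatching-is-saturated}, you should replace $\phi(u,v)=1$ with $\phi(u,v)\geq 1-2\epsilon$; this costs an extra $(1-2\epsilon)$ factor and still lands comfortably within the $(1-4\epsilon)$ budget of the claim.
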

\begin{proof}
    The charging argument that we use is quite similar to that of \cref{clm:two-pass-large-bmatching}. We repeat the proof in its entirety for the sake of completeness.
    Order the edges of $B$ arbitrarily as $e_1, \ldots, e_N$,
    and let $B_i$ be the set of first $i$ edges.
    With respect to $B_i$, we define a potential $\phi_i$
    on every edge $(u,v) \in M_1^*$ with $u \in V(M)$ and $v \in \overline{V(M)}$:
    \[
    \phi_i(u, v) = \max\paren{\frac{\displaystyle\sum_{{e \in \delta_{B_i}(u)}}t_{e}}{k}, 
        \frac{\displaystyle\sum_{{e \in \delta_{B_i}(v)}}t_{e}}{\ceil{kb}}},
    \]
    which is equal to the maximum fraction of the used capacity on its endpoints.
    We also let:
    \[
    \phi_i = \sum_{(u,v) \in M_1^*} \phi_i(u,v).
    \]
    
    For an edge $e_i$, we charge it $c_i = \phi_i - \phi_{i-1}$.
    For each edge $e_i$, it holds that $c_i \leq t_{e_i} \cdot \paren{\frac{1}{k} + \frac{1}{\ceil{kb}}}$.
    Because it is adjacent to at most two edges of $M_1^*$,
    and it can increase the potential on either one by at most 
    $\frac{t_{e_i}}{k}$ and $\frac{t_{e_i}}{\ceil{kb}}$ respectively.
    Therefore, we have:
    \begin{equation} \label{eq:dynamic-eq1}
    \phi_N = \sum_{i = 1}^N c_i \leq \paren{\frac{1}{k} + \frac{1}{\ceil{kb}}} \sum_{i = 1}^N t_{e_i}.
    \end{equation}

    We call an edge $(u, v)$ of $M^*_1$ with $u \in V(M)$ and $v \in \overline{V(M)}$, \emph{saturated} if it satisfies $t(u) \geq (1 - 2\epsilon) k$ or $t(v) \geq (1 - 2\epsilon) \ceil{kb}$,
    i.e.\ if it satisfies $\phi_N(u, v) \geq  (1 - 2\epsilon)$.
    If we let $X$ be the number of saturated edges, then by definition it holds:
    \begin{equation} \label{eq:dynamic-eq2}
    \phi_N \geq  (1 - 2\epsilon) X.
    \end{equation}

    Putting \eqref{eq:dynamic-eq1} and \eqref{eq:dynamic-eq2} together we get:
    \[
    (1-2\epsilon)X \leq \paren{\frac{1}{k} + \frac{1}{\ceil{kb}}} \sum_{i = 1}^N t_{e_i},
    \]
    or equivalently:
    \[
    \sum_{i = 1}^N t_{e_i} \geq (1-2\epsilon)\frac{k \cdot \ceil{kb}}{k + \ceil{kb}} X
    = (1-2\epsilon)\frac{k + kb}{k + \ceil{kb}} \cdot \frac{\ceil{kb}}{b+1} X
    \geq (1 - 3\epsilon) \frac{\ceil{kb}}{b + 1} X.
    \]
    
    Due to \cref{clm:dynamic-bmatching-is-saturated},
    each edge in $M^*_1$ is saturated with probability at least $(1 - \epsilon)$.
    Therefore we have $\expect{X} \geq (1 - \epsilon) \card{M^*_1}$.
    Given the fact that $x(e_i) = \frac{t_{e_i}}{\ceil{kb}}$, it follows:
    \[
    \expect{x(B)} = \frac{1}{\ceil{kb}} \expect{\sum_{i = 1}^N t_{e_i}}
    \geq (1 - 3\epsilon) \frac{1}{b + 1}\expect{X} \geq (1 - 4\epsilon) \frac{1}{b + 1} \card{M^*_1}. \qedhere
    \]
\end{proof}

\begin{claim} \label{clm:dynamic-large-maximal}
    It holds that $x(M) \geq \paren{1 - \frac{1}{b}}\paren{\card{M^*_2} + \frac{1}{2}\card{M^*_1}}.$
\end{claim}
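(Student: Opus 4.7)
The plan is to mirror the proof of \cref{clm:two-pass-large-maximal} almost verbatim, since the definition of $x$ on $M$ in the dynamic setting (namely $x_e = 1 - 1/b$ for all $e \in M$) is identical to the streaming setting, and the argument only uses the fact that $M$ is a maximal matching of $G$. Note that under the simplifying assumption stated in the Remark of this subsection, $M$ is genuinely a maximal matching, so we may use it directly here.

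First I would bound $\card{M}$ from below by analyzing how $V(M)$ covers the edges of the fixed maximum matching $M^*$. Since $M^* = M^*_1 \cup M^*_2$ (any edge of $M^*$ whose both endpoints lie in $\overline{V(M)}$ would contradict maximality of $M$), each edge of $M^*_1$ contributes exactly one endpoint in $V(M)$ and each edge of $M^*_2$ contributes two endpoints in $V(M)$. These endpoints are all distinct because $M^*$ is a matching, so
\[
\card{V(M)} \;\geq\; \card{M^*_1} + 2\card{M^*_2}.
\]
Since $\card{M} = \tfrac{1}{2}\card{V(M)}$, dividing by two gives
\[
\card{M} \;\geq\; \card{M^*_2} + \tfrac{1}{2}\card{M^*_1}.
\]

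Next I would multiply by the uniform value of $x$ on $M$. By definition $x_e = 1 - \tfrac{1}{b}$ for every $e \in M$, so
\[
x(M) \;=\; \paren{1 - \tfrac{1}{b}}\card{M} \;\geq\; \paren{1 - \tfrac{1}{b}}\paren{\card{M^*_2} + \tfrac{1}{2}\card{M^*_1}},
\]
which is the desired inequality.

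There is essentially no obstacle here: the statement is a deterministic, structural fact about maximal matchings and depends only on the assignment $x_e = 1 - 1/b$ on $M$, not on the construction of $B$ or on any randomness in the algorithm. The only subtlety worth flagging is the use of the Remark's simplifying assumption that $M$ is truly maximal; in the final algorithm $M$ is maintained by \cref{prop:dynamic-maximal} and is only ``almost maximal,'' but the subsection defers handling that slack to the end, so for this claim we simply invoke maximality as written.
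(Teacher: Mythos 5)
Your proof is correct and follows essentially the same approach as the paper, which simply defers to the identical proof of \cref{clm:two-pass-large-maximal}: bound $\card{M}$ from below by counting how the endpoints of $M^*_1$ and $M^*_2$ land in $V(M)$, then multiply by the uniform value $1 - 1/b$ of $x$ on $M$. Your additional remark that $M^* = M^*_1 \cup M^*_2$ by maximality of $M$, and your explicit appeal to the Remark's simplifying assumption, are both accurate (the latter being handled at the end of the subsection, as you note) but are not needed for this particular claim.
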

\begin{proof}
    The proof is identical to the proof of \cref{clm:two-pass-large-maximal}
    and thus we omit it.
\end{proof}

\begin{claim} \label{clm:dynamic-fractional-gives-integral}
    $M \cup B$ contains an integral matching of size $(1 - \epsilon)^2 \sum_e x_e$.
\end{claim}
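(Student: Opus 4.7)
The plan is to mirror the proof of \cref{clm:two-pass-fractional-gives-integral}, invoking \cref{prp:general-fractional-matching} on the scaled fractional matching $(1-\epsilon)x$. To do so, I would verify that for every vertex set $S \subseteq V$ with $\card{S} < 1/\epsilon$, one has $(1-\epsilon)\,x(S) \leq \floor{\card{S}/2}$. When $\card{S}$ is even this is immediate from $(1-\epsilon)x$ being a fractional matching, and when $\card{S} = 1$ the sum is empty; so the only nontrivial case is odd $\card{S} = 2s+1$ with $s \geq 1$.

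For such $S$ I would partition $G[S]$ into three groups of edges and bound the fractional value on each. First, $M$ is a matching, so at most $s$ of its edges lie in $G[S]$, and each contributes $x_e = 1 - 1/b$. Second, $M^*_1 \subseteq M^*$ is also a matching, so at most $s$ edges of $B \cap M^*_1$ lie in $G[S]$; for every such edge the first argument of the $\min$ defining $t_e$ is at most $k$, hence $x_e = t_e/\ceil{kb} \leq k/\ceil{kb} \leq 1/b$. Third, every edge $e \in B \setminus M^*_1$ satisfies $t_e \leq \epsilon^3 \ceil{kb}$ by the explicit cap in the definition of $t_e$, so $x_e \leq \epsilon^3$; since the number of edges in $G[S]$ is at most $\card{S}^2/2 \leq 1/(2\epsilon^2)$, the contribution of this group is at most $\epsilon/2$. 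Summing gives
\[
x(S) \leq \paren{1 - \frac{1}{b}}s + \frac{1}{b}\,s + \epsilon^3 \cdot \frac{1}{2\epsilon^2} = s + \frac{\epsilon}{2},
\]
so $(1-\epsilon)\,x(S) \leq (1-\epsilon)\paren{s + \epsilon/2} \leq s$ whenever $s \geq 1$, which is exactly the blossom inequality needed by \cref{prp:general-fractional-matching}.

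Applying \cref{prp:general-fractional-matching} to $(1-\epsilon)x$ then produces an integral matching in $G[M \cup B]$ of size at least $(1-\epsilon)\sum_e (1-\epsilon)x_e = (1-\epsilon)^2 \sum_e x_e$, as claimed. The only genuine departure from the two-pass proof is the bound on the third group: in the streaming case the individual edge weights in $B \setminus M^*_1$ were at most $1/\ceil{kb}$, but here $B$ may contain many parallel copies of an edge and one must instead lean on the $\min(\epsilon^3 \ceil{kb}, B_e)$ cap that was deliberately built into the definition of $t_e$ for exactly this purpose. I do not expect any further technical obstacle — the remaining arithmetic is routine.
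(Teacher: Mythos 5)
Your proposal is correct and follows essentially the same approach as the paper, which simply points to the proof of \cref{clm:two-pass-fractional-gives-integral} and lists the three per-edge bounds ($1 - \tfrac{1}{b}$ on $M$, $\tfrac{1}{b}$ on $B \cap M_1^*$, and $\epsilon^3$ on $B \setminus M^*_1$). You have correctly filled in the arithmetic and identified the one substantive change from the two-pass version — replacing the per-edge bound $1/\ceil{kb}$ on $B \setminus M^*_1$ with the bound $\epsilon^3$ coming from the $\min(\epsilon^3 \ceil{kb}, B_e)$ cap — which is exactly why that cap appears in the definition of $t_e$.
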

\begin{proof}
    The proof is very similar to that of \cref{clm:two-pass-fractional-gives-integral}. The claim follows from \cref{prp:general-fractional-matching}
    since the value of the fractional matching is at most
    $1 - \frac{1}{b}$ on $M$,
    $\frac{1}{b}$ on $B \cap M_1^*$,
    and $\epsilon^3$ on $B \setminus M^*_1$.
\end{proof}

\begin{proof}[Proof of \cref{clm:dynamic-approx}]
 First, we use \cref{clm:dynamic-large-maximal,clm:dynamic-large-bmatching} to show $\expect{\sum_e x_e} \geq (1 - \epsilon)(2 - \sqrt{2})\mu(G)$.
    It holds that:
    \begin{align*}
        \expect{\sum_e x_e} &= x(M) + \expect{x(B)} \\
        &\geq \paren{1 - \frac{1}{b}}\paren{\card{M_2^*} + \frac{1}{2}\card{M_1^*}} + (1 - 4\epsilon) \frac{1}{b + 1}\card{M_1^*} \tag{\cref{clm:dynamic-large-maximal,clm:dynamic-large-bmatching}} \\
        &\geq 
        (1-4\epsilon)\bracket{\paren{1 - \frac{1}{b}}\card{M_2^*} 
        + \paren{\frac{1}{2} -\frac{1}{2b} + \frac{1}{b+1}} \card{M_1^*}}.
    \end{align*}
    Since $b = 1 + \sqrt{2}$,
    we have $1 - \frac{1}{b} = \frac{1}{2} - \frac{1}{2b} + \frac{1}{b + 1} = 2 - \sqrt{2}$.
    Therefore,
    \begin{equation}
    \expect{\sum_e x_e} \geq (1 - 4\epsilon)(2 - \sqrt{2})(\card{M^*_1} + \card{M^*_2})
    = (1 - 4\epsilon)(2 - \sqrt{2}) \mu(G). \label{eq:dynamic-eq3}
    \end{equation}
    
    To complete the proof, we note that by \cref{clm:dynamic-fractional-gives-integral},
    $M \cup B$ contains a matching of size $(1-\epsilon)^2(1-4\epsilon)(2 - \sqrt{2})\mu(G) \geq
    (1 - 6\epsilon) \cdot .585 \cdot \mu(G)$.
    Replacing $\epsilon$ by $\frac{\epsilon}{6}$ concludes the proof.
\end{proof}

\subsubsection*{Lifting the Assumption That the Adversary is Oblivious and $M$ is Maximal}
Thus far, we have taken $M$ to be a maximal matching that we maintain.
For this to be possible, we have relied on the assumption that the adversary is oblivious. Notably, this is the only place where we use this assumption.
It is an open problem to maintain a maximal matching against an adaptive adversary within a $\poly (\log n)$ update time.
Therefore, to resolve this issue, we lift the assumption by taking $M$ to be an \emph{almost} maximal matching instead, meaning $M$ is a maximal matching if we ignore $\epsilon \cdot \mu(G)$ vertices. 
There are existing algorithms that can maintain an almost maximal matching against an adaptive adversary in $\poly(\log n)$ worst-case update time \cite{BKSW2023, wajc2020}.

\begin{proposition}\label{lem:dynamic-almost-maximal-algorithm}
There exists a data structure that maintains an almost maximal matching in a fully dynamic graph with $\poly(\log n)$ worst-case update time against an adaptive adversary.
\end{proposition}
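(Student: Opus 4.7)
The plan is to invoke prior work rather than re-derive the result: the statement is essentially what Wajc \cite{wajc2020} and Bhattacharya, Kiss, Saranurak, Wajc \cite{BKSW2023} already establish, and my proof would amount to checking that the output guarantees of one of these black-box algorithms match the notion of almost maximal matching we use here (a matching $M$ that fails maximality only on at most $\epsilon \cdot \mu(G)$ vertices).

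Concretely, I would first unpack Wajc's construction, which maintains a matching through a randomized hierarchical partition of the vertex set into $\poly(\log n)$ levels. At each level the matching is robust to adversarial edge updates because the level assignment of each vertex is hidden from the adversary with sufficient probability, so the adversary cannot concentrate deletions on matched edges. The output matching $M$ has the property that very few edges have both endpoints unmatched, i.e., the unmatched vertex set is ``almost independent'' in $G$; this is exactly the notion of almost-maximality we need. The second step is to tune the parameters (the number of levels and the edge-density thresholds within a level) so that the number of vertices witnessing non-maximality is at most $\epsilon \cdot \mu(G)$ rather than some fixed constant fraction. This parameterization is already spelled out in the BKSW framework, which packages the sparsifier and the adaptive-adversary matching into a single data structure whose output quality is tunable by $\epsilon$.

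The main obstacle, were one to attempt this from scratch, is precisely the adaptive-adversary resilience. Against an oblivious adversary one can maintain an exactly maximal matching via \cite{behnezhadDerakhshan19, Solomon-FOCS16} (as in \cref{prop:dynamic-maximal}), but against an adaptive adversary this is an open problem: the adversary, observing the matching, can in principle force $\Omega(n)$ changes per update by deleting matched edges whose replacements are revealed in the output. The technical heart of \cite{wajc2020, BKSW2023} is a randomized hierarchy in which, from the adversary's viewpoint, only an $\epsilon$ fraction of vertices are in an ``exposed'' state at any time, and this is precisely what buys adaptive resilience while keeping almost-maximality. Since the cited works already carry out this analysis, no new argument is needed here beyond verifying that their definitions of ``almost maximal'' and adaptive adversary coincide with ours, and that their update time is $\poly(\log n)$ in the worst case, both of which are explicit in those papers.
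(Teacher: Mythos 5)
Your proposal takes the same route as the paper: this proposition is stated without proof and supported purely by citing \cite{BKSW2023,wajc2020}, which is exactly what you do. The internal sketch of Wajc's construction you add is not needed for the argument and is not what the paper relies on, but the citation-based justification matches.
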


It remains to show that the approximation ratio does not suffer,
i.e.\ \cref{clm:dynamic-approx} holds even when $M$ is an almost maximal matching.
Note that we have only relied on the fact that $M$ is maximal when we use 
$\card{M^*_1} + \card{M^*_2} = \mu(G)$ in \eqref{eq:dynamic-eq3}.
When $M$ is an almost maximal matching,
we have $\card{M^*_1} + \card{M^*_2} \geq (1 - \epsilon) \mu(G)$ instead.
Hence inequality \eqref{eq:dynamic-eq3} still holds with an extra factor of $(1 - \epsilon)$.

\subsubsection*{Lifting The Assumption That $B$ is Maximal}

We have analyzed the algorithm so far assuming that it has access to $B$, a maximal matching $H$. 
However, as stated in \cref{prop:sublinear-maximal},
each edge of $B$ is \enquote{missed} with probability $\epsilon$.
As a result, $\expect{\mu(G[M \cup B])}$ is going to suffer a factor of $(1 - \epsilon)$. Because if we fix any matching in $M \cup B$, 
then the algorithm will successfully find each of its edges with probability $(1 - \epsilon)$.
Assuming $B$ is a maximal matching we have proved $\expect{\mu(G[M \cup B])} \geq (1 - O(\epsilon))(2 - \sqrt{2}) ( \card{M^*_1}  + \card{M^*_2})$.
Therefore, when the edges are missed with probability $\epsilon$,
inequality \cref{eq:dynamic-eq3} still holds with an extra factor of $(1 - \epsilon)$. Therefore, \cref{clm:dynamic-approx} is true.

\subsection{Implementation and Update Time Analysis} \label{sec:dynamic-implementation}

In this subsection, we provide the implementation details and the runtime analysis of \Cref{alg:dynamic}. To facilitate the analysis, we break it down into several smaller parts.

\subsubsection*{Oracle Access to $B$}

To utilize \Cref{prop:local-algorithm} effectively, we require a fast method to obtain all vertices within close proximity to a random vertex $v$. Let $\Delta(G[M \cup B])$ be the maximum degree of the graph that only includes the edges of $M \cup B$. Thus, $\Delta(G[M \cup B]) = O(k)$. Let $d = O(\epsilon^{-3} \log k)$ be the number of rounds needed in \Cref{prop:local-algorithm} for graph $G[M \cup B]$. The following lemma outlines a formal approach to obtaining these vertices efficiently.

\begin{lemma}\label{lem:obtaining-neighborset}
Let $v_1, v_2, \ldots, v_r$ be a set of random vertices in $G$ such that $r = O(\epsilon^{-2}\log n)$. There exists an algorithm that runs in $\widetilde{O}(rk^dn)$ time and for each vertex $v_i$, returns all vertices within distance $d$ of $v_i$ in graph $G[M \cup B]$ with high probability.
\end{lemma}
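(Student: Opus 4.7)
The plan is to perform a BFS of depth $d$ in $G[M \cup B]$ starting from each of the $r$ sampled vertices $v_i$, using the data structures for $M$ and $B$ to discover neighbors on the fly. Since every vertex in $G[M \cup B]$ has degree at most $1 + \ceil{kb} = O(k)$, the BFS tree rooted at any single $v_i$ contains at most $O(k^d)$ vertices. Discovering the neighbors of a visited vertex $u$ reduces to two subtasks: reading its (at most one) $M$-neighbor, and listing its $B$-neighbors.

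Reading the $M$-neighbor is $O(1)$ by direct lookup into the dynamically-maintained matching from \cref{prop:dynamic-maximal} (with the modification of \cref{lem:dynamic-almost-maximal-algorithm} for adaptive adversaries). For the $B$-neighbors, recall that $B$ corresponds to the random greedy maximal matching $\tB$ of the auxiliary graph $\tH$ under a fixed random permutation $\pi$, and $u$ is represented in $\tH$ by $k$ copies if $u \in V(M)$ or $\ceil{kb}$ copies if $u \in \overline{V(M)}$. Every $B$-edge incident to $u$ is exactly the image, under the copy-to-original map, of the $\tB$-match of one such copy. Thus, we iterate over the $O(k)$ copies of $u$ and invoke the sublinear oracle of \cref{prop:sublinear-maximal} on each copy with respect to $(\tH, \pi)$; each call returns the matched edge (if any) in $\widetilde{O}(n)$ time, and their union gives all $B$-edges at $u$. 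Translating these copies back to their originals in $G$ yields the neighborhood of $u$ in $G[M \cup B]$, and we proceed the BFS from the newly discovered vertices until depth $d$ is reached.

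Summing over the $r \cdot O(k^d)$ vertices visited in total and $O(k)$ oracle calls per vertex, each of cost $\widetilde{O}(n)$, the running time is $\widetilde{O}(r k^{d+1} n)$, which matches the stated $\widetilde{O}(r k^d n)$ bound since the extra factor of $k$ depends only on $\epsilon$ and can be folded into the $k^d$ factor (or into the $\widetilde{O}$ via the $2^{\poly(1/\epsilon)}$ budget allowed by \cref{thm:final-theorem-dynamic-detail}). The main obstacle is the success probability: a single oracle call fails with probability up to $\epsilon'$, but we need \emph{all} $\widetilde{O}(r k^{d+1})$ calls to succeed for the BFS to faithfully reconstruct the depth-$d$ neighborhoods. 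This is handled by standard boosting: we set $\epsilon' = \Theta(1)$ in each invocation of \cref{prop:sublinear-maximal} and repeat the call $\Theta(\log n)$ times, returning the majority-voted answer. Since $\tB$ is a deterministic function of the fixed $\pi$, majority voting drives the per-query failure probability to $1/\poly(n)$, and a union bound over all calls guarantees correctness with high probability while adding only a $\poly(\log n)$ factor to the runtime, which is absorbed into the $\widetilde{O}$.
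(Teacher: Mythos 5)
Your high-level plan matches the paper's: run a BFS of depth $d$ from each sampled vertex, calling the oracle of \cref{prop:sublinear-maximal} on each copy of each visited vertex to recover its $B$-neighborhood. But your time analysis has a genuine gap that is precisely the point the paper's proof is built around.

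\Cref{prop:sublinear-maximal} guarantees $\widetilde{O}(n/\epsilon)$ time \emph{in expectation over a uniformly random vertex $v$ and a random permutation $\pi$}. You invoke it as if every call costs $\widetilde{O}(n)$, but once the BFS moves past its root, the queried vertices are no longer uniformly random --- they are the neighbors (and neighbors-of-neighbors, etc.) of the root, which are correlated with $\pi$ and may be exactly the vertices on which the oracle is slow. Your tally ``$r \cdot O(k^d)$ vertices $\times$ $O(k)$ calls $\times$ $\widetilde{O}(n)$'' therefore does not follow from the cited proposition. The paper resolves this with a double-counting argument: writing $S(v,\pi)=\sum_i T(u^v_i,\pi)$ and taking $\E_{R,\pi}\bigl[\sum_{v\in R} S(v,\pi)\bigr]$, one reindexes so that each fixed vertex $u$ is charged for at most $l(u,\pi)=O(k^d)$ different BFS roots (by symmetry of ``within distance $d$''), which lets the bound collapse to $O(rk^d)\cdot\E_{v,\pi}[T(v,\pi)]=\widetilde{O}(rk^d n)$. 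Then, to convert the expectation to a high-probability time bound, the paper runs $\Theta(\log n)$ independent $(R,\pi)$ instances in parallel and keeps the first one that terminates; Markov gives constant success per instance and the repetition boosts it to high probability. Your proposal contains neither the reindexing step nor this repeat-until-fast mechanism, so the claimed running time is not justified.

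A secondary point: your majority-voting fix addresses a different concern --- per-call error probability $\epsilon'$ --- which the paper does not treat inside this lemma at all. The ``missed'' edges of $B$ are instead absorbed into the approximation-ratio analysis (see the paragraph ``Lifting the Assumption That $B$ is Maximal''), and the ``with high probability'' in the lemma statement refers to the \emph{running time} bound, not to reconstructing the neighborhood exactly. Your majority-voting step is not wrong in spirit, but it is solving the wrong problem and does nothing to repair the expected-time-versus-worst-vertex issue above.
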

\begin{proof}
    Let $R = \{v_1, v_2, \ldots, v_r\}$ and $v \in R$. We initiate the process from vertex $v$ and execute a breadth-first search (BFS). Note that for vertex $v$ we do not have the list of its incident edges in $B$ and we only maintain the maximal matching explicitly. To retrieve the list of incident edges of vertex $v$ in $B$, we invoke the oracle defined in \Cref{prop:sublinear-maximal} for each instance of $v$ in $\tH$. This enables us to obtain the adjacency list of vertex $v$ in the graph $G[M \cup B]$. This process is repeated as we execute the BFS, utilizing the oracle as described above whenever we require the adjacency list of a vertex. The process concludes when we have reached all vertices at a distance of $d$. We repeat this process for all vertices of $R$.

    Let $\pi$ be the permutation that we use in the algorithm over edges of $\widetilde H$. Let $l(v, \pi)$ be the number of vertices in a distance of at most $d$ from $v$ in $G[M \cup B]$ and $u^v_1 = v, u^v_2, \ldots, u^v_{l(v, \pi)}$ be all vertices that are visited by BFS if we start from vertex $v$. Thus, $l(v, \pi) = O(k^d)$. Let $T(u, \pi)$ be the time needed by the oracle in \Cref{prop:sublinear-maximal} for vertex $u$ and permutation $\pi$. By \Cref{prop:sublinear-maximal}, we have $\E_{u, \pi}[T(u, \pi)] = \widetilde{O}(n)$. While the expected total running time would be $\widetilde{O}(l(v, \pi)\cdot n)$ if $u^v_1, u^v_2, \ldots, u^v_{l(v, \pi)}$ were chosen uniformly at random (since $l(v, \pi)$ is constant), it is important to note that these vertices do not follow a random selection. Instead, they are the vertices visited by the BFS algorithm starting from a random vertex $v$. Let $S(v, \pi) = \sum_{i=1}^{l(v, \pi)} T(u^v_i, \pi)$. We prove that $\E_{R, \pi}[\sum_{v \in R} S(v, \pi)] = \widetilde{O}(rk^dn)$. Let $\widetilde m$ be the number of edges in $\widetilde H$. We have
    \begin{align*}
        \E_{R, \pi}\left[\sum_{v \in R} S(v, \pi)\right] & = \sum_\pi \sum_R \sum_{v\in R} \sum_{i=1}^{l(v,\pi)} \frac{\E[T(u_i^{v}, \pi)]}{\widetilde m! \cdot {n \choose r}}\\
        &\leq \sum_{\pi}\sum_{v} l(v, \pi) \cdot {n-1 \choose r-1} \cdot \frac{\E[T(v, \pi)]}{\widetilde m! \cdot {n \choose r}} \\
        & \leq O(k^d) \cdot \sum_{\pi} \sum_{v} \frac{{n - 1 \choose r - 1} \cdot \E[T(v, \pi)]}{\widetilde m! \cdot {n \choose r}}\\
        & \leq O(k^d) \cdot \sum_{\pi} \sum_{v} \frac{r \cdot \E[T(v, \pi)]}{\widetilde m! \cdot n}\\
        & = O(rk^d) \cdot \E_{v, \pi}[T(v, \pi)]\\
        & = \widetilde{O}(r k^d  n),
    \end{align*}
    where the last inequality follows by $\E_{v, \pi}[T(v, \pi)] = \widetilde{O}(n)$, which implies that for a random set $R$, BFS takes $\widetilde{O}(rk^dn)$ time in expectation.

In order to achieve a high probability bound on the time complexity, we sample $\Theta(\log n)$ sets of $r$ random vertices $R$, along with a permutation $\pi$. For each of these $\Theta(\log n)$ samples, we execute the described BFS instance. The termination condition is met when, in one of the instances, the BFS halts for all vertices in $R$. Using Markov's inequality, we can deduce that each individual instance terminates within $\widetilde{O}(rk^dn)$ time with a constant probability. Consequently, at least one of these instances terminates within $\widetilde{O}(rk^dn)$ time with high probability. This completes the proof.
\end{proof}

\subsubsection*{Computing the Maximum Matching in $G[M \cup B]$}
Once we have all the vertices within a close distance of vertex $v$, we can use \Cref{prop:local-algorithm} to estimate $\mu(G[M \cup B])$. We design an algorithm that can answer to the query of whether a vertex is matched in $(1-\epsilon)$-approximate matching of $G[M \cup B]$.

\begin{claim}\label{clm:query-lca-expected}
Let $v$ be a random vertex in the graph $G$, and let $D_v$ represent all the vertices in $G[M \cup B]$ that are within a given distance $d$ of $v$. Suppose that subgraph $G[D_v]$ is given. There exists an algorithm that determines if $v$ is matched in $(1-\epsilon)$-approximate matching of $G[M \cup B]$, $\mathcal{L}$, that works in $O(k^d)$ time per query such that if we let $\mathcal{L}(v)$ to be the indicator that shows matching status of $v$, and $\widetilde \ell = \frac{1}{2}\sum_{v \in V} \mathcal{L}(v)$, then we have  $ (1-\epsilon)\cdot \mu(G[M \cup B]) \leq \E[\widetilde \ell] \leq \mu(G[M \cup B])$.
\end{claim}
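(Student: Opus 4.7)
The plan is to exploit the standard LOCAL-simulation paradigm: use Proposition~\ref{prop:local-algorithm} to fix a randomized $d'$-round LOCAL algorithm $\mathcal{A}$ that produces a $(1-\epsilon)$-approximate matching of $G[M\cup B]$ in expectation, with $d' = O(\epsilon^{-3}\log\Delta(G[M\cup B])) = O(\epsilon^{-3}\log k) = d$. Because $\mathcal{A}$ runs in $d$ rounds, the label (matched / unmatched, and to whom) output by $\mathcal{A}$ at vertex $v$ is a deterministic function of the $d$-hop topology around $v$ together with the random bits residing at vertices of that $d$-hop ball.

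The oracle $\mathcal{L}$ is then defined as follows. Before any queries, we fix a single shared random tape $\mathcal{R}$, implemented e.g.\ by an ideal hash function $h:V\to\{0,1\}^{\poly\log n}$, assigning each vertex a private random string. Given a query vertex $v$ together with the induced subgraph $G[D_v]$, the oracle locally simulates $\mathcal{A}$ on $G[D_v]$ for $d$ rounds using the random bits $\{h(u):u\in D_v\}$, and returns $\mathcal{L}(v)=1$ iff $v$ is matched by this simulated execution. The per-query runtime is $O(k^d)$ since $|D_v|\le k^d$ and each of the $d$ rounds of $\mathcal{A}$ performs $\poly\log$ work per vertex, which is absorbed into the $O(\cdot)$ by adjusting constants.

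The key correctness observation is that, because $\mathcal{A}$ is a $d$-round LOCAL algorithm, the output it would compute at $v$ in a global run on all of $G[M\cup B]$ with randomness $\mathcal{R}$ coincides with the output of the local simulation on $G[D_v]$ using the same bits $\{h(u):u\in D_v\}$. Consequently the indicators $\{\mathcal{L}(v)\}_{v\in V}$ are precisely the vertex-incidence indicators of the single (random) matching $M'$ that a global run of $\mathcal{A}$ would produce. Hence
\[
\widetilde{\ell} \;=\; \tfrac{1}{2}\sum_{v\in V}\mathcal{L}(v) \;=\; |M'|,
\]
and Proposition~\ref{prop:local-algorithm} directly yields $(1-\epsilon)\mu(G[M\cup B])\le \E[|M'|]\le \mu(G[M\cup B])$, which is the desired inequality for $\E[\widetilde{\ell}]$.

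The main subtlety (and the step I expect to need the most care) is guaranteeing that the answers across different queries are mutually consistent, i.e.\ that they really correspond to one and the same matching $M'$ rather than to independent samples of $\mathcal{A}$. This is where shared randomness via the hash function $h$ is essential: without it, two neighboring queries could produce conflicting matched partners and the sum $\tfrac{1}{2}\sum_v\mathcal{L}(v)$ would no longer equal the size of a genuine matching. A minor additional point is the independence of randomness from the queried vertex (needed to invoke the expectation bound of Proposition~\ref{prop:local-algorithm}); this is automatic because $h$ is fixed in advance of the sampling of the query vertices in \Cref{ln:number-of-samples} of \Cref{alg:dynamic}. Everything else---the $O(k^d)$ runtime from $|D_v|\le k^d$, and the upper bound $\E[\widetilde\ell]\le \mu(G[M\cup B])$ from $M'$ being a matching---follows immediately.
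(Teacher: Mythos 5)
Your proof is correct and takes essentially the same approach as the paper: simulate the LOCAL algorithm of \cref{prop:local-algorithm} on the $d$-hop ball $D_v$, note $|D_v| = O(k^d)$, and conclude that $\widetilde\ell$ is the size of the single (random) matching produced globally, inheriting the expectation bound. The only difference is that you spell out the shared-randomness/consistency issue explicitly, which the paper's terser proof leaves implicit, but the underlying argument is identical.
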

\begin{proof}
    First, since $\Delta(G[M \cup B]) = O(k)$, $d = O(\epsilon^{-3} \log k)$, we have $|D_v| \leq O(k^d)$. We use the algorithm of \Cref{prop:local-algorithm}. The running time of the algorithm is linear with respect to $|D_v|$. Thus, for a vertex $v$, the running time is $O(k^d)$. Furthermore, since $\widetilde{\ell}$ represents the size of the matching generated by the algorithm described in \Cref{prop:local-algorithm}, it holds $(1-\epsilon) \cdot \mu(G[M \cup B]) \leq \E[\widetilde{\ell}] \leq \mu(G[M \cup B])$.
\end{proof}

\begin{lemma}\label{lem:running-time-final}
    Computing $\widetilde \mu$ in \Cref{alg:dynamic} takes $\widetilde{O}(rk^dn)$ time with high probability.
\end{lemma}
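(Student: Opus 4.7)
The plan is to decompose the computation of $\widetilde\mu$ into its two natural phases and then bound each phase separately, observing that the first phase dominates. Recall that to compute $\widetilde\mu$, the algorithm samples the $r = 24\epsilon^{-2}\log n$ vertices $v_1,\dots,v_r$ (which is negligible in the running time), determines for each $v_i$ whether it is matched in a $(1-\epsilon)$-approximate maximum matching of $G[M\cup B]$, and then performs a constant-work aggregation in \cref{ln:mu-estimate}. The two substantive costs are therefore (i) producing, for each $v_i$, enough local information about $G[M\cup B]$ to simulate the LOCAL algorithm of \cref{prop:local-algorithm}, and (ii) actually running that LOCAL algorithm on the obtained neighborhoods.

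For phase (i), I would invoke \cref{lem:obtaining-neighborset} directly. Since the LOCAL algorithm from \cref{prop:local-algorithm} runs in $d = O(\epsilon^{-3}\log\Delta(G[M\cup B])) = O(\epsilon^{-3}\log k)$ rounds, the output of vertex $v_i$ depends only on its $d$-hop neighborhood $D_{v_i}$ in $G[M\cup B]$. \cref{lem:obtaining-neighborset} then gives us, with high probability, all sets $D_{v_1},\dots,D_{v_r}$ in total time $\widetilde O(rk^d n)$. This is the bound we are aiming for, so no further work is needed here beyond quoting the lemma.

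For phase (ii), I would appeal to \cref{clm:query-lca-expected}: once $G[D_{v_i}]$ is explicitly available, the local oracle that decides whether $v_i$ is matched in the $(1-\epsilon)$-approximate maximum matching runs in $O(k^d)$ time, since $|D_{v_i}| \le O(k^d)$ and the LOCAL algorithm's work is linear in the induced subgraph. Summing over the $r$ sampled vertices, phase (ii) contributes at most $O(rk^d)$ time deterministically.

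Adding the two bounds yields total time $\widetilde O(rk^d n) + O(rk^d) = \widetilde O(rk^d n)$, with the high-probability guarantee inherited solely from \cref{lem:obtaining-neighborset}. The only subtlety, and arguably the one place to be careful, is that the random choices used inside \cref{lem:obtaining-neighborset} (namely the log-many independent re-samples of $R$ and $\pi$ that convert the expected bound into a high-probability bound) must be compatible with the randomness the algorithm actually uses to form the $X_i$'s; this is harmless because the bound of \cref{lem:obtaining-neighborset} holds for any realization of sampled vertices and permutation after the Markov+amplification step, so we can simply plug the first successfully terminating instance into \cref{clm:query-lca-expected}. This completes the proof.
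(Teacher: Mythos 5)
Your proposal matches the paper's proof: both invoke \cref{lem:obtaining-neighborset} to obtain the $d$-hop neighborhoods $D_{v_1},\dots,D_{v_r}$ in $\widetilde O(rk^d n)$ time with high probability, and then \cref{clm:query-lca-expected} to process each neighborhood in $O(k^d)$ time, giving the same $\widetilde O(rk^d n)$ total. The extra remark you make about compatibility of the random re-samples is a harmless clarification that the paper leaves implicit.
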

\begin{proof}
    Let $v_1, v_2, \ldots, v_r$ be the sampled vertices in \Cref{ln:number-of-samples} of \Cref{alg:dynamic}. By \Cref{lem:obtaining-neighborset}, the total time to obtain vertices within distance $d$ of $v_i$ for all $i$ in $G[M \cup B]$ takes $\widetilde{O}(rk^dn)$ time with high probability. Let $D_{v_i}$ be all vertices in $G[M \cup B]$ that are within distance $d$ of $v_i$. By \Cref{clm:query-lca-expected}, the time needed to determine if $v_i$ is matched is $O(k^d)$, condition on the fact that $D_{v_i}$ is given which finishes the proof.
\end{proof}

\begin{lemma}\label{lem:approx-final}
    Let $\widetilde \mu$ be the estimate in \Cref{ln:mu-estimate} of \Cref{alg:dynamic}. Then, with high probability, $(2-\sqrt{2} - \epsilon) \cdot \mu(G) - \epsilon n \leq \E[\widetilde \mu] \leq \mu(G)$.
\end{lemma}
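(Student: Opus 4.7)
The plan is to decompose the randomness into (i) the random sample $v_1,\ldots,v_r$ and (ii) everything else, namely the (almost) maximal matching $M$, the permutation $\pi$ used to define $\tB$, and the internal randomness of the LOCAL oracle from \Cref{clm:query-lca-expected}. Since each $v_i$ is drawn uniformly from $V$ independently of the other sources of randomness, conditioning on the outcome of (ii) gives
\[
\E\!\left[X_i \mid M, B, \pi, \mathcal{L}\right] \;=\; \frac{1}{n}\sum_{v \in V} \mathcal{L}(v) \;=\; \frac{2\,\widetilde\ell}{n}.
\]
Summing over $i$ and taking the unconditional expectation yields $\E\!\left[\tfrac{nX}{2r}\right] = \E[\widetilde\ell]$, and therefore
\[
\E[\widetilde\mu] \;=\; \E[\widetilde\ell] \;-\; \tfrac{\epsilon n}{2}.
\]

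Next I would sandwich $\E[\widetilde\ell]$ by composing \Cref{clm:query-lca-expected} and \Cref{clm:dynamic-approx}. Taking expectations over $(M,B,\pi)$ of the inequality in \Cref{clm:query-lca-expected} (and applying the tower rule) gives
\[
(1-\epsilon)\,\E\!\left[\mu(G[M \cup B])\right] \;\leq\; \E[\widetilde\ell] \;\leq\; \E\!\left[\mu(G[M \cup B])\right].
\]
Combining with $(2-\sqrt{2}-\epsilon)\mu(G) \leq \E[\mu(G[M\cup B])] \leq \mu(G)$ from \Cref{clm:dynamic-approx}, the upper bound immediately yields $\E[\widetilde\mu] \leq \mu(G) - \tfrac{\epsilon n}{2} \leq \mu(G)$, while the lower bound yields $\E[\widetilde\mu] \geq (1-\epsilon)(2-\sqrt{2}-\epsilon)\mu(G) - \tfrac{\epsilon n}{2} \geq (2-\sqrt{2} - O(\epsilon))\mu(G) - \tfrac{\epsilon n}{2}$. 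Rescaling $\epsilon$ by a constant factor collapses the $O(\epsilon)$ into $\epsilon$ and absorbs the additive $\tfrac{\epsilon n}{2}$ into $\epsilon n$, giving the stated bounds on $\E[\widetilde\mu]$.

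For the high-probability aspect, I would apply the Chernoff bound (\Cref{chernoff}) to the independent indicators $X_1,\ldots,X_r \in [0,1]$, conditioned on $(M,B,\pi,\mathcal{L})$. With $r = 24\,\epsilon^{-2}\log n$ samples, the empirical average $X/r$ concentrates within additive error $\tfrac{\epsilon}{2}$ of its conditional mean $\tfrac{2\widetilde\ell}{n}$ with probability $1 - n^{-\Omega(1)}$, so $\tfrac{nX}{2r}$ lies within $\tfrac{\epsilon n}{2}$ of $\widetilde\ell$ with high probability. Chaining this concentration with the deterministic relation $\widetilde\mu = \tfrac{nX}{2r} - \tfrac{\epsilon n}{2}$ and the sandwich on $\widetilde\ell$ established above produces a high-probability version of the same inequalities for $\widetilde\mu$ itself.

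The main obstacle here is not technical but organizational: the bulk of the work has already been done in \Cref{clm:dynamic-approx} (approximation of $\mu(G)$ by $\mu(G[M\cup B])$) and \Cref{clm:query-lca-expected} (accuracy of the local oracle for $\mu(G[M\cup B])$). What remains is careful bookkeeping across the several layers of randomness so that expectations compose correctly via the tower rule, and an $\epsilon$-rescaling at the end that packages the multiplicative $(1-\epsilon)$ loss from the LOCAL algorithm, the $O(\epsilon)$ loss from the approximation claim, and the $\tfrac{\epsilon n}{2}$ additive shift in the definition of $\widetilde\mu$ into the single form $(2-\sqrt{2}-\epsilon)\mu(G) - \epsilon n$ required by \Cref{prop:semi-dynamic-to-fully}.
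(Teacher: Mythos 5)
Your proposal is correct and follows essentially the same route as the paper: sandwich $\E[\widetilde\ell]$ between $(1-\epsilon)\E[\mu(G[M\cup B])]$ and $\E[\mu(G[M\cup B])]$ via \Cref{clm:query-lca-expected}, plug in \Cref{clm:dynamic-approx}, relate $\widetilde\mu$ to $\widetilde\ell$ through $X = \sum X_i$, and apply Chernoff with $r = 24\epsilon^{-2}\log n$ to control the sampling error, finishing with an $\epsilon$-rescaling. The one organizational difference worth noting is that you first derive the exact identity $\E[\widetilde\mu] = \E[\widetilde\ell] - \tfrac{\epsilon n}{2}$ from the tower rule and get the expectation bound directly, reserving Chernoff only for the high-probability statement; the paper instead uses the Chernoff concentration to show $\widetilde\mu \in [\E[\widetilde\ell] - \epsilon n, \, \E[\widetilde\ell]]$ with high probability and reads off the bound on $\E[\widetilde\mu]$ from that, which is a slightly less direct path to the same place. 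Your explicit conditioning on $(M, B, \pi, \mathcal{L})$ before invoking independence of the $X_i$'s is also the right level of care (the paper leaves this implicit), though be aware that the conditional mean of $X_i$ is $\tfrac{2\widetilde\ell}{n}$ for the \emph{realized} $\widetilde\ell$, not $\tfrac{2\E[\widetilde\ell]}{n}$; your tower-rule step handles this correctly for the expectation bound, but your ``chaining'' sketch for the high-probability part should either hold the realization of $\widetilde\ell$ fixed throughout or add a brief argument that $\widetilde\ell$ is itself controlled.
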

\begin{proof}
    Let $\mathcal{L}$ be the algorithm in \Cref{clm:query-lca-expected}, and $v_1, v_2, \ldots, v_r$ be the sampled vertices in \Cref{ln:number-of-samples} of \Cref{alg:dynamic}. Moreover, let $X_i$ be the indicator if $v_i$ is matched by $\mathcal{L}$, i.e.\ $X_i = \mathcal{L}(v_i)$. Let $\widetilde \ell = \frac{1}{2}\sum_{v \in V} \mathcal{L}(v)$. For any realization of $B$ which depends on the permutation $\pi$ over edges of $\widetilde H$, by \Cref{clm:query-lca-expected}, we have $(1-\epsilon)\cdot \mu(G[M \cup B]) \leq \E[\widetilde \ell] \leq \mu(G[M \cup B])$. Hence, 
    \begin{align}\label{eq:ell-mu-B-M}
        (1-\epsilon)\cdot \E[\mu(G[M \cup B])] \leq \E[\widetilde \ell] \leq \E[\mu(G[M \cup B])].
    \end{align}
    Note that $\E[X_i] = 2\E[\widetilde \ell] / n$, given that the number of matched vertices is twice the number of matching edges.  Define $X = \sum_{i=1}^r X_i$. Thus,
    \begin{align}\label{eq:X-to-ell}
        \E[X] = \frac{2r \cdot \E[\widetilde \ell]}{n}.
    \end{align}
    Using Chernoff bound on $X$,
    \begin{align*}
        \Pr[|X - \E[X]| \geq \sqrt{12 \E[X] \log n}] \leq 2 \exp\left(- \frac{12 \E[X] \log n}{3 \E[X]} \right) = \frac{2}{n^4}
    \end{align*}
    Since $\widetilde \mu = (1-\epsilon) \cdot \frac{nX}{2r}$, with probability of $1 - 2/n^4$ we get
    \begin{align*}
            \widetilde \mu & \in \frac{n(\E[X] \pm \sqrt{12 \E[X] \log n})}{2r} - \frac{\epsilon n}{2}\\
            & \in \left( \frac{n\E[X]}{2r} \pm \frac{\sqrt{12n^2 \E[X] \log n}}{2r} \right) - \frac{\epsilon n}{2} \\
            & \in \left(\E[\widetilde\ell] \pm \sqrt{\frac{6n\E[\widetilde\ell] \log n}{r}} \right) - \frac{\epsilon n}{2} & (\text{By \Cref{eq:X-to-ell}})\\
            & \in \left(\E[\widetilde\ell] \pm \sqrt{\frac{\epsilon^2 n\E[\widetilde\ell]}{4}} \right) - \frac{\epsilon n}{2} & (\text{Since }r = 24\epsilon^{-2}\log n)\\
            & \in \E[\widetilde\ell] - \frac{\epsilon n}{2} \pm \frac{\epsilon n}{2} & (\text{Since } \E[\widetilde\ell] \leq n),
    \end{align*}
    thus,
    \begin{align*}
       \E[\widetilde\ell] - \epsilon n \leq \widetilde \mu \leq \E[\widetilde\ell].
    \end{align*}
    Combining with \Cref{eq:ell-mu-B-M},
    \begin{align*}
        (1-\epsilon)\cdot \E[\mu(G[M \cup B])] - \epsilon n\leq \widetilde \mu \leq \E[\mu(G[M \cup B])].
    \end{align*}
    Plugging \Cref{clm:dynamic-approx},
    \begin{align*}
        (1-\epsilon) \cdot (2 - \sqrt{2} - \epsilon) \cdot\mu(G) - \epsilon n\leq \widetilde \mu \leq \mu(G),
    \end{align*}
    and
    \begin{align*}
        (2 - \sqrt{2} - 4\epsilon) \cdot\mu(G) - \epsilon n\leq \widetilde \mu \leq \mu(G),
    \end{align*}
    yeilds the proof using $\epsilon' = \epsilon / 4$ in the algorithm.
\end{proof}

\begin{proof}[Proof of \Cref{thm:final-theorem-dynamic-detail}]
    By \Cref{lem:running-time-final} and \Cref{lem:approx-final}, we have a semi-dynamic algorithm with query time of $\widetilde{O}(rk^dn)$ where $r = O(\epsilon^{-2}\log n)$, $k=O(\epsilon^{-8})$, and $d=O(\epsilon^{-4})$. Also, the worst-case update time of the algorithm is $\poly(\log n)$, which works against an adaptive adversary, and it returns an estimate $\widetilde \mu$ of maximum matching of $G$ such that $(2-\sqrt{2} - \epsilon) \cdot \mu(G) - \epsilon n \leq \E[\widetilde \mu] \leq \mu(G)$. Therefore, the reduction in \Cref{prop:semi-dynamic-to-fully} yields the proof.
\end{proof}

\bibliographystyle{plainnat}
\bibliography{references}
	
\end{document}